\endcsname\RequirePackage{pdftexcmds}\makeatletter\let\pdfstrcmp\pdf@strcmp\makeatother\fi
\keywords{conditional bisimilarity, reactive systems, up-to context, graph transformation}
\newcommand{\renewtheorem}[1]{%
  \expandafter\let\csname #1\endcsname\relax%
  \expandafter\let\csname c@#1\endcsname\relax%
  \expandafter\let\csname end#1\endcsname\relax%
  \newtheorem{#1}%
}
\theoremstyle{plain}
\theoremstyle{definition}
\crefname{thm}{Theorem}{Theorems}
\crefname{defi}{Definition}{Definitions}
\crefname{exa}{Example}{Examples}
\crefname{lem}{Lemma}{Lemmas}
\crefname{subsection}{Subsection}{Subsections}
\theoremstyle{thmC}
\crefname{lemC}{Lemma}{Lemmas}
\newenvironment{proofparts}{%
  \quad\par
  \begin{itemize}%
}{%
  \end{itemize}%
}
\newcommand{\proofPartNoNewline}[1]{\item \textbf{(#1):}}
\newcommand{\proofPart}[1]{\proofPartNoNewline{#1}\\}
\gdef\ExaEndSeen{0}
\newcommand\ExaEndHere{%
  \leavevmode\unskip\penalty9999 \hbox{}\nobreak\hfill\quad\hbox{$\lrcorner$}%
  \gdef\ExaEndSeen{1}%
}
\gdef\RemEndSeen{0}
\newcommand\RemEndHere{%
  \leavevmode\unskip\penalty9999 \hbox{}\nobreak\hfill\quad\hbox{$\lrcorner$}%
  \gdef\RemEndSeen{1}%
}
\def\defeq{:=}
\def\dprime{{\prime\prime}}
\def\notmodels{\mkern3mu\not\mkern-3mu\models} 
\DeclareMathSymbol{;}{\mathord}{operators}{"3B} 
\DeclareMathOperator{\dom}{dom}
\DeclareMathOperator{\codom}{codom}
\DeclareMathOperator{\Ro}{Ro}
\DeclareMathOperator{\Condrel}{Condrel}
\DeclareMathOperator{\condtrue}{true}
\DeclareMathOperator{\condfalse}{false}
\DeclareMathOperator{\id}{id}
\newcommand{\catname}[1]{\mathbf{#1}}
\def\ILC{I\mkern-1.5mu LC}
\def\obnull{0}
\def\Graphfin{Graph_{fin}}
\def\coincide{\discretionary{co}{incide}{co\kern.75pt incide}}
\def\mathFkern{\mkern-2mu}
\DeclareRobustCommand*{\xstep}[3]{\begingroup%
  \newtest{\leftF}[0]{\endswith{#1}{f}}
  \newtest{\leftPrime}[0]{\endswith{#1}{\prime} \OR \endswith{#1}{'} }
  \newtest{\leftSimple}[0]{\equal{#1}{f} \OR \equal{#1}{\hat f} \OR \equal{#1}{\hat{f}} \OR \equal{#1}{d} }%
  \newtest{\rightSimple}[0]{\equal{#2}{\mathcal A} \OR \equal{#2}{\mathcal{A}} \OR \equal{#2}{\mathcal B} \OR \equal{#2}{\mathcal{B}} \OR \equal{#2}{\mathcal{B}_i} \OR \equal{#2}{\mathcal B_i} }%
  %
  \xrightarrow{%
    #1%
    \ifthenelse{\leftF}{\mathFkern}{}%
    \ifthenelse{\leftPrime}{\mkern-4mu}{}%
    ,
    \ifthenelse{\leftSimple \AND \rightSimple}{\mkern2mu}{
    \ifthenelse{\leftSimple \OR \rightSimple}{\mkern5mu}{
    \mkern8mu}}
    #2%
  }_{#3}%
\endgroup}
\newcommand*{\cstep}[2]{\xstep{#1}{#2}{C}}
\newcommand*{\rstep}[2]{\xstep{#1}{#2}{R}}
\newcommand*{\ncxstep}[2]{\xrightarrow{#1}_{#2}}
\newcommand*{\nccstep}[1]{\ncxstep{#1}{C}}
\newcommand*{\ncrstep}[1]{\ncxstep{#1}{R}}
\def\simC{{\sim_C}}
\def\simR{{\sim_R}}
\newcommand*{\csimX}[1]{{\overset{\raisebox{-1.5pt}[2pt][0pt]{\scriptsize\ensuremath{\circ}}\mkern-2mu}{\sim}_{#1}}}
\def\csimC{\csimX{C}}
\def\csimR{\csimX{R}}
\def\csimtrue{\csimX{T}}
\def\csimenv{\csimX{E}}
\def\simidc{{\sim_{\id}}}
\newcommand*{\envstep}[1]{\stackrel{#1}{\leadsto}}
\newcommand\ensuretikz[1]{\tikzifinpicture{#1}{\tikz{#1}}}
\newcommand\gn{\ensuretikz{\node[gn] (n) at (0,0) {};}}
\tikzset{%
  gn/.style={circle,fill=black,inner sep=0pt, minimum size=6pt, prefix after command={\pgfextra{\tikzset{every label/.style={font=\footnotesize}}}}},
  roundbox/.style={rectangle, rounded corners=6pt},
  emptyroundbox/.style={rectangle, rounded corners=3pt, minimum width=0.5cm, minimum height=0.5cm},
  >=To,
  gedge/.style={->,>=latex},
  condtri/.style={draw,dart,dart tail angle=135,inner xsep=0.1em, inner ysep=0.2em,anchor=tip},
  fancydotted/.style={dash pattern=on 1.25pt off 1.75pt},
}
\tikzset{
  cospanintcommon/.style={draw=black!38,line width=0.5pt},
  cospanint/.style={->,cospanintcommon},
  cospanintmono/.style={cospanintcommon,>->},
  cospanarr/.style={->,line width=0.5pt}
}
\def\chanLabel{\mathit{ch}}
\def\relLabel{\mathit{rel}}
\def\unrelLabel{\mathit{unr}}
\def\noiseLabel{\mathit{noise}}
\def\faintborder{\draw[black!25, rounded corners=2pt, line width=0.2pt]
  ($(current bounding box.south west)+(-1pt,-1pt)$) rectangle
  ($(current bounding box.north east)+(1pt,0.5pt)$);}
\tikzset{
  mathchan/.style={baseline=-0.45ex,gn/.append style={minimum size=0.9ex},gedge/.append style={font=\footnotesize}}
}
\def\mathifnodessubscript{
  \tikz[mathchan,gn/.append style={minimum size=0.7ex}]{
    \node[gn] (l) at (0,0) {};
    \node[gn] (r) at (0.3,0) {};
    \faintborder
  }
}
\def\inlinechan#1{\begingroup%
  \setkeys{inlinechan}{#1}%
  \tikz[mathchan]{
    \ifundef{\ic@noise}{
      \ifundef{\ic@chan}{
        \pgfmathsetlengthmacro\nodedist{0.35cm}
      }{
        \pgfmathsetlengthmacro\nodedist{0.75cm}
    }}{
      \pgfmathsetlengthmacro\nodedist{0.97cm}
    }

    \node[gn] (l) at (0,0) {};
    \node[gn] (r) at (\nodedist,0) {};

    \ifundef{\ic@noise}{}{
      \draw[gedge] (l.-10) to[bend left=5] node[above,inner ysep=0pt]{\raisebox{1pt}{$\noiseLabel$}\kern4pt} (r.-170);
    }

    \ifundef{\ic@chan}{}{
      \draw[gedge] (l.-10) to node[above,inner ysep=0pt]{\raisebox{1pt}{$\ic@chan$}\kern4pt} (r.-170);
    }

    \ifundef{\ic@msgleft}{\def\ic@msgleft{0}}{}
    \ifundef{\ic@msgright}{\def\ic@msgright{0}}{}
    \ifcase\ic@msgleft
    \or
      \draw[gedge,overlay] (l) to[loop,out=140,in=-170,distance=0.3cm] node[left]{$m$\kern-2pt} (l);
      \node at (-14pt,2.5pt) {};
    \else
      \PackageError{inlinechan}{Unsupported number of messages on left node: \ic@msgleft}{}
    \fi

    \ifcase\ic@msgright
    \or
      \draw[gedge,overlay] (r) to[loop,out=40,in=-10,distance=0.3cm] node[right]{\kern-1pt$m$} (r);
      \node at (\nodedist+14.5pt,2.5pt) {};
    \or
      \draw[gedge,overlay] (r) to[loop,out=20,in=-15,distance=0.3cm,pos=0.7] node[right]{\kern-1pt$m$} (r);
      \draw[gedge,overlay] (r) to[loop,out=90,in=50,distance=0.3cm,pos=0.7] node[right]{\kern-1pt$m$} (r);
      \node at (\nodedist+14.5pt,5.5pt) {};
    \else
      \PackageError{inlinechan}{Unsupported number of messages on right node: \ic@msgright}{}
    \fi

    \faintborder
  }%
\endgroup}
\def\inlinenlg#1{\begingroup%
  \setkeys{inlinenlg}{#1}%
  \tikz[mathchan]{
    \node[gn] (n) at (0,0) {};
    \node at (0pt,-0.25pt) {};
    \node at (0pt,0.25pt) {};

    \ifundef{\in@left}{}{
      \draw[gedge,overlay] (n) to[loop,out=140,in=-170,distance=0.3cm] node[left]{$\in@left$\kern-2pt} (n);
      \node at (-14pt,2.5pt) {};
    }
    \ifundef{\in@right}{}{
      \draw[gedge,overlay] (n) to[loop,out=40,in=-10,distance=0.3cm] node[right]{\kern-1pt$\in@right$} (n);
      \node at (14.5pt,2.5pt) {};
    }
    \ifundef{\in@fullwidth}{\def\in@fullwidth{0}}{}
    \ifnum\in@fullwidth=1
      \node at (-14pt,2.5pt) {};
      \node at (14.5pt,2.5pt) {};
    \fi

    \faintborder
  }%
\endgroup}
\def\inlinerr#1{\begingroup%
  \setkeys{inlinerr}{#1}%
  \tikz[mathchan]{
    \node[gn] (n) at (0,0) {};
    \node at (0pt,-0.25pt) {};
    \node at (0pt,0.25pt) {};

    \ifundef{\inrr@a}{}{
      \draw[gedge,overlay] (n) to[loop,out=90,in=140,distance=0.3cm,pos=0.75] node[left]{\kern-1pt$a$} (n);
      \node at (-13pt,4.5pt) {};
    }
    \ifundef{\inrr@b}{}{
      \draw[gedge,overlay] (n) to[loop,out=165,in=-160,distance=0.3cm,pos=0.5] node[left]{\kern-1pt$b$} (n);
      \node at (-13pt,0.25pt) {};
    }
    \ifundef{\inrr@c}{}{
      \draw[gedge,overlay] (n) to[loop,out=40,in=-10,distance=0.3cm] node[right]{\kern-1pt$c$} (n);
      \node at (13.5pt,2.5pt) {};
    }
    \ifundef{\inrr@fullwidth}{\def\inrr@fullwidth{0}}{}
    \ifnum\inrr@fullwidth=1
      \node at (-14pt,2.5pt) {};
      \node at (14.5pt,2.5pt) {};
    \fi

    \faintborder
  }%
\endgroup}
\newcommand{\loopOnArrow}[1]{\raisebox{0pt}[\dimexpr\height-1.5pt\relax][0pt]{$#1$}}
\begin{document}

\title{Conditional Bisimilarity for Reactive Systems}

\author[M.~Hülsbusch]{Mathias Hülsbusch\rsuper{a}}
\author[B.~König]{Barbara König\rsuper{a}}
\author[S.~Küpper]{Sebastian Küpper\rsuper{b}}
\author[L.~Stoltenow]{Lars Stoltenow\rsuper{a}}

\address{Universität Duisburg-Essen, Germany}
\email{\{barbara\_koenig,lars.stoltenow\}@uni-due.de}

\address{FernUniversität in Hagen, Germany}
\email{sebastian.kuepper@fernuni-hagen.de}

\begin{abstract}
  Reactive systems \`a la Leifer and Milner, an abstract categorical
  framework for rewriting, provide a suitable framework for deriving
  bisimulation congruences. This is done by synthesizing interactions
  with the environment in order to obtain a compositional semantics.

  We enrich the notion of reactive systems by conditions on two
  levels: first, as in earlier work, we consider rules enriched with
  application conditions and second, we investigate the notion of
  conditional bisimilarity. Conditional bisimilarity allows us to say
  that two system states are bisimilar provided that the environment
  satisfies a given condition.

  We present several equivalent definitions of conditional
  bisimilarity, including one that is useful for concrete proofs and
  that employs an up-to-context technique, and we compare with related
  behavioural equivalences. We consider examples based on DPO graph
  rewriting, an instantiation of reactive systems.
\end{abstract}

\maketitle


\section{Introduction}

Behavioural equivalences, such as bisimilarity, relate system states
with the same behaviour. Here, we are in particular interested in
conditional bisimilarity, which allows us to say that two states $a,b$
are bisimilar provided that the environment satisfies a condition
$\mathcal{C}$. Work on such conditional bisimulations appears somewhat
scattered in the literature (see for instance~\cite{l:context-dependent-bisim,hl:symbolic-bisimulations,f:bisimulations-boolean-vectors,BKKS17}). They
also play a role in the setting of featured transition systems for
modelling software product lines~\cite{DBLP:conf/icse/CordyCPSHL12},
where the behaviour of many products is specified in a single
transition system. In this setting it is possible to state that two
states are bisimilar for certain products, but not for others.

We believe that conditional notions of behavioural equivalence are
worthy of further study. In practice it may easily happen that two
sub-systems are only ever used in restricted environments and it is
too much to ask that they behave equivalently under all possible
contexts. Furthermore, instead of giving a simple yes/no-answer,
bisimulation checks can answer in a more fine-grained way, specifying
conditions which ensure bisimilarity.

We state our results in a very general setting: reactive systems \`a
la Leifer and Milner~\cite{LM00}, a categorical abstract framework for
rewriting, which provides a suitable framework for deriving
bisimulation congruences. In particular, this framework allows to
synthesize labelled transitions from plain reaction rules, such that
the resulting bisimilarity is automatically a congruence. Intuitively,
the label is the minimal context that has to be borrowed from the
environment in order to trigger a reduction. (Transitions labelled
with such a minimal context will be called
representative steps in the sequel. They are related to the idem pushout steps of~\cite{LM00}.)  Here, we rely on the notion of saturated bisimilarity
introduced in~\cite{bkm:saturated} and we consider reactive system
rules with application conditions, generalizing~\cite{DBC-CRS}.

Important instances of reactive systems are process calculi
with contextualization, bigraphs~\cite{jm:bigraphs} and
double-pushout graph rewriting~\cite{cmrehl:algebraic-approaches},
or in general rewriting in
adhesive categories~\cite{ls:adhesive-journal}. Hence we can use our
results to reason about process calculi as well as dynamically
evolving graphs and networks for various different types of graphs
(directed or undirected graphs, node- or edge-labelled graphs, hypergraphs, etc.).

Our contributions in this paper can be summarized as follows:

\begin{itemize}
\item We define the notion of conditional bisimilarity, in fact we
  provide three equivalent definitions: two notions are
  derived from saturated bisimilarity, where a context step (or a
  representative step) can be mimicked by several answering
  steps. Third, we compare with the notion of conditional environment
  congruence, which is based on the idea of annotating transitions
  with passive environments enabling a step.

\item Conditional bisimulation relations tend to
be very large --- often infinite in size.
To obtain possibly substantial reductions of proof obligations in a bisimulation proof,
we propose an up-to context technique (for up-to techniques and their
history see~\cite{ps:coinduction-enhancements-historical}).
In particular, it can replace an infinite conditional bisimulation relation by a possibly finite
bisimulation up-to context, which provides a witness for bisimilarity.
We also view our up-to technique in a general lattice-theoretical setting
and prove the compatibility property~\cite{ps:enhancements-coinductive}, which
not only implies soundness of the technique, but also allows it to be
composed with other (compatible) up-to techniques.

\item We use the notion of representative steps in order to
obtain finitely branching transition systems, further reducing proof
obligations.

\item We compare conditional bisimilarity with
related notions of behavioural equivalence.

\item To illustrate our concepts, we
work out a small case study in the context of double-pushout graph
rewriting, where we model message passing over reliable and unreliable
channels.
\end{itemize}

\noindent
The article is structured as follows: First, in
\Cref{sec:reasys-nocond} we introduce the fundamental
ideas for reactive systems without conditions, including all
preliminary definitions and techniques developed for reactive systems
relevant to our work. In \Cref{sec:reasys-cond}, we consider the
refinement to conditional reactive systems, before we turn towards our
main contribution in \Cref{sec:condbisim}, which is conditional
bisimulation and its up-to variant in \Cref{sec:uptocond}.  In
\Cref{sec:alternative-char} we give an alternative characterization of
conditional bisimilarity and compare to related notions~of behavioural
equivalence and we conclude in \Cref{sec:conclusion}.


\section{Reactive Systems}%
\label{sec:reasys-nocond}

We denote the composition of arrows
$f \colon A \to B,\ g \colon B \to C$ by ${f;g} \colon A \to C$.
Usually written $g \circ f$, we chose $f;g$ to better match the
reading order of the diagram
$A \xrightarrow f B \xrightarrow g C = A \xrightarrow{f;g} C$.

\subsection{Reactive Systems without Conditions}\label{subsec-reasys-nocond}

We now define reactive systems, which were introduced in~\cite{LM00}
and extended in~\cite{DBC-CRS} with application conditions for rules.
We initially only look at reactive systems without conditions.
Conditions and the definition of reactive systems with conditions will
be introduced later, in \Cref{sec:reasys-cond}.

\begin{defi}[Reactive system rules, reaction]\label{def-reasys}%
  Let $\catname C$ be a category with a distinguished object $\obnull$
  (not necessarily initial).  A \emph{rule} is a pair $(\ell,r)$ of
  arrows $\ell,r \colon {\obnull \to I}$ (called left-hand side and
  right-hand side).  A \emph{reactive system} is a set of rules.

  Let $\mathcal S$ be a reactive system and
  $a, a' \colon \obnull \to J$ be arrows. We say that $a$
  \emph{reduces to} $a'$ ($a \leadsto a'$) whenever there exists a
  rule $(\ell,r) \in \mathcal S$ with $\ell,r \colon \obnull \to I$
  and an arrow $c \colon I \to J$ (the \emph{reactive context}) such
  that $a = \ell;c$ and $a' = r;c$.
\end{defi}


Using a notation closer to process calculi, we could write
$C[P]\leadsto C[P']$ whenever there is a reaction rule $P\to P'$ and a
context $C[\_]$. Fixing a distinguished object $0$ means that we
consider only ground reaction rules (as opposed to the open reactive
systems investigated in~\cite{kss:labels-from-reductions}).

An important instance are reactive systems where the arrows are
cospans in a base category $\catname D$ with pushouts~\cite{ss:reactive-cospans,Sobocinski}.  A \emph{cospan} is a pair of
arrows \mbox{$f_L \colon A \to C,\ f_R \colon B \to C$}.  A~cospan is
\emph{input linear} if its left arrow $f_L$ is mono.


  \vskip\floatsep
  \noindent\hfill
  \parbox[b]{0.45\textwidth}{%
    \captionsetup{type=figure}%
    \centering\begin{tikzpicture}[x=1.10cm,y=1.10cm]

    \def\sqwo{1.2}
    \def\sqoo{1.4}
    \def\sqho{0.95}

    \node (a) at (-\sqwo-\sqoo,0) {$A$};
    \node (x) at (-\sqwo,0) {$X$};
    \node (b) at (0,\sqho) {$B$};
    \node (y) at (\sqwo,0) {$Y$};
    \node (c) at (\sqwo+\sqoo,0) {$C$};
    \node (z) at (0,-\sqho) {$Z$};

    \def\smf{\footnotesize}
    \draw[cospanint] (a) -- node[above]{\smf$f_L$} (x);
    \draw[cospanint] (b) -- node[above,pos=0.7]{\smf$f_R\ $} (x);
    \draw[cospanint] (b) -- node[above,pos=0.7]{\smf\kern5pt$g_L$} (y);
    \draw[cospanint] (c) -- node[above]{\smf$g_R$} (y);

    \draw[cospanint] (x) -- node[below,pos=0.4]{\smf$p_L$\kern8pt} (z);
    \draw[cospanint] (y) -- node[below,pos=0.4]{\smf\kern12pt$p_R$} (z);

    \draw[cospanarr] (a) to[bend left=30] node[above,pos=0.2]{$f$} (b);
    \draw[cospanarr] (b) to[bend left=30] node[above,pos=0.8]{$\ g$} (c);
    \draw[cospanarr] (a) to[bend right=50] node[below]{$f;g$} (c);

    \node at (0,0) {\textup{(PO)}};

    \end{tikzpicture}%
    \caption{Composition of cospans $f$ and~$g$ is done via pushouts}%
    \label{fig-cospan-compo}%
  }%
  \hfill%
  \parbox[b]{0.45\textwidth}{%
    \captionsetup{type=figure}%
    \centering\begin{tikzpicture}[x=1.10cm,y=1.10cm]

    \def\sqw{1.25}
    \def\sqh{1.05}

    \node (d) at (0,0) {$\obnull$};
    \node (l) at (\sqw,0) {$L$};
    \node (i) at (2*\sqw,0) {$I$};
    \node (r) at (3*\sqw,0) {$R$};
    \node (r0) at (4*\sqw,0) {$\obnull$};
    \node (g) at (1*\sqw,-1*\sqh) {$G$};
    \node (c) at (2*\sqw,-1*\sqh) {$C$};
    \node (h) at (3*\sqw,-1*\sqh) {$H$};
    \node (k) at (2*\sqw,-2*\sqh) {$\obnull$};

    \draw[cospanintmono]
      (d) edge (l) edge (g)
      (l) edge (g)
      (i) edge (c)
      (r) edge (h)
      (r0) edge (r) edge (h);
    \draw[cospanint]
      (k) edge (g) edge (c) edge (h)
      (i) edge (l) edge (r)
      (c) edge (h) edge (g);

    \draw[cospanarr] (d.north east) to[bend left =10] node[above]{$\ell$} (i.north west);
    \draw[cospanarr] (d.south) to[bend right=10] node[left ]{$a\ $} (k.west);
    \draw[cospanarr] (i.south west) to[bend right=10,pos=0.2] node[left]{$c$} (k.north west);
    \draw[cospanarr] (r0.north west) to[bend right=10] node[above]{$r$} (i.north east);
    \draw[cospanarr] (r0.south) to[bend left=10] node[right]{$a^\prime$} (k.east);

    \end{tikzpicture}%
    \caption{Double-pushout graph transformation as reactive system steps}%
    \label{fig-dpo-reasys}
  }%
  \hfill\mbox{}%
  \vskip\floatsep

  Two cospans
  \mbox{$f \colon A \xrightarrow{f_L} X \xleftarrow{f_R} B,\ g \colon
    B \xrightarrow{g_L} Y \xleftarrow{g_R} C$} are composed by taking
  the pushout $(p_L, p_R)$ of $(f_R, g_L)$ as shown in
  \Cref{fig-cospan-compo}.  The result is the cospan
  \mbox{${f;g} \colon A \xrightarrow{f_L;p_L} Z \xleftarrow{g_R;p_R}
    C$}, where $Z$ is the pushout object of $f_R,\; g_L$.  For
  adhesive categories~\cite{ls:adhesive-journal} (see
  \mbox{Appendix~\ref{sec:adhesive}}), the composition of input linear
  cospans again yields an \mbox{input} linear cospan (by applying~\cite[Lemma 4.2]{ls:adhesive-journal} to the cospan composition
  diagram).  Given an \mbox{adhesive} category $\catname D$,
  $\ILC(\catname D)$ is the category where the objects are the objects
  of $\catname D$, the arrows $f\colon A\rightarrow C$ are
  \mbox{input} linear cospans $f\colon A \rightarrow B \leftarrow C$
  of $\catname D$ and composition is performed via pushouts as
  described above.  We see an arrow $f \colon A \to C$ of
  $\ILC(\catname D)$ as an object~$B$ of~$\catname D$ equipped with
  two interfaces $A,C$ and corresponding arrows $f_L,f_R$ to relate
  the interfaces to $B$, and composition glues the inner objects of
  two cospans via their common interface.  Input linearity is useful
  since we rely on adhesive categories where pushouts along monos are
  well-behaved. In particular, they always exist and form Van Kampen
  squares (see Appendix~\ref{sec:adhesive}),
  the latter being a requirement for borrowed context
  diagrams (\Cref{subsec-rsq}).

In this article, as a running example we consider the category $\catname\Graphfin$,
which has finite graphs (we use directed multigraphs with node and edge labels) as objects and
total graph morphisms (functions that map nodes and edges of one graph to another, with the edge map being consistent to the node map) as arrows.
In $\catname\Graphfin$, monos are
exactly the injective graph morphisms.  We then use reactive systems
over $\ILC(\catname\Graphfin)$ (input-linear cospans of graphs),
i.e.\ we rewrite graphs with inter\-faces.  If the distinguished object
$0$ is the empty graph (the initial object of $\catname\Graphfin$),
such reactive systems \coincide~\cite{ss:reactive-cospans} with the
well-known \emph{double pushout (DPO) graph transformation}
approach~\cite{DPO,RevisitedDPO} when used with injective matches.  As
shown~in \Cref{fig-dpo-reasys}, a DPO rewrite step $G \Rightarrow H$
can be expressed as a reactive system reaction $a \leadsto a'$ where
the pushouts of the DPO step are obtained from cospan compositions
$\ell;c$ and $r;c$.

\subsection{Deriving Bisimulation Congruences}

The reduction relation $\leadsto$ generates an unlabelled transition
system, where the states are the reactive agents (in our example, graphs).
Note that bisimilarity on this transition system only checks whether any reaction is possible: for two bisimilar agents, it is not required that the same rule is used in their reactions, or even that the reaction is applied at the same position.

A disadvantage of bisimilarity on $\leadsto$ is that it usually is not a
congruence: it is easy to construct an example where neither $a$ nor
$b$ can perform a step since no complete left-hand side is
present (hence $a$ would be bisimilar to $b$). However, by adding a suitable context $f$, $a;f$~could contain a
full left-hand side and can reduce, whereas $b;f$ can not.

Therefore, to check whether two components can be exchanged, they have
to be combined with every possible context and bisimilarity has to be
shown for each.


In order to obtain a congruence, we can resort to defining
bisimulation on labelled transitions, using as labels the additional
contexts that allow an agent to react~\cite{LM00,DBC-CRS}.

\begin{defi}[Context step (without conditions)~\cite{DBC-CRS}]\label{def-ctxstep}%
\floatingpicspaceright[4]{2.85cm}
\begin{floatingpic}[-.25\baselineskip]{-2.55cm}%
\centering\begin{tikzpicture}

\def\sqw{1.05}
\def\sqh{1.0}

\path[use as bounding box] (-0.1-0.21,0.45) rectangle (2*\sqw+0.1,-\sqh-0.2);

\node (tlempty) at (0,0) {$\obnull$};
\node (i) at (1*\sqw,0) {$I$};
\node (trempty) at (2*\sqw,0) {$\obnull$};
\node (j) at (0,-\sqh) {$J$};
\node (k) at (1*\sqw,-\sqh) {$K$};
\draw[->] (tlempty) -- node[above]{$\ell$} (i);
\draw[->] (trempty) -- node[above]{$r$} (i);
\draw[->] (trempty) -- node[below,pos=0.3]{$a^\prime$} (k);

\draw[->] (tlempty) -- node[left]{$a$} (j);
\draw[->] (i) -- node[left]{$c$} (k);

\draw[->] (j) -- node[above,pos=0.45]{$f$} (k);

\end{tikzpicture}
\end{floatingpic}%
Let $\mathcal S$ be a reactive system and $a \colon \obnull \to J,\ f \colon J \to K,\ a' \colon \obnull \to K$ be arrows.
We write $a \nccstep{f} a'$ whenever $a;f \leadsto a'$ (i.e.\ there exists a rule $(\ell,r) \in \mathcal S$
and an arrow~$c$ such that $a;f = \ell;c,\allowbreak\ a' = r;c$).
Such steps are called \emph{context steps}.
\end{defi}

Intuitively we have to find a context $f$ for the arrow $a$ (which we
want to rewrite) such that we obtain the left-hand side $\ell$ plus
some additional context $c$.
The name \emph{context step} stems from the fact that $a$ might not be
able to do a reaction
on its own, but requires an additional context~$f$. This can be seen in the
following example:

\begin{exa}[Context step (without conditions)]\label{ex-cb-unreliable-pre}
Consider the following reactive system over $\ILC(\catname\Graphfin)$,
i.e.\ all arrows (such as $\ell, r, \dots$) are input-linear cospans of graphs that represent graphs with interfaces.
We model a network of nodes that pass messages (represented by $m$-loops) over
communication channels (represented by $\chanLabel$-edges).
The transmission of a message from the left node to the right node can be represented with the following rule:

\begin{tikzpicture}[gedge/.append style={font=\footnotesize}]
  \def\drawthebox#1#2{
    \node (#2) at (0,0) [draw,roundbox,minimum width=2cm, minimum height=1.25cm] {};
    \node[anchor=west] at (-1cm,-0.35) {$\scriptstyle{#1}$};
  }
  \def\drawtheboxnolab#1{
    \node (#1) at (0,0.15cm) [draw,roundbox,minimum width=2cm, minimum height=0.95cm] {};
  }
  \def\smolnodes{
    \node[gn] (l) at (-0.35,0.05) {};
    \node[gn] (r) at (0.35,0.05) {};
  }
  \def\smolchan{
    \smolnodes
    \draw[gedge] (l) to node[above]{$\chanLabel$} (r);
  }
  \def\smolchanSh{\begin{scope}[shift={(-0.25,0)}]\smolchan\end{scope}}
  \def\leftmloopt{\draw[gedge] (l.120) to[loop,out=120+30,in=120-30,distance=0.3cm] node[left]{$m\mkern-5mu$} (l.90);}
  \def\rightmloopt{\draw[gedge] (r.90) to[loop,out=60+30,in=60-30,distance=0.3cm] node[right]{$\mkern-2mu m$} (r.60);}
  \begin{scope}[shift={(0,0)}]
    \begin{scope}[shift={(-2,-0.125)}]
      \node (leftnull) at (0,0.15cm) [draw,roundbox,minimum width=0.5cm, minimum height=0.95cm] {};
    \end{scope}

    \begin{scope}[shift={(0,-0.125)}]
      \smolchan\leftmloopt
      \drawtheboxnolab{leftbox}
    \end{scope}

    \begin{scope}[shift={(3,-0.125)}]
      \smolchan
      \drawtheboxnolab{ifbox}
    \end{scope}

    \begin{scope}[shift={(6,-0.125)}]
      \smolchan\rightmloopt
      \drawtheboxnolab{rightbox}
    \end{scope}

    \begin{scope}[shift={(8,-0.125)}]
      \node (rightnull) at (0,0.15cm) [draw,roundbox,minimum width=0.5cm, minimum height=0.95cm] {};
    \end{scope}

    \node[inner xsep=0cm,anchor=east] at (-2.4,0) {$P = \bigg($};
    \node[inner xsep=0cm,anchor=west] at (8.4,0) {$\bigg)$};
    \draw[cospanint] (leftnull) -- (leftbox);
    \draw[cospanint] (ifbox) -- (leftbox);
    \draw[cospanint] (ifbox) -- (rightbox);
    \draw[cospanint] (rightnull) -- (rightbox);
  \end{scope}

  \draw[cospanarr] (-2,0.7) to[bend left=20] node[above,overlay]{$\ell$} (3-0.25,0.85);
  \draw[cospanarr] (8,0.7) to[bend right=20] node[above,overlay]{$r$} node[above]{} (3+0.25,0.85);
  \node at (0,0.8) {$L$};
  \node at (3,0.8) {$I$};
  \node at (6,0.8) {$R$};
\end{tikzpicture}

All graph morphisms are induced by edge labels and position of nodes,
i.e.\ the left node is always mapped to the left node.

We can observe that a channel by itself
($a = {\emptyset \rightarrow \inlinechan{chan} \leftarrow \inlinechan{}}$)
cannot do a reaction, since there is no message to be transferred.
However, if a message on the left node is borrowed
(\mbox{$f = {\inlinechan{} \rightarrow \inlinechan{msgleft} \leftarrow \inlinechan{}}$}),
we obtain $a;f$ (\Cref{ctxstep-example-fig-1}),
to which the example rule can be applied (\Cref{ctxstep-example-fig-2}).
As a result, we obtain the context step
$a \nccstep{f} a'$ or~%
${(\emptyset \rightarrow \inlinechan{chan} \leftarrow \inlinechan{})}
\nccstep{(\inlinechan{} \rightarrow \loopOnArrow{\inlinechan{msgleft}} \leftarrow \inlinechan{})}
{(\emptyset \rightarrow \inlinechan{chan,msgright} \leftarrow \inlinechan{})}$.
  \begin{figure}[h]
      \def\drawthebox#1#2{
        \node (#2) at ($(\boxSxp,\boxSyp)!.5!(-\boxSxm,-\boxSym)$) [draw,roundbox,minimum width=\boxW cm, minimum height=\boxH cm] {};
        \node[anchor=west] at (-\boxSxm,-0.3) {$\scriptstyle{#1}$}; 
      }
      \def\arrowfromto#1#2#3{
        \draw[cospanint] (#1) edge (#2);
      }
      \def\cospanfromto#1#2#3{
        \arrowfromto{#1}{#2}{mono}
        \arrowfromto{#3}{#2}{}
      }
      \def\cospanfromtoarr#1#2#3#4#5#6#7{
        \cospanfromto{#1}{#2}{#3}
        \draw[cospanarr] (#1.#6) to[#7] node[#5]{$#4$} (#3.#6);
      }
      \def\smolnodes{
        \node[gn] (l) at (0,0.05) {};
        \node[gn] (r) at (0.5,0.05) {};
      }
      \def\smolchan{
        \smolnodes
        \draw[gedge] (l) to node[above]{$\chanLabel$} (r);
      }
      \def\smolchanSh{\begin{scope}[shift={(-0.25,0)}]\smolchan\end{scope}}
      \def\leftmloopt{\draw[gedge] (l.120) to[loop,out=120+30,in=120-30,distance=0.3cm] node[left]{$m\mkern-5mu$} (l.90);} 
      \def\rightmloopt{\draw[gedge] (r.90) to[loop,out=60+30,in=60-30,distance=0.3cm] node[right]{$\mkern-2mu m$} (r.60);} 
      \def\boxat#1#2#3#4#5{
        \begin{scope}[shift={(#1*\gridx,#2*\gridy)}]
          \drawthebox{#3}{#4}
          #5
        \end{scope}
      }
      \def\gridparams{
        \pgfmathsetmacro\gridx{2.1}
        \pgfmathsetmacro\gridy{-1.4}
        \pgfmathsetmacro\boxSxp{0.85}
        \pgfmathsetmacro\boxSxm{0.55}
        \pgfmathsetmacro\boxSyp{0.5}
        \pgfmathsetmacro\boxSym{0.5}
        \pgfmathsetmacro\boxW{\boxSxp+\boxSxm}
        \pgfmathsetmacro\boxH{\boxSyp+\boxSym}

        \node at ($(2*\gridx,2*\gridy - \boxSym - 0.6)$) {};
        \node at ($(2*\gridx,0*\gridy + \boxSyp + 0.6)$) {};
      }
      \def\polabels#1#2#3#4{
        \def\sqlfs{\scriptsize\color{gray}}
        \node at ($(boxnull)!0.5!(Gp)$) {\sqlfs\textup{#1}};
        \node at ($(L)!0.5!(C)$) {\sqlfs\textup{#2}};
        \node at ($(G)!0.5!(F)$) {\sqlfs\textup{#3}};
        \node at ($(Gp)!0.5!(K)$) {\sqlfs\textup{#4}};
      }
    \begin{subfigure}[b]{0.42\textwidth}
      \centering
      \begin{tikzpicture}[gedge/.append style={font=\scriptsize}, cospanarr/.append style={font=\scriptsize},y=1.00cm]
        \gridparams
        \boxat00{\obnull}{boxnull}{}
        \boxat01{}{C0}{\smolchan}
        \boxat02{}{N0}{\smolnodes}
        \boxat12{}{Nm}{\smolnodes\leftmloopt}
        \boxat22{}{N0r}{\smolnodes}
        \boxat11{}{G}{\smolchan\leftmloopt}
        \cospanfromto{boxnull}{C0}{N0}
        \draw[cospanarr] (boxnull.south west) to[bend right=12] node[left]{$a$} (N0.north west);
        \cospanfromto{N0}{Nm}{N0r}
        \draw[cospanarr] (N0.south east) to[bend right=10] node[below]{$f$} (N0r.south west);
        \cospanfromto{boxnull}{G}{N0r}
        \draw[cospanarr] (boxnull.east) to[bend left=25] node[above right]{$a;f$} (N0r.north);

        \arrowfromto{C0}{G}{}
        \arrowfromto{Nm}{G}{}
      \end{tikzpicture}
      \caption{A channel $a$ together with borrowed context $f$, resulting in $a;f$.}%
      \label{ctxstep-example-fig-1}
    \end{subfigure}\hfill
    \begin{subfigure}[b]{0.56\textwidth}
      \centering
      \begin{tikzpicture}[gedge/.append style={font=\scriptsize}, cospanarr/.append style={font=\scriptsize},y=1.00cm]
        \gridparams
        \node (boxnull) at ($(\boxSxp,0)!.25!(-\boxSxm,0) + (0,\boxSyp)!.5!(0,-\boxSym) + (0*\gridx,0)$) [draw,roundbox,minimum width=0.5*\boxW cm, minimum height=\boxH cm] {};
        \node (boxnullr) at ($(\boxSxp,0)!.75!(-\boxSxm,0) + (0,\boxSyp)!.5!(0,-\boxSym) + (4*\gridx,0)$) [draw,roundbox,minimum width=0.5*\boxW cm, minimum height=\boxH cm] {};
        \node[anchor=west] at ($(-\boxSxm,-0.3) + (0*\gridx,0) + (\boxW*0.5,0)$) {$\scriptstyle{\obnull}$};
        \node[anchor=west] at ($(-\boxSxm,-0.3) + (4*\gridx,0)$) {$\scriptstyle{\obnull}$};

        \boxat10{L}{L}{\smolchan\leftmloopt}
        \boxat20{I}{I}{\smolchan}
        \boxat30{R}{R}{\smolchanSh\rightmloopt}
        \boxat11{}{G}{\smolchan\leftmloopt}
        \boxat21{C}{C}{\smolchan}
        \boxat22{}{K}{\smolnodes}
        \boxat31{}{H}{\smolchanSh\rightmloopt}
        \cospanfromto{boxnull}{L}{I} 
        \cospanfromto{boxnullr}{R}{I}
        \cospanfromto{boxnull}{G}{K}
        \cospanfromto{boxnullr}{H}{K}
        \cospanfromto{I}{C}{K}
        \arrowfromto{L}{G}{}
        \arrowfromto{C}{G}{}
        \arrowfromto{C}{H}{}
        \arrowfromto{R}{H}{}
        \draw[cospanarr] (boxnull.north east) to[bend left=10] node[above]{$\ell$} (I.north west);
        \draw[cospanarr] (boxnullr.north west) to[bend right=10] node[above]{$r$} (I.north east);
        \draw[cospanarr] (boxnull.south) to[bend right=25] node[below left]{$a;f$} (K.west);
        \draw[cospanarr] (boxnullr.south) to[bend left=25] node[below right]{$a'$} (K.east);
        \draw[cospanarr] (I.south east) to[bend left=12,pos=0.3] node[right]{$c$} (K.north east);
      \end{tikzpicture}
      \caption{$a;f$ can now do a reaction to $a' = r;c$.\\~}%
      \label{ctxstep-example-fig-2}
    \end{subfigure}
    \caption{Visualization of the context step described in \Cref{ex-cb-unreliable-pre}.}%
    \label{ctxstep-example-fig}
  \end{figure}
\end{exa}

A bisimulation relation over $\rightarrow_C$ is called \emph{saturated
  bisimulation}, as it checks all contexts. Consequently, saturated
bisimilarity $\simC$ (${\sim_{S\mkern-2mu AT}}$ in~\cite{DBC-CRS}) is
a congruence~\cite{bkm:saturated,DBC-CRS}, i.e., it is closed under
contextualization. In other words $a \mathrel{\simC} b$ implies $a;c \mathrel{\simC} b;c$
for all contexts $c$.

\subsection{Representative Squares}\label{subsec-rsq}

  Checking bisimilarity of context steps is impractical
  because the transition system is generally infinitely branching:
  usually, $f$ can be chosen from an infinite set of possible
  contexts, which all have to be checked.  Most of these contexts are
  larger than necessary, that is, they contain elements that do not
  actively participate in the reduction.  (In
  \Cref{ex-cb-unreliable-pre}, contexts can be arbitrarily large, as
  long as they have an $m$-loop on the left node.)  An improvement
  would be to check only the minimal contexts from which all
  other context steps can be derived.

When checking which contexts are required to make a rule applicable,
in the reaction diagram (\Cref{def-ctxstep}) the arrows $a, \ell$ are
given and we need to check for possible values of~$f$ (which generate
matching $c, a'$).  To derive a set of contexts $f$ which is as small
as possible --- preferably finite --- \cite{CRS,DBC-CRS} introduced the 
notion of representative squares, which describe a way to represent
all possible squares that close a pair $a,\ell$ by a smaller set of
squares (the so-called representative squares).  We can then limit
bisimilarity checking to just the steps using representative squares,
which, if this smaller set is indeed finite, leads to a finitely
branching transition system.

\begin{figure}[t]
  \captionsetup{width=.9\linewidth}
  \centering\begin{tikzpicture}[x=1.10cm,y=1.10cm]

  \def\sqw{2}
  \def\sqh{2}
  \def\sqhsp{0.3} 
  \def\sqhspi{0.4} 
  \def\sqhsr{0.25} 
  \def\sqhsrr{0.25} 

  \begin{scope}[shift={(0,0)}]
    \node (a) at (0,0) {$A$};
    \node (b) at (\sqw-\sqhsr,0) {$B$};
    \node (c) at (0,-\sqh+\sqhsrr) {$C$};
    \node (ds) at (\sqw-\sqhsr,-\sqh+\sqhsrr) {$D'$};
    \draw[->] (a) -- node[above]{$\alpha_1$} (b);
    \draw[->] (a) -- node[left]{$\alpha_2$} (c);

    \draw[->] (b) -- node[right]{$\delta_1$} (ds);
    \draw[->] (c) -- node[below]{$\delta_2$} (ds);
  \end{scope}

  \node at (\sqw+1,-0.5*\sqh) {$\rightarrow$};

  \begin{scope}[shift={(\sqw+2.5,0)}]
    \node (a) at (0,0) {$A$};
    \node (b) at (\sqw-\sqhsp,0) {$B$};
    \node (c) at (0,-\sqh+\sqhsp) {$C$};
    \node (d) at (\sqw-\sqhsp-\sqhspi,-\sqh+\sqhsp+\sqhspi) {$D$};
    \node (ds) at (\sqw,-\sqh) {$D'$};
    \draw[->] (a) -- node[above]{$\alpha_1$} (b);
    \draw[->] (a) -- node[left]{$\alpha_2$} (c);
    \draw[->] (b) -- node[left,pos=0.25]{$\beta_1$} (d);
    \draw[->] (c) -- node[above,pos=0.3]{$\beta_2$} (d);

    \draw[->] (d.center)+(5pt,-5pt) -- (ds);
    \node at ($(d.center)+(12pt,-4pt)$) {$\gamma$};
    \draw[->] (b) -- node[right]{$\delta_1$} (ds);
    \draw[->] (c) -- node[below]{$\delta_2$} (ds);
  \end{scope}
  \end{tikzpicture}%
  \caption{Every commuting square of the category (left) can be reduced to a representative square in $\kappa$ and an arrow $\gamma$ which extends the representative square to the original square~(right).}%
  \label{repr-sq-init}
\end{figure}
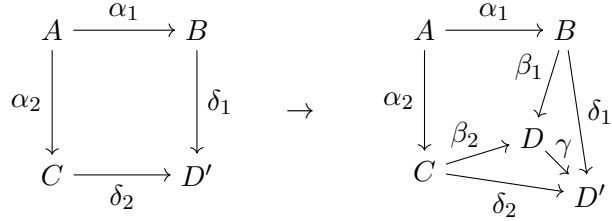

\begin{defi}[Representative squares~\cite{CRS}]\label{def-rsq}%
A class $\kappa$ of commuting squares in a category $\catname C$ is \emph{represen\-tative} if $\kappa$ satisfies the following condition:
for each commuting square $(\alpha_1, \alpha_2, \delta_1, \delta_2)$ in $\catname C$
there exists a commuting square $(\alpha_1, \alpha_2, \beta_1, \beta_2)$ in $\kappa$ and an arrow $\gamma$,
such that $\delta_1 = \beta_1;\gamma,\ \delta_2 = \beta_2;\gamma$.
This situation is depicted in \Cref{repr-sq-init}.

For two arrows $\alpha_1 \colon A \to B,\ \alpha_2 \colon A \to C$,
we define $\kappa(\alpha_1,\alpha_2)$ as the set of pairs of arrows $(\beta_1,\beta_2)$ which,
together with $\alpha_1,\alpha_2$, form representative squares in $\kappa$.
\end{defi}

The original paper on reactive systems~\cite{LM00} used the (more restrictive)
notion of idem pushouts \mbox{instead} of representative
squares. Unfortunately, the universal property of idem pushouts leads
to complications, in particular for cospan categories, where one has
to resort to the theory of bicategories in order to be able to express
this requirement. For the purposes of this paper, we stick to the
simpler notion of representative squares, in order to keep our
results independent of the concrete class of squares chosen.

The question arises which constructions yield suitable classes of representative squares, ideally with finite $\kappa(\alpha_1,\alpha_2)$, in order to represent all possible contexts $\delta_1,\delta_2$ with a finite set of representative contexts $\beta_1,\beta_2$.
Pushouts can be used when they exist~\cite{DBC-CRS}, however, they do not exist for $\ILC(\catname\Graphfin)$.%

For cospan categories over adhesive categories, borrowed context
diagrams --- initially introduced as an extension of DPO
rewriting~\cite{EK04} --- can be used as representative squares. Before
we can introduce such diagrams, we first need the notion of jointly
epi.

\begin{defi}[Jointly epi]
  A pair of arrows $f \colon B \to D,\ g \colon C \to D$ is
  \emph{jointly epi} (\emph{JE}) if for each pair of arrows
  $d_1, d_2 \colon D \to E$ the following holds: if $f;d_1 = f;d_2$
  and $g;d_1 = g;d_2$, then $d_1 = d_2$.
\end{defi}

In $\catname\Graphfin$ jointly epi equals jointly surjective, meaning
that each node or edge of $D$ is required to have a preimage under $f$
or $g$ or both (it contains only images of $B$ or~$C$).

This criterion is similar to, but weaker than a pushout:
For jointly epi graph morphisms $d_1 \colon B \to D,\ d_2 \colon C \to D$,
there are no restrictions on which elements of $B,C$ can be merged in~$D$.
However, in a pushout constructed from morphisms $a_1 \colon A \to B,\ a_2 \colon A \to C$,
elements in~$D$ can (and must) only be merged if they have a common preimage in $A$.
(Hence every pushout generates a pair of jointly epi arrows, but not vice versa.)

\begin{defi}[Borrowed context diagram~\cite{DBC-CRS}]
  A commuting diagram in the category $\ILC(\catname C)$, where
  $\catname C$ is adhesive, is a \emph{borrowed context diagram}
  whenever it has the form of the diagram shown in \Cref{fig-bc-diag},
  and the four squares in the base category $\catname C$ are pushout
  (PO), pullback (PB) or jointly~epi (JE) as indicated. In particular
  $L\rightarrowtail G^+$, $G\rightarrowtail G^+$ must be jointly epi.
\end{defi}

\Cref{fig-bc-venn} shows a more concrete version of
\Cref{fig-bc-diag}, where graphs and their overlaps are depicted by
Venn diagrams (assuming that all morphisms are injective). Because of
the two pushout squares, this diagram can be interpreted as
composition of cospans $a;f = \ell;c = D \rightarrow G^+ \leftarrow K$
with extra conditions on the top left and the bottom right square.
The top left square fixes an overlap $G^+$ of $L$ and $G$, while $D$
is contained in the intersection of $L$ and $G$ (shown as a hatched
area). Being jointly epi ensures that it really is an overlap and does
not contain unrelated elements.
The top right pushout corresponds to the left
pushout of a DPO rewriting diagram. It contains a total match of $L$
in $G^+$.  Then, the bottom left pushout gives us the minimal borrowed
context $F$ such that applying the rule becomes possible.
The bottom right pullback ensures that the interface $K$ is as large
as possible.  

We
will discuss an example of a borrowed context diagram below
(\Cref{repr-step-example}).  For additional examples, we refer to~\cite{EK04}.

\begin{figure}[t]
  \begin{subfigure}[b]{0.45\textwidth}
    \centering\begin{tikzpicture}[x=1.10cm,y=1.10cm]
      \def\sqw{1.5}
      \def\sqh{1.48}
      \pgfmathsetmacro\sqww{2*\sqw}
      \pgfmathsetmacro\sqhh{2*\sqh}

      \path[use as bounding box] (-0.7,0.8) rectangle (2*\sqw+0.7,-2*\sqh-0.87);

      \node (d) at (0,0) {$D$};
      \node (l) at (\sqw,0) {$L$};
      \node (i) at (\sqww,0) {$I$};
      \node (g) at (0,-\sqh) {$G$};
      \node (gp) at (\sqw,-\sqh) {$\mkern-3mu G^+\mkern-5mu$};
      \node (c) at (\sqww,-\sqh) {$C$};
      \node (j) at (0,-\sqhh) {$J$};
      \node (f) at (\sqw,-\sqhh) {$F$};
      \node (k) at (\sqww,-\sqhh) {$K$};

      \def\sqlfs{\footnotesize}
      \node at (0.5*\sqw,-0.5*\sqh) {\sqlfs\textup{JE}};
      \node at (1.5*\sqw,-0.5*\sqh) {\sqlfs\textup{PO}};
      \node at (0.5*\sqw,-1.5*\sqh) {\sqlfs\textup{PO}};
      \node at (1.5*\sqw,-1.5*\sqh) {\sqlfs\textup{PB}};

      \draw[cospanintmono]
        (d) edge (l) edge (g)
        (l) edge (gp)
        (g) edge (gp)
        (i) edge (c)
        (j) edge (f);
      \draw[cospanint]
        (j) edge (g)
        (k) edge (f) edge (c)
        (f) edge (gp)
        (i) edge (l)
        (c) edge (gp);

      \draw[cospanarr] (d.north east) to[bend left =10] node[above]{$\ell$} (i.north west);
      \draw[cospanarr] (d.south west) to[bend right=10] node[left ]{$a$} (j.north west);
      \draw[cospanarr] (i.south east) to[bend left =10] node[right]{$c$} (k.north east);
      \draw[cospanarr] (j.south east) to[bend right=10] node[below]{$f$} (k.south west);
    \end{tikzpicture}%
    \caption{Structure of a borrowed context diagram.
    The inner, lighter arrows are morphisms of the base category $\catname C$,
    while the outer arrows are morphisms of $\ILC(\catname C)$.
    }%
    \label{fig-bc-diag}%
  \end{subfigure}\hfill
  \begin{subfigure}[b]{0.52\textwidth}
    \centering
    \begin{tikzpicture}
      \def\heightAdjustA{1.00}
      \def\heightAdjustB{1.00}
      \pgfmathsetmacro\posX{2.75}
      \pgfmathsetmacro\posY{-1.65}
      \pgfmathsetmacro\boxSxp{1.15}
      \pgfmathsetmacro\boxSxm{1.0}
      \pgfmathsetmacro\boxSyp{0.68*\heightAdjustB}
      \pgfmathsetmacro\boxSym{0.62*\heightAdjustB}
      \pgfmathsetmacro\boxW{\boxSxp+\boxSxm}
      \pgfmathsetmacro\boxH{\boxSyp+\boxSym}

      \def\circGinner{((0.05,0) ellipse[x radius=0.53, y radius=0.24]}
      \def\circGouter{((0.05,0) ellipse[x radius=0.65, y radius=0.36]}
      \def\circLinner{((0.6,0.14) circle[radius=0.28]}
      \def\circLouter{((0.6,0.14) circle[radius=0.4]}
      \def\allArea{(-\boxSxm,-\boxSym) rectangle (\boxSxp,\boxSyp)} 

      \newcommand{\drawgl}{%
        \draw \circGinner;
        \draw \circGouter;
        \draw \circLinner;
        \draw \circLouter;
      }

      \tikzset{vennfill/.style={fill=black!40}}

      \newcommand{\fillboth}{%
        \fill[vennfill] \circGouter;
        \fill[vennfill] \circLouter;
      }

      \newcommand{\boxat}[5]{
        \begin{scope}[shift={(#1*\posX,#2*\posY*\heightAdjustA)}]
          \node (#3) at ($(\boxSxp,\boxSyp)!.5!(-\boxSxm,-\boxSym)$) [draw,roundbox,cospanintcommon,minimum width=\boxW cm, minimum height=\boxH cm] {};
          \node[anchor=south west] at (-\boxSxm,-\boxSym) {$\scriptstyle{#4}$};

          \begin{scope}[even odd rule]
            #5
          \end{scope}
          \drawgl
        \end{scope}
      }

      \boxat00DD{
        \clip \circLouter;
        \clip \circGouter;
        \foreach \d in {0,1,...,20} {
          \draw[black!40, line width=0.7pt] (0, -1cm + 0.3pt + \d*2pt) -- +(1cm,1cm);
        }}

      \boxat10LL{
        \fill[vennfill] \circLouter;}

      \boxat20II{
        \clip \circLinner \allArea;
        \fill[vennfill] \circLouter;}

      \boxat01GG{
        \fill[vennfill] \circGouter;}

      \boxat11{Gp}{G^+}{
        \fillboth}

      \boxat21CC{
        \clip \circLinner \allArea;
        \fillboth}

      \boxat02JJ{
        \clip \circGinner \allArea;
        \fill[vennfill] \circGouter;}

      \boxat12FF{
        \clip \circGinner \allArea;
        \fillboth}

      \boxat22KK{
        \clip \circGinner \allArea;
        \clip \circLinner \allArea;
        \fillboth}

      \def\arrowfromto#1#2#3{
        \draw[cospanint] (#1) edge (#2);
      }
      \def\cospanfromto#1#2#3{
        \arrowfromto{#1}{#2}{mono}
        \arrowfromto{#3}{#2}{}
      }
      \def\polabels#1#2#3#4{
        \def\sqlfs{\scriptsize\color{gray}}
        \node at ($(boxnull)!0.5!(Gp)$) {\sqlfs\textup{#1}};
        \node at ($(L)!0.5!(C)$) {\sqlfs\textup{#2}};
        \node at ($(G)!0.5!(F)$) {\sqlfs\textup{#3}};
        \node at ($(Gp)!0.5!(K)$) {\sqlfs\textup{#4}};
      }

      \cospanfromto{D}{L}{I} 
      \cospanfromto{D}{G}{J} 
      \cospanfromto{J}{F}{K}
      \cospanfromto{I}{C}{K}
      \cospanfromto{L}{Gp}{F}
      \cospanfromto{G}{Gp}{C}
      \polabels{JE}{PO}{PO}{PB}
    \end{tikzpicture}
    \caption{Borrowed context diagrams represented as Venn diagrams.
    The outer circles represent graphs $L,G$, and
    the area between the inner and outer circles represents their interfaces $I,J$.}%
    \label{fig-bc-venn}
  \end{subfigure}
  \caption{Borrowed context diagrams}
\end{figure}

For cospan categories over adhesive categories, borrowed context
diagrams form a represen\-tative class of
squares~\cite{CRS}. Furthermore, for some categories (such as
$\catname\Graphfin$), there are --- up to isomorphism --- only finitely
many jointly epi squares for a given span of monos and hence only
finitely many borrowed context diagrams given $a,\ell$ (since pushout
complements along monos in adhesive categories are unique up to
isomorphism).

  This motivates the following finiteness assumption that we will refer
  to in this paper: given $a,\ell$, we require that $\kappa(a,\ell)$
  is finite. (\textsc{Fin})

\subsection{Representative Steps}\label{subsec-rstep}

It is possible to define a reaction relation based on representative squares.
By requiring that the left square is representative, we ensure that the contexts $\hat f$ are not larger than necessary:

\begin{defi}[Representative step (without conditions)~\cite{DBC-CRS}]\label{def-reprstep-nc}%
Let \mbox{$a \colon \obnull \to J,\ \hat f \colon J \to K,$} $a' \colon \obnull \to K$ be arrows.
We write $a \ncrstep{\hat f} a'$ if a context step $a \nccstep{\hat f} a'$ is possible
(i.e.\ $a;\hat f \leadsto a'$, i.e.\ for some rule $(\ell,r)$ and some arrow $\hat c$ we have $a;\hat f = \ell;\hat c$ and $r;\hat c = a'$)\linebreak
and additionally $\kappa(a,\ell) \ni (\hat f,\hat c)$ (i.e.\ the arrows $(a,\ell,\hat f,\hat c)$ form a representative square).
Such steps are called \emph{representative steps}.
\end{defi}

\begin{rem}\label{repr-zurueckfuehren}%
  \Cref{def-rsq,def-ctxstep} imply that every context step
  $a \nccstep{f} a'$ (left diagram)
  can be reduced to a representative step
  $a \ncrstep{\hat f} r;\hat c$ (right diagram), a fact used in the
  proofs.

\noindent\mbox{}\hskip 0pt plus 1filll\begin{tikzpicture}
  \def\sqw{2}
  \def\sqh{1.9}
  \def\sqhsp{0.4} 
  \def\sqhspi{0.35} 
  \def\sqhsr{0.2} 
  \def\sqrw{1.25} 
  \def\sqtms{-0.25} 
  \def\sqtmsy{-0.35} 

  \begin{scope}[shift={(0,0)}]
    \node (a) at (0,0) {$\obnull$};
    \node (b) at (\sqw-\sqhsr+\sqtms,0) {$I$};
    \node (c) at (0,-\sqh+\sqhsr-\sqtmsy) {$J$};
    \node (ds) at (\sqw-\sqhsr+\sqtms,-\sqh+\sqhsr-\sqtmsy) {$K$};
    \draw[->] (a) -- node[above]{$\ell$} (b);
    \draw[->] (a) -- node[left]{$a$} (c);

    \draw[->] (b) -- node[right,pos=0.4]{$c$} (ds);
    \draw[->] (c) -- node[above]{$f$} (ds);

    \node (r0) at (\sqw-\sqhsr + \sqrw,0) {$\obnull$};
    \draw[->] (r0) -- node[above]{$r$} (b);
    \draw[->] (r0) -- node[right]{$a'$} (ds);
  \end{scope}

  \node at (4.5,-0.45*\sqh) {$\rightarrow$};

  \begin{scope}[shift={(6,0)}]
    \node (a) at (0,0) {$\obnull$};
    \node (b) at (\sqw-\sqhsp,0) {$I$};
    \node (c) at (0,-\sqh+\sqhsp) {$J$};
    \node (d) at (\sqw-\sqhsp-\sqhspi,-\sqh+\sqhsp+\sqhspi) {$K'$};
    \node (ds) at (\sqw,-\sqh) {$K$};
    \draw[->] (a) -- node[above]{$\ell$} (b);
    \draw[->] (a) -- node[left]{$a$} (c);
    \draw[->] (b) -- node[left,pos=0.3]{$\hat c$} (d);
    \draw[->] (c) -- node[above,pos=0.3]{$\hat f$} (d);

    \draw[->] (d.center)+(5pt,-5pt) -- (ds);
    \node at ($(d.center)+(12pt,-3pt)$) {$\hat g$};
    \draw[->] (b) to[bend left=15] node[right,pos=0.4]{$c$} (ds);
    \draw[->] (c) -- node[below]{$f$} (ds);

    \node (r0) at (\sqw-\sqhsr + \sqrw,0) {$\obnull$};
    \draw[->] (r0) -- node[above]{$r$} (b);
    \draw[->] (r0) -- node[right]{$a'$} (ds);
  \end{scope}

\end{tikzpicture}\hskip 0pt plus 1filll\mbox{} 

For this, we construct the representative square $(a,\ell,\hat f,\hat c) \in \kappa$ (which, according to \Cref{def-rsq}, always exists) from the square $(a,\ell,f,c)$ describing the context step.
We obtain arrows $\hat f, \hat c$ and an arrow $\hat g$ which completes $\hat f, \hat c$ to $f,c$ (i.e.\ $\hat f ; \hat g = f,\ \hat c ; \hat g = c$).
\end{rem}

\begin{exa}[Representative steps]\label{repr-step-example}
  Reconsider the reactive system described in Example~\ref{ex-cb-unreliable-pre}, 
  i.e., a message $m$ can be transferred along a channel $\chanLabel$.
  One possible context step allows a channel $\inlinechan{chan}$ to borrow a message $\inlinechan{msgleft}$ (depicted in \Cref{repr-step-example-bcdfig}) and do a transfer:
  ${(\emptyset \rightarrow \inlinechan{chan} \leftarrow \inlinechan{})}
  \nccstep{(\inlinechan{} \rightarrow \loopOnArrow{\inlinechan{msgleft}} \leftarrow \inlinechan{})}
  {(\emptyset \rightarrow \inlinechan{chan,msgright} \leftarrow \inlinechan{})}$.

  Another possible context step is to borrow an additional message on the right node, i.e.\ %
  ${(\emptyset \rightarrow \inlinechan{chan} \leftarrow \inlinechan{})}
  \nccstep{(\inlinechan{} \rightarrow \loopOnArrow{\inlinechan{msgleft,msgright}} \leftarrow \inlinechan{})}
  {(\emptyset \rightarrow \inlinechan{chan,msgright=2} \leftarrow \inlinechan{})}$
  (depicted in \mbox{\Cref{repr-step-example-bigctxfig}}).
  Clearly, this is a valid context step, but the right message is not required by the rule, and we do not want to consider such steps in our analysis (by adding yet more messages, we obtain infinitely many context steps).

  \begin{figure}[b]
    \def\heightAdjustA{1.00}
    \def\heightAdjustB{1.00}
      \def\drawthebox#1#2{
        \node (#2) at ($(\boxSxp,\boxSyp)!.5!(-\boxSxm,-\boxSym)$) [draw,roundbox,minimum width=\boxW cm, minimum height=\boxH cm] {};
        \node at (-0.35,-0.25) {$\scriptstyle{#1}$}; 
      }
      \def\arrowfromto#1#2#3{
        \draw[cospanint] (#1) edge (#2);
      }
      \def\cospanfromto#1#2#3{
        \arrowfromto{#1}{#2}{mono}
        \arrowfromto{#3}{#2}{}
      }
      \def\cospanfromtoarr#1#2#3#4#5#6#7{
        \cospanfromto{#1}{#2}{#3}
        \draw[cospanarr] (#1.#6) to[#7] node[#5]{$#4$} (#3.#6);
      }
      \def\smolnodes{
        \node[gn] (l) at (0,0) {};
        \node[gn] (r) at (0.5,0) {};
      }
      \def\smolchan{
        \smolnodes
        \draw[gedge] (l) to node[below]{$\chanLabel$} (r);
      }
      \def\leftmloopt{\draw[gedge] (l.150) to[loop,out=135+30,in=135-30,distance=0.3cm] node[left]{$m\mkern-5mu$} (l.120);} 
      \def\rightmloopt{\draw[gedge] (r.60) to[loop,out=45+30,in=45-30,distance=0.3cm] node[right]{$\mkern-2mu m$} (r.30);} 
      \def\boxat#1#2#3#4#5{
        \begin{scope}[shift={(#1*\gridx,#2*\gridy)}]
          \drawthebox{#3}{#4}
          #5
        \end{scope}
      }
      \def\gridparams{
        \pgfmathsetmacro\gridx{2.1}
        \pgfmathsetmacro\gridy{-1.4*\heightAdjustA}
        \pgfmathsetmacro\boxSxp{1.1}
        \pgfmathsetmacro\boxSxm{0.6}
        \pgfmathsetmacro\boxSyp{0.5*\heightAdjustB}
        \pgfmathsetmacro\boxSym{0.5*\heightAdjustB}
        \pgfmathsetmacro\boxW{\boxSxp+\boxSxm}
        \pgfmathsetmacro\boxH{\boxSyp+\boxSym}
      }
      \def\polabels#1#2#3#4{
        \def\sqlfs{\scriptsize\color{gray}}
        \node at ($(boxnull)!0.5!(Gp)$) {\sqlfs\textup{#1}};
        \node at ($(L)!0.5!(C)$) {\sqlfs\textup{#2}};
        \node at ($(G)!0.5!(F)$) {\sqlfs\textup{#3}};
        \node at ($(Gp)!0.5!(K)$) {\sqlfs\textup{#4}};
      }
    \begin{subfigure}[b]{0.48\textwidth}
      \centering
      \begin{tikzpicture}[gedge/.append style={font=\scriptsize}, cospanarr/.append style={font=\scriptsize}]
        \gridparams
        \pgfmathsetmacro\boxSxp{0.9}
        \pgfmathsetmacro\boxW{\boxSxp+\boxSxm}
        \boxat00{\obnull}{boxnull}{}
        \boxat10{L}{L}{\smolchan\leftmloopt}
        \boxat20{I}{I}{\smolchan}
        \boxat21{C}{C}{\smolchan}
        \boxat01{G}{G}{\smolchan}
        \boxat11{\;G^+}{Gp}{\smolchan\leftmloopt}
        \boxat02{J}{J}{\smolnodes}
        \boxat12{F}{F}{\smolnodes\leftmloopt}
        \boxat22{K}{K}{\smolnodes}
        \cospanfromto{boxnull}{L}{I} 
        \cospanfromto{boxnull}{G}{J} 
        \draw[cospanarr] (boxnull.north east) to[bend left=10] node[above]{$\ell$} (I.north west);
        \draw[cospanarr] (boxnull.south west) to[bend right=12] node[left]{$a$} (J.north west);
        \draw[cospanarr] (J.south east) to[bend right=10] node[below]{$f$} (K.south west);
        \draw[cospanarr] (I.south east) to[bend left=12] node[right]{$c$} (K.north east);
        \cospanfromto{J}{F}{K}
        \cospanfromto{I}{C}{K}
        \cospanfromto{L}{Gp}{F}
        \cospanfromto{G}{Gp}{C}
        \polabels{JE}{PO}{PO}{PB}
      \end{tikzpicture}
      \caption{Borrowed context diagram for a channel \mbox{borrowing} a message on the left node\\~}%
      \label{repr-step-example-bcdfig}
    \end{subfigure}\hfill
    \begin{subfigure}[b]{0.48\textwidth}
      \centering
      \begin{tikzpicture}[gedge/.append style={font=\scriptsize}, cospanarr/.append style={font=\scriptsize}]
        \gridparams
        \boxat00{\obnull}{boxnull}{}
        \boxat10{L}{L}{\smolchan\leftmloopt}
        \boxat20{I}{I}{\smolchan}
        \boxat21{C}{C}{\smolchan\rightmloopt}
        \boxat01{G}{G}{\smolchan}
        \boxat11{\;G^+}{Gp}{\smolchan\leftmloopt\rightmloopt}
        \boxat02{J}{J}{\smolnodes}
        \boxat12{F}{F}{\smolnodes\leftmloopt\rightmloopt}
        \boxat22{K}{K}{\smolnodes}
        \cospanfromto{boxnull}{L}{I} 
        \cospanfromto{boxnull}{G}{J} 
        \draw[cospanarr] (boxnull.north east) to[bend left=10] node[above]{$\ell$} (I.north west);
        \draw[cospanarr] (boxnull.south west) to[bend right=12] node[left]{$a$} (J.north west);
        \draw[cospanarr] (J.south east) to[bend right=10] node[below]{$f$} (K.south west);
        \draw[cospanarr] (I.south east) to[bend left=12] node[right]{$c$} (K.north east);
        \cospanfromto{J}{F}{K}
        \cospanfromto{I}{C}{K}
        \cospanfromto{L}{Gp}{F}
        \cospanfromto{G}{Gp}{C}
        \polabels{}{PO}{PO}{}
      \end{tikzpicture}
      \caption{Commuting diagram for a channel additionally borrowing a message on the right, which is not needed for the reaction}%
      \label{repr-step-example-bigctxfig}
    \end{subfigure}
    \caption{Diagrams for the two steps described in \Cref{repr-step-example}.}
  \end{figure}

  However, the second context step is not a representative step
  (assuming that representative squares correspond to borrowed context
  diagrams).
  We try to construct a borrowed context diagram: First we fill in the graphs given by $a$, $f$ and $\ell$,
  then we construct the bottom left pushout, we obtain $G^+ = \inlinechan{chan,msgleft,msgright}$ as depicted in \Cref{repr-step-example-bigctxfig}.
  Then however the top left square is not jointly epi, since neither $L = \inlinechan{chan,msgleft}$ (from $\ell$) nor $G = \inlinechan{chan}$ (from $a$) provide a preimage for the right $m$-loop $\inlinechan{msgright}$.

  On the other hand, the first context step is representative, since there $G^+ = \inlinechan{chan,msgleft}$ does not contain the problematic right $m$-loop and it is possible to complete the borrowed context diagram as shown in \Cref{repr-step-example-bcdfig}.
  (To obtain the result of the context step, the right-hand side $a'$ is constructed just as for context steps (see \Cref{ex-cb-unreliable-pre}), which is not depicted here.)
\end{exa}

In a \emph{semi-saturated bisimulation}, $\rightarrow_R$-steps are
answered by $\rightarrow_C$-steps (for every \linebreak\mbox{$(a,b) \in R$} and step
$a \ncrstep{f} a'$ there is $b \nccstep{f} b'$ such that
$(a',b') \in R$ and vice versa).  The resulting bisimilarity $\simR$ is
identical~\cite{DBC-CRS} to saturated bisimilarity (i.e.
$\simR = \simC$) and therefore also a congruence. Whenever
  (\textsc{Fin}) holds, $\sim_R$ is amenable to mechanization, since
  we have to consider only finitely many $\rightarrow_R$-steps
  ($\rightarrow_R$ is finitely branching).

\begin{rem}
Note that answering $\rightarrow_R$-steps with $\rightarrow_R$-steps gives a different, finer notion of behavioural equivalence than answering $\rightarrow_R$-steps with $\rightarrow_C$-steps.
As an example, consider the reactive system with two rules:
$\inlinerr{a,b} \Rightarrow \inlinerr{a,b}$ and $\inlinerr{c}
\Rightarrow \inlinerr{c}$, where the single node is in the rule interface.
Both rules replace the graphs with themselves,
hence, any rewriting step does not change the graph at all.

By exhaustive enumeration of all representative steps, it is easy to see that $\inlinerr{a,c}, \inlinerr{c}$ are semi-saturated bisimilar.
It is important to keep in mind that the answering step does not need to use the same rule.
For example, a step $\inlinerr{a,c} \ncrstep{\loopOnArrow{\inlinerr{b}}} \inlinerr{a,b,c}$ (using the first rule) can be answered by a step $\inlinerr{c} \nccstep{\loopOnArrow{\inlinerr{b}}} \inlinerr{b,c}$ (using the second rule).
Because the resulting graphs both contain a $c$-loop, they are also bisimilar, since any subsequent steps (using either rule) can always be answered by applying the second rule.

However, the step $\inlinerr{c} \nccstep{\loopOnArrow{\inlinerr{b}}} \inlinerr{b,c}$ is not a representative step, because it borrows more than what is necessary to apply the second rule.
There is also no other representative step that originates from $\inlinerr{c}$ and borrows exactly $\inlinerr{b}$.
Hence, under a notion of bisimulation where~$\rightarrow_R$-steps are answered by $\rightarrow_R$-steps, $\inlinerr{a,c},\inlinerr{c}$ are not bisimilar.
\end{rem}

\section{Conditions for Reactive Systems}%
\label{sec:reasys-cond}


The reactive systems defined so far cannot represent rules where a
certain component is required to be absent: whenever a reaction
$a \leadsto a'$ is possible, a reaction $a;c \leadsto a';c$ (with
additional context $c$) is also possible, with no method to prevent
this.  Restricting rule applications can be useful, e.g.\ to model
access to a shared resource, which may only be accessed if no other
entity is currently using it.

For graph transformation systems, application conditions with a
first-order logic flavour have been studied extensively (e.g.\ in~\cite{NegativeAC,PennemannNAC}) and generalized to reactive systems in~\cite{CRS}. If we interpret such conditions in
  $\ILC(\catname\Graphfin)$, we obtain a logic that subsumes
  first-order logic (for more details on expressiveness
  see~\cite{CRS}).

In this section, we summarize the definitions from~\cite{CRS} and
define shifting of conditions as partial evaluation.  We then
summarize the changes that are necessary to extend reactive systems
with conditions. We illustrate the concepts of this chapter
with various examples. An example for conditional reactive systems will be
discussed later (\Cref{ex-cb-unreliable}).  For an additional example, we refer to~\cite{CRS}.

\subsection{Conditions and Satisfaction}

\begin{defi}[Condition~\cite{CRS}]\label{def-condition}
  Let $\catname C$ be a category and $A$ be an object of $\catname C$.
  The set of conditions over $A$ is defined inductively as:
\begin{itemize}
\item $\condtrue_A \defeq (A, \forall, \emptyset)$ and
$\condfalse_A \defeq (A, \exists, \emptyset)$ are conditions over $A$ (base case)
\item $\mathcal{A} = (A, \mathcal{Q}, S)$ is a condition over $A$,
   where
\begin{itemize}
\item $A = \Ro(\mathcal{A})$ is the \emph{root object} of $\mathcal A$,
\item $\mathcal{Q} \in \{\forall, \exists\}$ is a quantifier and
\item $S$ is a finite set of pairs $(h, \mathcal{A}^\prime)$, where
  $h \colon A \to A'$ is an arrow and
  $\mathcal A'$ is a condition over~$A'$.
\end{itemize}
\end{itemize}
\end{defi}

\noindent
Note that conditions can be represented as finite trees.

\begin{defi}[Satisfaction~\cite{CRS}]\label{def-cond-satisf}
Let $\mathcal A$ be a condition over $A$. For an arrow $a \colon A \to B$ and a condition $\mathcal A$  we define the satisfaction relation $a \models \mathcal A$ as follows:
\begin{itemize}
\item $a \models (A, \forall, S)$ iff for every pair $(h, \mathcal{A}^\prime) \in S$ and every arrow $g \colon \Ro(\mathcal{A}^\prime) \to B$ we have: if~$a = h;g$, then $g \models \mathcal{A}^\prime$.
\item $a \models (A, \exists, S)$ iff there exists a pair $(h, \mathcal{A}^\prime) \in S$ and an arrow $g \colon \Ro(\mathcal{A}^\prime) \to B$ such that $a = h;g$ and $g \models \mathcal{A}^\prime$.
\end{itemize}

\noindent
From the above it follows that
$\condtrue_A$ is satisfied by every arrow with domain $A$, and
$\condfalse_A$ is satisfied by no arrow.

We write $\mathcal A \models \mathcal B$ ($\mathcal{A}$ implies $\mathcal{B}$) if for every arrow $c$ with $\dom(c) = \Ro(\mathcal A) = \Ro(\mathcal B)$ we have: if $c \models \mathcal A$, then $c \models \mathcal B$.
Two conditions are equivalent ($\mathcal A \equiv \mathcal B$) if $\mathcal A \models \mathcal B$ and $\mathcal B \models \mathcal A$.
\end{defi}

With satisfaction defined in this way,
universal conditions implicitly implement conjunction, i.e.,
$(A, \forall, \{ (f_1, \mathcal A_1), (f_2, \mathcal A_2), \dots \})$
can be understood as
$\forall(f_1, \mathcal A_1) \land \forall(f_2, \mathcal A_2) \land \dots$
\linebreak[4]
(with a meaning analogous to the conditions of~\cite{NegativeAC,PennemannNAC});
similarly, existential conditions implement disjunction
($\exists(f_1, \mathcal A_1) \lor \exists(f_2, \mathcal A_2) \land \dots$).

Based on this, we can define negation of conditions, and conjunctions and disjunctions of arbitrarily quantified conditions such as $\exists(\dots) \land \exists(\dots)$ (note the flipped logical connective):

\begin{prop}[Boolean operations~\cite{CRS}]\label{cond-bool-ops}%
Consider the following Boolean operations: 

\begin{itemize}
\item  $\neg(A, \forall, S) \defeq (A, \exists, \{ (h, \neg\mathcal{A}^\prime) \mid (h, \mathcal{A}^\prime) \in S \})$, $\neg(A, \exists, S) \defeq (A, \forall, \{ (h, \neg\mathcal{A}^\prime) \mid (h, \mathcal{A}^\prime) \in S \})$
\item $\mathcal{A} \lor \mathcal{B} \defeq (A, \exists, \{ (\id_A, \mathcal{A}), (\id_A, \mathcal{B}) \})$ for two conditions $\mathcal{A}, \mathcal{B}$ over $A$
\item $\mathcal{A} \land \mathcal{B} \defeq (A, \forall, \{ (\id_A, \mathcal{A}), (\id_A, \mathcal{B}) \})$ for two conditions $\mathcal{A}, \mathcal{B}$ over $A$
\end{itemize}

\noindent
\pagebreak[2]
These operations satisfy the standard laws of propositional logic, i.e.
$a \models \neg\mathcal A$ if and only if $a \notmodels \mathcal A$;
$a \models \mathcal A \lor \mathcal B$ if and only if $a \models \mathcal A \;\lor\; a \models \mathcal B$,
analogously for $\mathcal A \land \mathcal B$.
Conjunction and disjunction can be extended in the obvious way to arbitrary, rather than binary, conjunctions and disjunctions.
\end{prop}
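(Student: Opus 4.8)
The plan is to prove the three biconditionals stated in the proposition — for $\neg$, for $\lor$, and for $\land$ — by unfolding the satisfaction relation of \Cref{def-cond-satisf} directly; once these hold, every ``standard law'' (commutativity, associativity, idempotence, de Morgan, distributivity) is inherited verbatim from ordinary Boolean logic applied to the truth values of $a \models (\cdot)$. The characterizations for $\lor$ and $\land$ are immediate computations, whereas the one for $\neg$ requires an induction on the tree structure of conditions from \Cref{def-condition}.

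First I would treat disjunction. Writing $a \colon A \to B$ and unfolding $a \models \mathcal{A} \lor \mathcal{B} = (A, \exists, \{(\id_A, \mathcal{A}), (\id_A, \mathcal{B})\})$, satisfaction demands a pair $(\id_A, \mathcal{A}') \in S$ and an arrow $g \colon \Ro(\mathcal{A}') \to B$ with $a = \id_A; g$ and $g \models \mathcal{A}'$. Since $\Ro(\mathcal{A}) = \Ro(\mathcal{B}) = A$, the equation $a = \id_A; g$ forces $g = a$, so the statement collapses to ``$a \models \mathcal{A}$ or $a \models \mathcal{B}$''. Conjunction is dual: unfolding the $\forall$-clause and again using $g = a$ turns the universal quantification over the two pairs into ``$a \models \mathcal{A}$ and $a \models \mathcal{B}$''. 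Running the same computation with an arbitrary index set in place of $\{\mathcal{A}, \mathcal{B}\}$ yields the extension to arbitrary conjunctions and disjunctions.

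The substantial case is negation, which I would prove by structural induction on $\mathcal{A}$, establishing $a \models \neg\mathcal{A}$ if and only if $a \notmodels \mathcal{A}$ for every arrow $a$ with $\dom(a) = \Ro(\mathcal{A})$. Note first that $\neg$ preserves the root object and maps each subcondition $\mathcal{A}'$ to $\neg\mathcal{A}'$, which is structurally smaller, so the recursion terminates on the finite trees of \Cref{def-condition} and the induction is well-founded; the base cases $\condtrue_A$ and $\condfalse_A$ are simply the instances with $S = \emptyset$. For the step with $\mathcal{A} = (A, \forall, S)$, I unfold $a \models \neg\mathcal{A} = (A, \exists, \{(h, \neg\mathcal{A}') \mid (h, \mathcal{A}') \in S\})$ to ``there exist $(h, \mathcal{A}') \in S$ and $g$ with $a = h;g$ and $g \models \neg\mathcal{A}'$''; the induction hypothesis rewrites $g \models \neg\mathcal{A}'$ as $g \notmodels \mathcal{A}'$, and the resulting formula is precisely the negation of the $\forall$-clause defining $a \models \mathcal{A}$. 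The case $\mathcal{A} = (A, \exists, S)$ is symmetric.

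The only point demanding care — and the closest thing to an obstacle — is getting this quantifier duality exactly right: the negation of ``for every pair and every $g$ with $a = h;g$, $g \models \mathcal{A}'$'' must be matched, clause by clause, against ``there exist a pair and a $g$ with $a = h;g$ and $g \notmodels \mathcal{A}'$'', and symmetrically in the existential case. Once the three biconditionals are in place, the propositional laws follow immediately at the meta-level and need no separate argument.
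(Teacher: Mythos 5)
Your proposal is correct. Note that the paper gives no proof of this proposition at all — it is stated with a citation to~\cite{CRS} and imported as a known result — so there is no internal proof to compare against; your argument supplies exactly the standard one. The two key points are sound: for $\lor$ and $\land$ the equation $a = \id_A;g$ forces $g = a$ (using that $\Ro(\mathcal A) = \Ro(\mathcal B) = A$), collapsing the quantification over decompositions to a plain disjunction/conjunction of satisfaction statements; and for $\neg$ the structural induction is well-founded because conditions are finite trees, with the inductive hypothesis quantified over all arrows $g$ so that it can be applied to the subconditions under the dualized quantifier, which is precisely what makes the $\forall/\exists$ exchange go through.
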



\begin{exa}[Examples of conditions]\label{examples-conditions}
\quad
\begin{itemize}
\item $\mathcal{A}_{node}$ recognizes graphs that contain at least one node:
\[
\mathcal{A}_{node}
=
\Big( \emptyset, \exists, \Big\{
( \emptyset \rightarrow \gn \leftarrow \gn,~ \condtrue_{\gn} )
\Big\} \Big)
\]
The condition is checked as follows: any arrow that satisfies the condition
must be decomposable into two arrows, the first of which is given in the condition and contributes the required node,
and the second optionally provides additional elements.
Since the output interface is not empty, the second arrow is free to connect edges to the required node,
i.e.\ the condition matches both isolated and non-isolated nodes.

\item The condition $\mathcal{A}_{iso}$ recognizes graphs that contain an isolated node:
\[
\mathcal{A}_{iso}
=
\Big( \emptyset, \exists, \Big\{
( \emptyset \rightarrow \gn \leftarrow \emptyset,~ \condtrue_\emptyset )
\Big\} \Big)
\]

As the outer interface of $h = \emptyset \rightarrow \gn \leftarrow \emptyset$ is empty, $h;g$ has to contain an isolated node ($g$ can only connect an edge to the node provided by $h$ if it is contained in the interface).

\item $\mathcal{A}_{ab}$ recognizes the graphs where
for all occurrences of an $a$-edge, there also exists a $b$-edge in the opposite direction:
  \newcommand{\AabInterface}{%
    \begin{tikzpicture}[baseline=-0.5ex]
      \node[gn] (l) at (0,0) {};
      \node[gn] (r) at (0.35,0) {};
    \end{tikzpicture}
  }

  \[ \begin{aligned}
    \mathcal{A}_{ab} =
    \Big( \emptyset, \forall, \Big\{ (
      \emptyset
      \ &\rightarrow\ %
        \begin{tikzpicture}[baseline=-0.5ex]
          \node[gn] (l) at (0,0) {};
          \node[gn] (r) at (0.6,0) {};
          \draw[gedge] (l) to node[below]{$a$} (r);
        \end{tikzpicture}
      \ \leftarrow\ %
      \AabInterface,
      \ \mathcal{A}^\prime
    ) \Big\} \Big) \\
  \mathcal{A}^\prime =
    \Big(
      \AabInterface,
      \exists,
      \Big\{ (
        \AabInterface
        \ &\rightarrow\ %
          \begin{tikzpicture}[baseline=-0.5ex]
            \node[gn] (l) at (0,0) {};
            \node[gn] (r) at (0.6,0) {};
            \draw[gedge] (r) to node[below]{$b$} (l);
          \end{tikzpicture}
        \ \leftarrow\ %
        \AabInterface,
        \condtrue_{\text{\textbullet\,\textbullet}}
      ) \Big\} \Big)
  \end{aligned} \]
\end{itemize}

\noindent
Note that in the examples above, the root object of the condition is empty, since we only consider isolated conditions.
When using conditions in a transformation rule, we would use the interface of the rule instead.
This ensures that the condition is evaluated at the same position where the rule is applied, and not in any other position.
\end{exa}

\subsection{Shifting as Partial Evaluation of Conditions}

When evaluating conditions, it is sometimes known that a given context
is guaranteed to be present. In this case, a condition can be
rewritten, using representative squares, under the assumption that
this context is provided by the environment. This operation is known
as shift~\cite{PennemannNAC}:

\begin{defi}[Shift of a condition~\cite{CRS}]\label{shift-def}\label{shift-law-def}%
Given a fixed class of representative squares $\kappa$,
the \emph{shift of a condition $\mathcal A = (A, \mathcal Q, S)$ along an arrow
$c \colon A \to B$} is inductively defined as follows:
\[
\mathcal{A}_{\downarrow c} \defeq \Big( B, \mathcal Q, \Big\{
( \beta, \mathcal{A}^\prime_{\downarrow\alpha} ) \:\Big|\: (h, \mathcal{A}^\prime) \in S,\ (\alpha,\beta) \in \kappa(h, c)
\Big\} \Big)
\]

The shift operation can be understood as a partial evaluation of
$\mathcal A$ under the assumption that $c$ is already present.
It satisfies
$ c;d \models \mathcal{A} \iff d \models \mathcal{A}_{\downarrow c} $.
\end{defi}

\noindent\mbox{}\hfill\begin{tikzpicture}
  \def\sqw{1.75}
  \node (z) at (-1*\sqw,0) {$Z$};
  \node (a) at (0,0) {$A$};
  \node (b) at (1*\sqw,0) {$B$};
  \node (x) at (2*\sqw,0) {$X$};
  \draw[->,fancydotted] (z) -- node[above]{$a$} (a);
  \draw[->] (a) -- node[above]{$c$} (b);
  \draw[->] (b) -- node[above]{$d$} (x);
  \node[condtri,dart tip angle=60,shape border rotate=90] at (a.south) {$\mathcal A\vphantom{{}_{\downarrow}}\mkern3mu$};
  \node[condtri,dart tip angle=60,shape border rotate=90] at (b.south) {$\mkern-5mu\mathcal A_{\downarrow c}\mkern-6mu$};
  \node at (0,-1.4) {}; 
\end{tikzpicture}\hfill\mbox{} 

The typical case, which we will encounter throughout the rest of this paper,
is that a condition on the context of some arrow $a$ is given,
this arrow is then placed into some environment $c$ (which might not fully satisfy the condition, but possibly parts of it),
and we are interested in a condition that an additional context $d$ has to satisfy.
(For instance, if $\mathcal A$ requires the existence of two elements and $c$ already provides one of them,
then $d$ only needs to add the other one, which is reflected in $\mathcal A_{\downarrow c}$.)

The representation of the shifted condition may differ depending on the class of representative squares chosen.
However, no matter which class is chosen, the resulting conditions are equivalent to each other.
Furthermore, if we assume that (\textsc{Fin}) holds,
shifting a finite condition will again result in a finite condition.

Representative squares as well as shift play a major role in the
diagrammatic~proofs.
The shift operation satisfies a few equivalences that we will use in the proofs of our theorems:

\begin{thm}[Shift laws {\cite[Proposition 17]{CRS}}]\label{shift-laws}%
\begin{align*}
(\mathcal{A} \lor \mathcal{B})_{\downarrow c} &\equiv \mathcal{A}_{\downarrow c} \lor \mathcal{B}_{\downarrow c} &
\mathcal{A}_{\downarrow\id} &\equiv \mathcal{A} &
\mathcal{A} \models \mathcal{B} &\implies \mathcal{A}_{\downarrow c} \models \mathcal{B}_{\downarrow c} \\
(\mathcal{A} \land \mathcal{B})_{\downarrow c} &\equiv \mathcal{A}_{\downarrow c} \land \mathcal{B}_{\downarrow c} &
\condtrue_{\downarrow c} &\equiv \condtrue &
\mathcal{A}_{\downarrow c_1;c_2} &\equiv (\mathcal{A}_{\downarrow c_1})_{\downarrow c_2} \\
\neg(\mathcal{A}_{\downarrow c}) &\equiv (\neg\mathcal{A})_{\downarrow c} &
\condfalse_{\downarrow c} &\equiv \condfalse
\end{align*}
\end{thm}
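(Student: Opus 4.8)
The plan is to prove every line purely semantically, by reducing a claimed equivalence $\mathcal X \equiv \mathcal Y$ to the pointwise statement that $d \models \mathcal X \iff d \models \mathcal Y$ for all arrows $d$ of the appropriate domain, and then invoking the fundamental shift property $c;d \models \mathcal A \iff d \models \mathcal A_{\downarrow c}$ recorded in \Cref{shift-def} together with the propositional-logic behaviour of the Boolean operations from \Cref{cond-bool-ops}. The key preliminary observation is purely a matter of typing: when $c \colon A \to B$, each $\mathcal A_{\downarrow c}$ is a condition over $B$, so in every equation the two sides share a common root object and $\equiv$ is meaningful. Once this is checked, the fundamental shift property lets us translate any satisfaction question about a shifted condition into one about the unshifted condition precomposed with $c$, and all the heavy lifting is delegated to that one property.

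For the three Boolean laws I would argue as follows, fixing an arrow $d$ with domain $B$. Using the shift property and then \Cref{cond-bool-ops}, $d \models (\mathcal A \lor \mathcal B)_{\downarrow c}$ iff $c;d \models \mathcal A \lor \mathcal B$ iff ($c;d \models \mathcal A$ or $c;d \models \mathcal B$) iff ($d \models \mathcal A_{\downarrow c}$ or $d \models \mathcal B_{\downarrow c}$) iff $d \models \mathcal A_{\downarrow c} \lor \mathcal B_{\downarrow c}$; the conjunction law is identical with ``or'' replaced by ``and''. For negation, $d \models (\neg \mathcal A)_{\downarrow c}$ iff $c;d \models \neg \mathcal A$ iff $c;d \notmodels \mathcal A$ iff $d \notmodels \mathcal A_{\downarrow c}$ iff $d \models \neg(\mathcal A_{\downarrow c})$, where the middle step is the negation clause of \Cref{cond-bool-ops}. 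The monotonicity law is just as direct: assuming $\mathcal A \models \mathcal B$, any $d$ with $d \models \mathcal A_{\downarrow c}$ gives $c;d \models \mathcal A$, hence $c;d \models \mathcal B$, hence $d \models \mathcal B_{\downarrow c}$, which is exactly $\mathcal A_{\downarrow c} \models \mathcal B_{\downarrow c}$.

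The remaining four laws fall to the same template. For the unit law, $d \models \mathcal A_{\downarrow \id} \iff \id;d \models \mathcal A \iff d \models \mathcal A$. For the constants, recall from \Cref{def-cond-satisf} that $\condtrue$ is satisfied by every arrow and $\condfalse$ by none; hence $d \models \condtrue_{\downarrow c} \iff c;d \models \condtrue$ holds for every $d$ (so $\condtrue_{\downarrow c} \equiv \condtrue$), while $d \models \condfalse_{\downarrow c} \iff c;d \models \condfalse$ holds for no $d$ (so $\condfalse_{\downarrow c} \equiv \condfalse$). For the composition law, with $c_1 \colon A \to B$, $c_2 \colon B \to C$ and $d$ of domain $C$, two applications of the shift property give $d \models (\mathcal A_{\downarrow c_1})_{\downarrow c_2} \iff c_2;d \models \mathcal A_{\downarrow c_1} \iff c_1;(c_2;d) \models \mathcal A$, and associativity of composition rewrites the last condition as $(c_1;c_2);d \models \mathcal A$, i.e.\ $d \models \mathcal A_{\downarrow c_1;c_2}$.

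Since all the reasoning above is done entirely by the fundamental shift property, I expect no genuine obstacle at the level of these laws: the only things to watch are the bookkeeping of root objects, so that each claimed $\equiv$ is between conditions over the same object, and the associativity rewrite in the composition law. The real mathematical content — a structural induction on $\mathcal A$ that uses the representative-square machinery of \Cref{def-rsq} to establish the fundamental shift property in the first place — lives in \Cref{shift-def} and in \cite{CRS}, and I would not reprove it here; relative to that property the shift laws are essentially a transcription of propositional logic and categorical associativity.
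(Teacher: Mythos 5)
Your proof is correct, but there is nothing in the paper to compare it against: the theorem is imported wholesale from \cite[Proposition 17]{CRS}, and the paper gives no proof of its own. Relative to the paper's level of rigor, your derivation is exactly right. The fundamental property $c;d \models \mathcal{A} \iff d \models \mathcal{A}_{\downarrow c}$ is recorded (likewise without proof, citing \cite{CRS}) in \Cref{shift-def}; equivalence of conditions is, by \Cref{def-cond-satisf}, literally pointwise co-satisfaction over all arrows with the common root object; and the Boolean operations obey the laws of propositional logic by \Cref{cond-bool-ops}. Given these three facts, each of your derivations goes through as written, including the root-object bookkeeping (shifting along $c \colon A \to B$ always lands in conditions over $B$, so every claimed $\equiv$ is well-typed) and the associativity rewrite $(c_1;c_2);d = c_1;(c_2;d)$ in the composition law. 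The one caveat is the one you flag yourself: all genuine mathematical content sits in the fundamental shift property, whose proof (a structural induction on conditions using the defining property of representative squares, \Cref{def-rsq}) lives in \cite{CRS} and is reproduced neither by you nor by the paper. So what you have is a correct and self-contained reduction of the shift laws to that cited property, which is precisely the status the theorem has within this paper.
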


\begin{exa}[Simplifying conditions by shifting]\label{ex-shift-simplify}
  \newcommand\mlRight{\draw[gedge] (n) to[loop,out=45,in=-45,distance=0.4cm] node[right]{\footnotesize$b$} (n);}
  \newcommand\mlLeft{\draw[gedge] (n) to[loop,out=45+90,in=-45-90,distance=0.4cm] node[left]{\footnotesize$b$} (n);} 
  \newcommand\mbrloop{\tikz[baseline=-0.5ex]{\gn\mlRight}}

  Let the following condition $\mathcal A_{nb}$ be given, which
  requires that the interface node does not have a $b$-loop attached:
  \[ \mathcal{A}_{nb}
  = \Big( \gn, \forall, \Big\{ ( \gn \rightarrow \mbrloop \leftarrow \gn ,~ \condfalse ) \Big\}
  \Big) \]
  Furthermore let the cospan $c = \gn \rightarrow \mbrloop \leftarrow \gn$.
  We now compute the result of the shift $\mathcal{A}_{nb\downarrow c}$,
  i.e., the condition $\mathcal{A}_{nb}$ under the assumption that $c$ is already given.
  We expect the resulting condition to be equivalent to $\condfalse$, since the presence of the $b$-loop in $c$ already violates $\mathcal{A}_{nb}$.
  We will show that this is indeed the case.
  By \Cref{shift-law-def} we have:

  \begin{align*}
    \mathcal{A}_{nb\downarrow c}
    &=
    \big( \gn, \forall, \big\{ ( \beta, \mathcal{A}^\prime_{\downarrow\alpha} ) \;\big|\; (h, \mathcal{A}^\prime) \in S,~ (\alpha,\beta) \in \kappa(h,c) \big\} \big) \\
    &=
    \big( \gn, \forall, \big\{
      ( \beta, \mathcal{A}^\prime_{\downarrow\alpha})
      \;\big|\;
      \mathcal{A}^\prime = \condfalse,~
      h = \gn \rightarrow \smash\mbrloop \leftarrow \gn ,~
      (\alpha,\beta) \in \kappa(h,c)
    \big\} \big)
  \intertext{
  We can obtain possible $\alpha,\beta$ by enumerating the borrowed context diagrams where $h,c$ are already given.
  As seen in \Cref{shift-ex-bcds}, there are two possible choices for the jointly epi square in the top left:
  the $b$-loops of $h$ and $c$ can be mapped to two different loops in the center graph (\Cref{shift-ex-bcd-disjoint})
  or they can be mapped to a single loop (\Cref{shift-ex-bcd-overlap}).
  The remaining pushout and pullback squares are then uniquely determined.
  We therefore obtain:}
    &= \big( \gn, \forall, \big\{
      \big( \gn \rightarrow \smash\mbrloop \leftarrow \gn, (\condfalse)_{\downarrow\alpha_1} \big),
      \big( \gn \rightarrow \gn \leftarrow \gn, (\condfalse)_{\downarrow\alpha_2} \big)
    \\ & \hspace{100pt}\hspace{45pt} \big|\;
    \alpha_1 = \gn \rightarrow \smash\mbrloop \leftarrow \gn,\ %
    \alpha_2 = \gn \rightarrow \gn \leftarrow \gn
    \big\} \big)
  \intertext{Shifting $\condfalse$ along any arrow $\alpha_i$ again results in $\condfalse$:}
    &= \big( \gn, \forall, \big\{ ( \gn \rightarrow \smash\mbrloop \leftarrow \gn, \condfalse ), ( \gn \rightarrow \gn \leftarrow \gn, \condfalse ) \big\} \big)
  \end{align*}
  Furthermore, since $\gn \rightarrow \gn \leftarrow \gn$ is an identity cospan,
  the condition is equivalent to \mbox{$\ldots \land \condfalse \equiv \condfalse$},
  which is the expected result.
  \begin{figure}[t]
    \def\heightAdjustA{1.00}
    \def\heightAdjustB{1.00}
      \def\drawthebox#1#2{
        \node (#2) at ($(\boxSxp,\boxSyp)!.5!(-\boxSxm,-\boxSym)$) [draw,roundbox,minimum width=\boxW cm, minimum height=\boxH cm] {};
      }
      \def\arrowfromto#1#2#3{
        \draw[cospanint] (#1) edge (#2);
      }
      \def\cospanfromto#1#2#3{
        \arrowfromto{#1}{#2}{mono}
        \arrowfromto{#3}{#2}{}
      }
      \def\cospanfromtoarr#1#2#3#4#5#6#7{
        \cospanfromto{#1}{#2}{#3}
        \draw[cospanarr] (#1.#6) to[#7] node[#5]{$#4$} (#3.#6);
      }
      \def\leftmloopt{\draw[gedge] (l.150) to[loop,out=135+30,in=135-30,distance=0.3cm] node[left]{$m\mkern-5mu$} (l.120);} 
      \def\rightmloopt{\draw[gedge] (r.60) to[loop,out=45+30,in=45-30,distance=0.3cm] node[right]{$\mkern-2mu m$} (r.30);} 
      \def\boxat#1#2#3#4#5{
        \begin{scope}[shift={(#1*\gridx,#2*\gridy)}]
          \drawthebox{#3}{#4}
          #5
        \end{scope}
      }
      \def\gridparams{
        \pgfmathsetmacro\gridx{2.0}
        \pgfmathsetmacro\gridy{-1.3*\heightAdjustA}
        \pgfmathsetmacro\boxSxp{0.7}
        \pgfmathsetmacro\boxSxm{0.7}
        \pgfmathsetmacro\boxSyp{0.42*\heightAdjustB}
        \pgfmathsetmacro\boxSym{0.42*\heightAdjustB}
        \pgfmathsetmacro\boxW{\boxSxp+\boxSxm}
        \pgfmathsetmacro\boxH{\boxSyp+\boxSym}
      }
      \def\polabels#1#2#3#4{
        \def\sqlfs{\scriptsize\color{gray}}
        \node at ($(boxnull)!0.5!(Gp)$) {\sqlfs\textup{#1}};
        \node at ($(L)!0.5!(C)$) {\sqlfs\textup{#2}};
        \node at ($(G)!0.5!(F)$) {\sqlfs\textup{#3}};
        \node at ($(Gp)!0.5!(K)$) {\sqlfs\textup{#4}};
      }
    \begin{subfigure}[b]{0.48\textwidth}
      \centering
      \begin{tikzpicture}[gedge/.append style={font=\scriptsize}, cospanarr/.append style={font=\scriptsize}]
        \gridparams
        \pgfmathsetmacro\boxW{\boxSxp+\boxSxm}
        \boxat00{\obnull}{boxnull}{\gn}
        \boxat10{L}{L}{\gn\mlRight}
        \boxat20{I}{I}{\gn}
        \boxat21{C}{C}{\gn\mlLeft}
        \boxat01{G}{G}{\gn\mlLeft}
        \boxat11{\;G^+}{Gp}{\gn\mlLeft\mlRight}
        \boxat02{J}{J}{\gn}
        \boxat12{F}{F}{\gn\mlRight}
        \boxat22{K}{K}{\gn}
        \cospanfromto{boxnull}{L}{I} 
        \cospanfromto{boxnull}{G}{J} 
        \draw[cospanarr] (boxnull.north east) to[bend left=10] node[above]{$h$} (I.north west);
        \draw[cospanarr] (boxnull.south west) to[bend right=12] node[left]{$c$} (J.north west);
        \draw[cospanarr] (J.south east) to[bend right=10] node[below]{$\beta_1$} (K.south west);
        \draw[cospanarr] (I.south east) to[bend left=12] node[right]{$\alpha_1$} (K.north east);
        \cospanfromto{J}{F}{K}
        \cospanfromto{I}{C}{K}
        \cospanfromto{L}{Gp}{F}
        \cospanfromto{G}{Gp}{C}
        \polabels{JE}{PO}{PO}{PB}
      \end{tikzpicture}
      \caption{$b$-loops of $h$ and $c$ are kept separate in the center graph.}%
      \label{shift-ex-bcd-disjoint}
    \end{subfigure}\hfill
    \begin{subfigure}[b]{0.48\textwidth}
      \centering
      \begin{tikzpicture}[gedge/.append style={font=\scriptsize}, cospanarr/.append style={font=\scriptsize}]
        \gridparams
        \boxat00{\obnull}{boxnull}{\gn}
        \boxat10{L}{L}{\gn\mlRight}
        \boxat20{I}{I}{\gn}
        \boxat21{C}{C}{\gn}
        \boxat01{G}{G}{\gn\mlLeft}
        \boxat11{\;G^+}{Gp}{\gn\mlRight}
        \boxat02{J}{J}{\gn}
        \boxat12{F}{F}{\gn}
        \boxat22{K}{K}{\gn}
        \cospanfromto{boxnull}{L}{I} 
        \cospanfromto{boxnull}{G}{J} 
        \draw[cospanarr] (boxnull.north east) to[bend left=10] node[above]{$h$} (I.north west);
        \draw[cospanarr] (boxnull.south west) to[bend right=12] node[left]{$c$} (J.north west);
        \draw[cospanarr] (J.south east) to[bend right=10] node[below]{$\beta_2$} (K.south west);
        \draw[cospanarr] (I.south east) to[bend left=12] node[right]{$\alpha_2$} (K.north east);
        \cospanfromto{J}{F}{K}
        \cospanfromto{I}{C}{K}
        \cospanfromto{L}{Gp}{F}
        \cospanfromto{G}{Gp}{C}
        \polabels{JE}{PO}{PO}{PB}
      \end{tikzpicture}
      \caption{$b$-loops of $h$ and $c$ are mapped to a single loop.}%
      \label{shift-ex-bcd-overlap}
    \end{subfigure}
    \caption{Borrowed context diagrams for $h,c$ as given in \Cref{ex-shift-simplify}.}%
    \label{shift-ex-bcds}
  \end{figure}
\end{exa}

\subsection{Conditional Reactive Systems}

We now extend reactive systems with application conditions:

\begin{defi}[Conditional reactive system~\cite{CRS}]
A \emph{rule with condition} is a triple $(\ell,r,\mathcal R)$
where $\ell, r \colon \obnull \to I$ are arrows 
and $\mathcal R$ is a condition with root object $I$.
A \emph{conditional reactive system} is a set of rules with conditions.
\end{defi}

As the root object $I$ of the condition is the codomain of the rule arrow,
it is also the domain of the reactive context,
which has to satisfy the rule condition in order to be able to apply the rule:

\begin{defi}[Reaction]
Let $a, a'$ be arrows of a conditional reactive system with rules $\mathcal S$.\linebreak
We say that \emph{$a$ reduces to $a'$} ($a \leadsto a'$)
whenever there exists a rule $(\ell,r,\mathcal R) \in \mathcal S$ with $\ell,r \colon \obnull \to I$
and a reactive context $c \colon I \to J$
such that $a = \ell;c$, $a' = r;c$ and additionally $c \models \mathcal R$.
\end{defi}

In order to define a bisimulation for conditional reactive systems that is also a congruence,
it is necessary to enrich labels with conditions derived from
the application conditions. Since we can not assume that the full
context is present, the application condition might refer to currently
unknown parts of the context and this has to be suitably integrated
into the label.

\begin{defi}[Context/representative step with conditions~\cite{DBC-CRS}]\label{def-condcstep}%
\floatingpicspacerightafter[2]{3.5cm}
\begin{floatingpic}[-3\baselineskip-4pt]{-3.2cm}%
\centering\begin{tikzpicture}

\def\sqw{1.35}
\def\sqh{1.3}

\path[use as bounding box] (-0.33,1) rectangle (2*\sqw+0.1,-\sqh-1);

\node (tlempty) at (0,0) {$\obnull$};
\node (i) at (1*\sqw,0) {$I$};
\node (trempty) at (2*\sqw,0) {$\obnull$};
\node (j) at (0,-\sqh) {$J$};
\node (k) at (1*\sqw,-\sqh) {$K$};
\draw[->] (tlempty) -- node[above]{$\ell$} (i);
\draw[->] (trempty) -- node[above]{$r$} (i);
\draw[->] (trempty) -- node[below,pos=0.4]{$a^\prime$} (k);

\draw[->] (tlempty) -- node[left]{$a$} (j);
\draw[->] (i) -- node[left]{$c$} (k);

\draw[->] (j) -- node[above]{$f$} (k);
\node[condtri,dart tip angle=50,shape border rotate=270] at (i.north) {$\mathcal R$};
\node[condtri,dart tip angle=50,shape border rotate=90] at (k.south) {$\mkern-3mu\mathcal A$};

\node (pcdst) at (2*\sqw,-1.3*\sqh) {};
\draw[->,fancydotted] (k) -- node[below]{$d$} (pcdst);

\end{tikzpicture}
\end{floatingpic}%
Let $\mathcal S$ be a conditional reactive system, let
$a \colon \obnull \to J,\ f \colon J \to K,\ a' \colon \obnull \to K$
be arrows and $\mathcal A$ be a condition over~$K$.  We
write $a \cstep{f}{\mathcal A} a'$ whenever there exists a rule~%
\mbox{$(\ell,r,\mathcal R) \in \mathcal S$} and an arrow~$c$ such that
$a;f = \ell;c,\ a' = r;c$ (i.e.\ the reaction is \mbox{possible without}
conditions) and furthermore
$\mathcal A \models \mathcal{R}_{\downarrow c}$ (a condition on an additional context $d$ as explained below).
Such steps are called \emph{context steps}.

\floatingpicspaceright{3.5cm}
We write $a \rstep{f}{\mathcal A} a'$ whenever
$a \cstep{f}{\mathcal A} a'$, $\kappa(a,\ell) \ni (f,c)$ and
$\mathcal A = \mathcal{R}_{\downarrow c}$.  Such steps are called
\emph{representative~steps}.
\end{defi}

\floatingpicspaceright[1]{3.5cm}
Conditions are represented graphically in the form of ``arrowhead
shapes'' depicted next to the root object.  Intuitively
$a \cstep{f}{\mathcal A} a'$ means that $a$ can make a step to $a'$
when borrowing $f$, if the yet unknown context $d$ beyond $f$ satisfies
condition $\mathcal{A}$ (since this context $d$ does not directly
participate in the reduction, we call it \emph{passive context}).

The intuition behind this requirement is that $\mathcal A$ should
allow only the contexts that are allowed by the rule condition
$\mathcal R$ (thereby checking that the rule can actually be
applied).
Since $\mathcal A$ is a condition over an additional context $d$ that is
beyond the reaction context $c$, and $c$ might partially satisfy
$\mathcal R$, we shift $\mathcal R$ over $c$ to obtain a condition that
only requires the parts of $\mathcal R$ that are still missing.
For context steps, $\mathcal A$ may also be stronger, hence $\models$.

In the case of a representative step, we
require that a context step is possible, the borrowed context is
minimal, and the condition on the passive context is not stronger than
necessary.

In the proofs, we will make extensive use of the following construction
to obtain a representative step for a given context step:

\begin{rem}\label{cond-repr-zurueckfuehren}%
\Cref{def-rsq,def-condcstep} imply,
analogously to \Cref{repr-zurueckfuehren},
that every context~step $a \cstep{f}{\mathcal A} a'$ (left diagram)
can be reduced to a representative step \mbox{$a \rstep{\hat f}{\mathcal{R}_{\downarrow \hat c}} r;\hat c$} (right diagram),
with $\mathcal R$ being the condition of the rule $(\ell,r,\mathcal R)$ that enables the given context step.

  \begin{center}\begin{tikzpicture}[x=1.00cm,y=1.00cm]
    \begin{scope}[shift={(0,0)}]
      \node (tlempty) at (0,0) {$\obnull$};
      \node (i) at (2.5,0) {$I$};
      \node (trempty) at (4.7,0) {$\obnull$};
      \node (j) at (0,-2.5) {$J$};
      \node (k) at (2.5, -2.5) {$K$};
      \draw[->] (tlempty) -- node[above]{$\ell$} (i);
      \draw[->] (trempty) -- node[above]{$r$} (i);
      \draw[->] (trempty) to[bend left=5] node[right]{$a'$} (k);

      \draw[->] (tlempty) -- node[left]{$a$} (j);
      \draw[->] (i) to[bend left=5] node[right]{$c$} (k);

      \draw[->] (j) -- node[below]{$f$} (k);

      \node[condtri,shape border rotate=270] at (i.north) {$\mathcal{R}$};
      \node[condtri,shape border rotate=90] at (k.south) {$\mkern-3mu\mathcal{A}$};
    \end{scope}

    \node at (5.75,-1.5) {$\rightarrow$};

    \begin{scope}[shift={(6.75,0)}]
      \node (tlempty) at (0,0) {$\obnull$};
      \node (i) at (2.5,0) {$I$};
      \node (trempty) at (4.7,0) {$\obnull$};
      \node (j) at (0,-2.4) {$J$};
      \node (kp) at (1.95,-1.95) {$\hat K$};
      \node (k) at (2.9, -2.9) {$K$};
      \draw[->] (tlempty) -- node[above]{$\ell$} (i);
      \draw[->] (trempty) -- node[above]{$r$} (i);
      \draw[->] (trempty) to[bend left=5] node[right]{$a'$} (k.40);

      \draw[->] (tlempty) -- node[left]{$a$} (j);
      \draw[->] (i.300) to[bend left=5] node[right]{$c$} (k);

      \draw[->] (j) -- node[below]{$f$} (k);

      \draw[->] (i) -- node[left]{$\hat c$} (kp);
      \draw[->] (j) -- node[above]{$\hat f$} (kp);
      \draw[->] (kp) -- node[right,pos=0.1]{\raisebox{9pt}{$\hat g$}} (k);

      \node[condtri,shape border rotate=270] at (i.north) {$\mathcal R$};
      \node[condtri,shape border rotate=90] at (k.south) {$\mathcal A$};
      \node[rotate around={-55:(kp.center)},condtri,shape border rotate=0,scale=0.7] at (kp.west) {%
        \raisebox{-5pt}[5pt][2pt]{%
          \rotatebox{25}{$\mathcal{R}_{\downarrow \hat c}$}%
        }\kern-2pt
      };
    \end{scope}
  \end{tikzpicture}
  \end{center}
\end{rem}

We will also make use of the following context-step-rewriting lemma:

\begin{lem}\label{lemma-ctxtrans-composition}%
  A context step $a;d \cstep{f}{\mathcal{A}} a^\prime$ is equivalent
  to a step $a \cstep{d;f}{\mathcal{A}} a^\prime$.  In particular,
  $a;d \cstep{f}{\mathcal{A}} a^\prime$  if and only if
  $a \cstep{d;f}{\mathcal{A}} a^\prime$.
\end{lem}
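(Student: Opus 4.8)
The plan is to unfold Definition~\Cref{def-condcstep} on both sides of the claimed equivalence and observe that the two resulting lists of defining conditions differ only in the bracketing of a triple composite.

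First I would expand the left-hand side. By definition, $a;d \cstep{f}{\mathcal A} a'$ holds iff there exist a rule $(\ell,r,\mathcal R) \in \mathcal S$ and an arrow $c \colon I \to K$ with $(a;d);f = \ell;c$, $a' = r;c$, and $\mathcal A \models \mathcal R_{\downarrow c}$. Expanding the right-hand side identically, $a \cstep{d;f}{\mathcal A} a'$ holds iff there exist a rule $(\ell,r,\mathcal R) \in \mathcal S$ and an arrow $c$ with $a;(d;f) = \ell;c$, $a' = r;c$, and $\mathcal A \models \mathcal R_{\downarrow c}$. It is worth recording the typing so the two readings make sense: writing $a \colon \obnull \to J'$, $d \colon J' \to J$, $f \colon J \to K$, both $(a;d);f$ and $a;(d;f)$ are arrows $\obnull \to K$, and in either case the context arrow lives in $c \colon I \to K$.

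The key step is then just associativity of composition in $\catname C$: since $(a;d);f = a;(d;f)$, the equation $(a;d);f = \ell;c$ is literally the same constraint as $a;(d;f) = \ell;c$. The two remaining conditions, $a' = r;c$ and $\mathcal A \models \mathcal R_{\downarrow c}$, are verbatim identical on both sides and refer only to $c$, not to $d$ or $f$ individually. Hence one and the same witnessing pair, consisting of a rule $(\ell,r,\mathcal R)$ and a context arrow $c$, certifies the left-hand context step exactly when it certifies the right-hand one, which yields the stated biconditional (and, indeed, the stronger statement that the two steps are witnessed by identical data).

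I do not expect any genuine obstacle: the lemma is an immediate consequence of associativity, and the only thing to verify is the domain/codomain bookkeeping above so that the shift $\mathcal R_{\downarrow c}$ and the equations type-check under both bracketings. The reason the statement is nonetheless singled out is its later role: it licenses moving a passive prefix $d$ of the state freely into the borrowed label, which is precisely the manipulation needed in the congruence and up-to-context arguments that follow.
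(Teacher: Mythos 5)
Your proposal is correct and follows essentially the same route as the paper's own proof: unfold \Cref{def-condcstep} on both sides and observe that associativity of composition makes the defining conditions identical, so the same rule and context arrow $c$ witness both steps. The extra typing bookkeeping you include is fine but not needed beyond what the paper states.
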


\begin{proof}
  According to \Cref{def-condcstep}, for a step
  $a;d \cstep{f}{\mathcal{A}} a^\prime$ there exists a rule
  $(\ell,r,\mathcal{R})$ such that
  $(a;d);f = \ell;c,\ a^\prime = r;c,\ \mathcal{A} \models
  \mathcal{R}_{\downarrow c}$ for some arrow $c$.  Since composition
  is associative, we rewrite this to $a;(d;f) = \ell;c$, which
  immediately results in the definition of
  $a \cstep{d;f}{\mathcal A} a^\prime$.
\end{proof}

We now extend (semi-)saturated bisimilarity to rules with conditions: 



\begin{defi}[(Semi-)Saturated bisimilarity~\cite{DBC-CRS}]\label{def-bisim-c}
  Let $\mathcal{S}$ be a conditional reactive system.
  A \emph{saturated bisimulation} is a relation $R$,
  relating arrows $a,b \colon \obnull \to J$, such that: for
  all $(a,b) \in R$ and for every context step
  $a \cstep{f}{\mathcal A} a^\prime$ there exist answering moves
  $b \cstep{f}{\mathcal{B}_i} b_i^\prime$, $i\in I$, such that
  $(a^\prime, b_i^\prime) \in R$ and
  $\mathcal{A} \models \bigvee_{i \in I} {\mathcal{B}_i}$, where $I$
  is a finite index set;
  and, vice versa, for every context step
  $b \cstep{f}{\mathcal B} b'$ there exist answering moves
  $a \cstep{f}{\mathcal A_j} a_j^\prime$, $j \in J$, such that
  $(a_j^\prime, b') \in R$ and
  $\mathcal B \models \bigvee_{j \in J} {\mathcal A_j}$.

  Two arrows $a, b$ are called \emph{saturated bisimilar}
  ($(a, b) \in \simC$) whenever there exists a saturated bisimulation
  $R$ with $(a, b) \in R$.  Similarly, for \emph{semi-saturated
    bisimilarity} we require that $\rightarrow_R$-steps of $a$ can be
  answered by $\rightarrow_C$-steps of $b$,
  and vice versa for $\rightarrow_R$-steps of $b$.
  Saturated and semi-saturated bisimilarity agree and both are
  congruences~\cite{DBC-CRS}.
\end{defi}

\section{Conditional Bisimilarity}%
\label{sec:condbisim}

We will now introduce our new results on conditional bisimilarity: as
stated earlier, our motivation is to extend the notion of saturated
bisimilarity, which is often too strict, since it requires that two system
states behave identically in all possible contexts. However, sometimes
it is enough to ensure behavioural equivalence only in specific
environments.

Hence we now replace standard bisimilarity, which is a binary
relation, by a ternary relation --- called conditional relation --- with
tuples of the form $(a,b,\mathcal{C})$.
Then, a conditional bisimulation is a conditional relation, where a tuple $(a,b,\mathcal C)$ can
be read as: $a,b$
are bisimilar in all contexts satisfying $\mathcal{C}$.




\subsection{Definition, Properties and Examples}

\begin{defi}[Conditional relation]\label{def-condrel}%
  A \emph{conditional relation} is a set of triples
  $(a,b,\mathcal C)$, where $a,b \colon \obnull \to J$ are arrows with
  identical target and $\mathcal C$ is a condition over~$J$.
\end{defi}

Note that for a triple $(a,b,\mathcal C)$,
the root object of the condition $\mathcal C$ is not the source of $a$
(as is the case for satisfaction), but the target $\codom(a)$.  This
is because we do not state a condition on the arrows $a,b$ themselves,
but on the context in which they are embedded ($a;f$ resp.\ $b;f$ for
some context $f$), so the condition is over $\dom(f) = \codom(a) = \codom(b)$.

\begin{defi}[Closure under contextualization, $u(R)$, conditional congruence]%
\label{def-condrel-u}%

  If $R$ is a conditional relation, then:
  \begin{itemize}
  \item
    $R$ is \emph{reflexive} if
    $(a,a,\mathcal C) \in R$ for all $a, \mathcal C$ with $\codom(a)=\Ro(\mathcal C)$
  \item
    $R$ is \emph{symmetric} if
    $(a,b,\mathcal C)\in R$ implies $(b,a,\mathcal C)\in R$
  \item
    $R$ is \emph{transitive} if
    $(a,b,\mathcal C) \in R$ and $(b,c,\mathcal C) \in R$ implies
    $(a,c,\mathcal C) \in R$
  \item
    $R$ is \emph{closed under contextualization} if
    $(a,b,\mathcal C) \in R$ implies $(a;d, b;d, \mathcal C_{\downarrow d}) \in R$
  \item
    $R$ is a \emph{conditional congruence} if
    it is an equivalence (reflexive, symmetric, transitive) and closed under contextualization
  \end{itemize}

  \noindent
  For a conditional relation $R$, $u(R)$ is its closure under contextualization, that is,\linebreak
  \mbox{$u(R) \defeq \{ (a;d, b;d, \mathcal{C}_{\downarrow d}) \mid (a,b,\mathcal{C}) \in R,
  \allowbreak\ a,b \colon \obnull \to J ,\  d \colon J \to K \}$.}
\end{defi}

Closure under contextualization means that whenever $a,b$ are
related under a context satisfying $\mathcal{C}$, then they are
still related when we contextualize under $d$, where however the
condition has to be shifted since we commit to the fact that the
context is of the form $d;c$ for some additional context $c$.

We will now introduce one of the central definitions of this paper. Here,
given a conditional reactive system we will describe when two arrows
$a,b$ are bisimilar in all contexts that satisfy a condition
$\mathcal{C}$.

\begin{defi}[Conditional bisimulation]\label{def-cb-c}%
  We fix a conditional reactive system.
A conditional bisimulation $R$ is a conditional relation such that the following holds:
for each triple $(a,b,\mathcal{C}) \in R$ and each context step $a \cstep{f}{\mathcal A} a^\prime$,
there are answering steps $b \cstep{f}{\mathcal{B}_i} b_i^\prime$,
$i\in I$
(where $I$ is possibly infinite),
and conditions $\mathcal{C}_i^\prime$
such that $(a^\prime, b_i^\prime, \mathcal{C}_i^\prime) \in R$
and $\mathcal{A} \land \mathcal{C}_{\downarrow f} \models \bigvee_{i \in I} \left( \mathcal{C}_i^\prime \land \mathcal{B}_i \right)$;
and vice versa\footnote{For each triple $(a,b,\mathcal{C}) \in R$ and each context step $b \cstep{f}{\mathcal B} b'$,
there are answering steps $a \cstep{f}{\mathcal{A}_j} a_j^\prime$
and conditions $\mathcal{C}_j^\prime$
such that $(a_j', b', \mathcal{C}_j^\prime) \in R$
and $\mathcal{B} \land \mathcal{C}_{\downarrow f} \models \bigvee_{j \in J} \left( \mathcal{C}_j^\prime \land \mathcal{A}_j \right)$}.
Two arrows are \emph{conditionally bisimilar \mbox{under $\mathcal C$}} ($(a, b, \mathcal{C})
\in \csimC$) whenever a conditional bisimulation $R$ with $(a, b,
\mathcal{C}) \in R$ exists.\footnote{Note that since conditional
  bisimulations are closed under union, $\csimC$ is itself a
  conditional bisimulation.}
\end{defi}
\begin{figure}[ht]%
\centering\begin{tikzpicture}

  \def\sq{1.75}

  \node (tlempty) at (0,0) {$\obnull$};
  \node (i) at (1*\sq,0) {$I$};
  \node (trempty) at (2*\sq,0) {$\obnull$};
  \node (j) at (0,-1*\sq) {$J$};
  \node (k) at (1*\sq,-1*\sq) {$K$};
  \node (blempty) at (0,-2*\sq) {$\obnull$};
  \node (ii) at (1*\sq,-2*\sq) {$I_i$};
  \node (brempty) at (2*\sq,-2*\sq) {$\obnull$};
  \draw[->] (tlempty) -- node[above]{$\ell$} (i);
  \draw[->] (trempty) -- node[above]{$r$} (i);
  \draw[->] (trempty) -- node[below,pos=0.25]{$a^\prime$} (k.30);
  \draw[->] (blempty) -- node[above]{$\ell_i$} (ii);
  \draw[->] (brempty) -- node[above]{$r_i$} (ii);
  \draw[->] (brempty) -- node[above,pos=0.25]{$b_i^\prime$} (k.-30);

  \draw[->] (tlempty) -- node[left]{$a$} (j);
  \draw[->] (blempty) -- node[left]{$b$} (j);
  \draw[->] (i) -- node[left]{$c$} (k);
  \draw[->] (ii) -- node[left]{$e_i$} (k);

  \draw[->] (j) -- node[above]{$f$} (k);

  \node[condtri,dart tip angle=60,shape border rotate=270] at (i.north) {$\mathcal{R}$};
  \node[condtri,dart tip angle=60,shape border rotate=0] at (j.west) {$\kern 0.5mm \mathcal{C}$};
  \node[condtri,dart tip angle=60,shape border rotate=90] at (ii.south) {$\kern -0.5mm\mathcal{R}_i\kern -0.5mm$};
  \node[rotate around={5:(k.center)},condtri,shape border rotate=180] at (k.east) {\kern-3pt\raisebox{0pt}[0.5\baselineskip][0.2\baselineskip]{$\mathcal{C}_i^\prime$}};
  \node[rotate around={-20:(k.center)},condtri,shape border rotate=270,scale=0.9] at (k.80) {$\mathcal{A}$};
  \node[rotate around={20:(k.center)},condtri,shape border rotate=90,scale=0.9] at (k.-80) {\raisebox{0pt}[1.8pt][1.5pt]{$\mathcal{B}_i$}};

  \node (pcdst) at (2.25*\sq,-1.4*\sq) {};
  \draw[->,fancydotted] (k.-10) -- node[above,pos=0.85]{$d$} (pcdst);
\end{tikzpicture}%
\caption{A context step $a \protect\cstep{f}{\mathcal A} a'$ (top half) and a single answer step $b \protect\cstep{f}{\mathcal B_i} b_i^\prime$ (bottom half) in conditional bisimulation}%
\label{fig-cbdef}
\end{figure}

The situation for one answer step is depicted in \Cref{fig-cbdef}.
Since the definition is rather complex, we will discuss its various
aspects in the following remarks.

\begin{rem}[Logical implication]
In \Cref{def-cb-c}, the implication $\mathcal A \land \mathcal C_{\downarrow f} \models \bigvee_{i \in I} (\mathcal C_i^\prime \land \mathcal B_i)$
is to be understood as follows:
For every step, we have a borrowed context~$f$ and an additional passive context $d$ (as explained below \Cref{def-condcstep}).
The condition~$\mathcal C$ from the triple refers to the full context of $a$ (i.e.\ both the borrowed context $f$ and the passive context $d$, hence $f;d \models \mathcal C$ or equivalently $d \models \mathcal C_{\downarrow f}$),
while~$\mathcal A$, coming from the context step, only refers to the passive context $d$ (hence $d \models \mathcal A$).

Every environment $d$ that is valid for the context step of $a$
(i.e.\ which satisfies $\mathcal A \land \mathcal C_{\downarrow f}$)
must also be valid for some answering step of $b$, i.e.\ satisfies at least one $\mathcal B_i$.
Depending on the context, different answering steps may be chosen, and the resulting pair $a',b'_i$ might only be conditionally bisimilar for some contexts, which is indicated by the condition $\mathcal C_i^\prime$.
\end{rem}

\begin{rem}[Necessity of multiple answering steps]
As for saturated bisimilarity,
we need to allow several answering moves for a single
step of $a$: the answering step taken by $b$ might depend on the
context, using different rules for contexts satisfying different
conditions $\mathcal{B}_i$. We just have to ensure that all answering
step conditions taken together (disjunction on the right-hand side) fully cover the conditions under which the
step of $a$ is feasible (left-hand side).
As an example for this, consider the following example (originally presented in~\cite[remark after Definition 15]{DBC-CRS}).
Assume three rules:
  \[\bgroup%
  \renewcommand{\arraystretch}{1.5}%
  \setlength{\arraycolsep}{0.3pt}%
  \begin{array}{rlrccclr} 
    1.\ &  R_A   &{}= \big(
      \emptyset \rightarrow & \inlinechan{chan=a}  & \leftarrow \inlinechan{},\  %
      \emptyset \rightarrow & \inlinechan{} & \leftarrow \inlinechan{},
      \condtrue_{\mathifnodessubscript} \big)
      &\text{(unconditionally delete an $a$-edge)} \\
    2.\ & R_{B1} &{}= \big(
      \emptyset \rightarrow & \inlinechan{chan=b}  & \leftarrow \inlinechan{},\  %
      \emptyset \rightarrow & \inlinechan{} & \leftarrow \inlinechan{},\ %
      \mathcal A_q \big)
      &\text{\parbox[t]{61.5mm}{(delete a $b$-edge if the context satisfies some condition $\mathcal A_q$)}} \\
    3.\ & R_{B2} &{}= \big(
      \emptyset \rightarrow & \inlinechan{chan=b}  & \leftarrow \inlinechan{},\  %
      \emptyset \rightarrow & \inlinechan{} & \leftarrow \inlinechan{},\ %
      \neg\mathcal A_q \big)
      &\text{\parbox[t]{61.5mm}{(delete a $b$-edge in contexts not satisfying $\mathcal A_q$)}}
  \end{array}\egroup\]
(Hence rules 2 and 3 together allow a $b$-edge to be deleted in any context, since every context satisfies either $\mathcal A_q$ or $\neg\mathcal A_q$.
The condition $\mathcal A_q$ can be chosen arbitrarily, as long as it
is not equivalent to $\condtrue$ or $\condfalse$.)
Then, an $a$-edge is conditionally bisimilar to a $b$-edge under $\condtrue$ (all contexts):
a step that deletes the $a$-edge can be answered by deleting the $b$-edge,
but depending on the context that the step happens in, a different rule has to be chosen:
two answering steps with $\mathcal B_1 = \mathcal A_q,\ \mathcal B_2 = \neg\mathcal A_q$ are required, and together they cover $\condtrue$.
(The other direction --- deleting the $b$-edge using either rule 2 or 3 being answered by deleting the $a$-edge using rule 1 --- does not require multiple answering steps in this example.)
\end{rem}

\begin{rem}[Infinitely many answering steps]
  The definition explicitly permits an infinite index set $I$ for the
  answering steps (this is in contrast to saturated bisimilarity, cf.\
  \Cref{def-bisim-c}, which required finite $I$). If we do not
  consider conditional bisimilarity and the finiteness assumption
  (\textsc{Fin}) holds, it does not make a difference whether we
  consider finite or infinite index sets, since there are only
  finitely many possible answering steps~\cite{DBC-CRS}. However, in
  the presence of conditions, it might make a difference.

  Since the logic does not support infinite disjunctions,
  $\mathcal A \land \mathcal C_{\downarrow f} \models \bigvee_{i \in
    I} {\mathcal C_i' \land \mathcal B_i}$ means that for every $d$
  with $d \models \mathcal A \land \mathcal C_{\downarrow f}$, there
  exists $i \in I$ such that
  $d \models \mathcal C_i' \land \mathcal B_i$.

  In many practical applications, it may be useful to restrict to a
  variant of the definition that permits only finitely many answering
  steps.  Our theorems are valid for either variant (finite or
  infinite), except for the proof of \Cref{cec-cb} which in its
  current version requires $I$ to be infinite.
%
\end{rem}

\def\AIneN{\mathcal A_{\overline{n}}}%
\def\AUneN{\mathcal A_{U \nexists n}}%
\begin{exa}[Message passing over unreliable channels]\label{ex-cb-unreliable}
We now work in the category of input-linear cospans of graphs, i.e.,
$\ILC(\catname\Graphfin)$.

We extend our previous example (cf.\ \Cref{ex-cb-unreliable-pre}) of networked
nodes, introducing different types of channels.
A channel can be reliable or unreliable, indicated by an $\relLabel$-edge or $\unrelLabel$-edge respectively.
Sending a message over a reliable channel always succeeds (rule~$P_R$), while an unreliable channel only transmits a message if there is no noise (indicated by a parallel $\noiseLabel$-edge) in the environment that disturbs the transmission (rule~$P_U$).
The~reactive system has the following rules with application conditions, where condition $\AUneN$ states that the unreliable channel must not have an $\noiseLabel$-edge in parallel:

\begin{tikzpicture}[gedge/.append style={font=\footnotesize}]
  \def\drawthebox#1#2{
    \node[anchor=center] (#2) at (0,0) [draw,roundbox,minimum width=2cm, minimum height=1.25cm] {};
    \node[anchor=west] at (-1cm,-0.35) {$\scriptstyle{#1}$};
  }
  \def\smolnodes{
    \node[gn] (l) at (-0.35,0.05) {};
    \node[gn] (r) at (0.35,0.05) {};
  }
  \def\smolchan#1{
    \smolnodes
    \draw[gedge] (l) to node[below]{$#1$} (r);
  }
  \def\smolchanSh{\begin{scope}[shift={(-0.25,0)}]\smolchan\end{scope}}
  \def\leftmloopt{\draw[gedge] (l.120) to[loop,out=120+30,in=120-30,distance=0.3cm] node[left]{$m\mkern-5mu$} (l.90);}
  \def\rightmloopt{\draw[gedge] (r.90) to[loop,out=60+30,in=60-30,distance=0.3cm] node[right]{$\mkern-2mu m$} (r.60);}
  \def\PRPUrule#1#2#3#4{
  \begin{scope}[shift={(0,#1)}]
    \begin{scope}[shift={(-2,-0.125)}]
      \node (leftnull) at (0,0) [draw,roundbox,minimum width=0.5cm, minimum height=1.25cm] {};
    \end{scope}

    \begin{scope}[shift={(0,-0.125)}]
      \smolchan{#2}\leftmloopt
      \drawthebox{}{leftbox}
    \end{scope}

    \begin{scope}[shift={(3,-0.125)}]
      \smolchan{#2}
      \drawthebox{{#3}_0}{ifbox}
    \end{scope}

    \begin{scope}[shift={(6,-0.125)}]
      \smolchan{#2}\rightmloopt
      \drawthebox{}{rightbox}
    \end{scope}

    \begin{scope}[shift={(8,-0.125)}]
      \node (rightnull) at (0,0) [draw,roundbox,minimum width=0.5cm, minimum height=1.25cm] {};
    \end{scope}

    \node[inner xsep=0cm,anchor=east] at (-2.4,-0.125) {$P_{#3} = \Bigg($};
    \node[inner xsep=0cm,anchor=west] at (8.4,-0.125) {$, #4 \Bigg)$};
    \draw[cospanint] (leftnull) -- (leftbox);
    \draw[cospanint] (ifbox) -- (leftbox);
    \draw[cospanint] (ifbox) -- (rightbox);
    \draw[cospanint] (rightnull) -- (rightbox);
  \end{scope}
  }
  \PRPUrule0{\relLabel}R{\condtrue_{R_0}}
  \PRPUrule{-1.5}{\unrelLabel}U{\AUneN}

  \draw[cospanarr] (-2,0.7) to[bend left=20] node[below,pos=0.65]{$\ell$} (3-0.25,0.85);
  \draw[cospanarr] (8,0.7) to[bend right=20] node[below,pos=0.65]{$r$} node[above]{} (3+0.25,0.85);
  \node at (0,0.8) {$L$};
  \node at (3,0.8) {$I$};
  \node at (6,0.8) {$R$};

  \begin{scope}[shift={(0,-3.2)}]
    \node[inner xsep=0cm,anchor=east] at (-2.4,-0.125) {$\AUneN = \Big($};
    \node[inner xsep=0cm,anchor=west] at (-2.4,-0.125) {$U_0, \forall, \Big\{ \Big( $};
    \def\smolnodes{
      \node[gn] (l) at (-0.35,0.05) {};
      \node[gn] (r) at (0.35,0.05) {};
    }
    \begin{scope}[shift={(0.3,-0.125)}]
      \smolchan{\unrelLabel}
      \drawthebox{U_0}{leftbox}
    \end{scope}

    \begin{scope}[shift={(3.3,-0.125)}]
      \smolchan{\unrelLabel}
      \draw[gedge] (l.45) to[bend left=10] node[above]{$\noiseLabel$} (r.135);
      \drawthebox{U_N}{ifbox}
    \end{scope}

    \begin{scope}[shift={(6.3,-0.125)}]
      \smolchan{\unrelLabel}
      \drawthebox{U_0}{rightbox}
    \end{scope}
    \draw[cospanint] (leftbox)  -- (ifbox);
    \draw[cospanint] (rightbox) -- (ifbox);
    \node[inner xsep=0cm,anchor=west] at (7.5,-0.125) {$, \condfalse_{U_0} \Big) \Big\} \Big) $};
  \end{scope}
\end{tikzpicture}

Hence the application condition $\AUneN$ says that the
context must \emph{not} be decomposable into
$U_0\rightarrow U_N\leftarrow U_0$ and some other cospan, which
is only the case if the $\unrelLabel$-edge in the interface of $P_U$ has no parallel $\noiseLabel$-edge. In other words:
there is no noise.

We compare the behaviour of a reliable channel
($r \defeq \emptyset \rightarrow \inlinechan{reliable} \leftarrow \inlinechan{}$)
to that of an unreliable channel
($u \defeq \emptyset \rightarrow \inlinechan{unreliable} \leftarrow \inlinechan{}$).
It is easy to see that they are not saturated bisimilar: $r$ can do
a step by borrowing a message on the left
($f \defeq \inlinechan{} \rightarrow \inlinechan{msgleft} \leftarrow \inlinechan{}$) without further
restrictions (i.e.\ using an environment condition
$\mathcal A = \condtrue$). But $u$ is unable to answer this step,
because the corresponding rule is only applicable if no $\noiseLabel$-edge is
present.

However, $r$ and $u$ are conditionally bisimilar under the
assumption that no $\noiseLabel$-edge is present between the two nodes
($\mathcal C = \AIneN$, where $\AIneN \defeq \big( \inlinechan{}, \forall, \big\{ (
{\inlinechan{} \rightarrow \inlinechan{noise} \leftarrow \inlinechan{}},
\allowbreak\condfalse_{\mathifnodessubscript} ) \big\} \big)$),
i.e.\ there exists a conditional bisimulation that contains
$(r, u, \AIneN)$.  A direct proof is hard, since the
proof involves checking infinitely many context steps, since messages
accumulate on the right-hand side.  However, in
\Cref{ex-cb-repr-unreliable} we will use an argument based on
representative steps to construct a proof.
\end{exa}

\begin{rem}[Condition strengthening]\label{cond-strengthening}
  It holds that $(a,b,\mathcal C')\in \csimC$,
  $\mathcal C \models \mathcal C'$ implies
  $(a,b,\mathcal{C})\in \csimC$. (This is due to the fact that
  $\mathcal C \models \mathcal C'$ implies
  $\mathcal C_{\downarrow f}\models \mathcal C'_{\downarrow f}$ which,
  in \Cref{def-cb-c}, implies
  $\mathcal A \land \mathcal C_{\downarrow f}\models \mathcal A \land
  \mathcal C'_{\downarrow f}$ for any condition $\mathcal A$ and arrow
  $f$.)
\end{rem}

Our motivation for introducing the notion of conditional bisimilarity
was to check whether two systems are behaviourally equivalent when
they are put into a context that satisfies some condition
$\mathcal C$.  It is not immediately obvious that our definition can
be used for this purpose, since all context steps are checked, not
just the ones that actually satisfy $\mathcal C$.

Hence we now show that our definition is sound, i.e.\ if two systems
are conditionally bisimilar, then they show identical behaviour under
all contexts that satisfy $\mathcal C$.

\begin{thm}\label{satz-brauchbar}%
Let $R$ be a conditional bisimulation. Then
$ R^\prime = \{ (a;d, b;d) \mid (a,b,\mathcal{C}) \in R \allowbreak\land d \models \mathcal{C} \} $
is a bisimulation for the reaction relation
$\leadsto$.
\end{thm}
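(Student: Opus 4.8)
The plan is to verify the defining property of a bisimulation for $\leadsto$ directly, by translating each reaction into a context step whose \emph{passive} context is the identity. First I would unfold the hypotheses: a pair in $R'$ has the form $(a;d, b;d)$ with $(a,b,\mathcal C)\in R$ and $d\colon J\to K$ satisfying $d\models\mathcal C$, and a reaction $a;d\leadsto x'$ means there is a rule $(\ell,r,\mathcal R)$ and a context $c\colon I\to K$ with $a;d=\ell;c$, $x'=r;c$ and $c\models\mathcal R$. It suffices (the symmetric ``vice versa'' clause of \Cref{def-cb-c} handles reactions of $b;d$) to produce $y'$ with $b;d\leadsto y'$ and $(x',y')\in R'$.

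Next I would repackage the reaction as a context step of $a$. Taking borrowed context $d$ and $\mathcal A\defeq\mathcal R_{\downarrow c}$, the equalities $a;d=\ell;c$, $x'=r;c$ together with $\mathcal R_{\downarrow c}\models\mathcal R_{\downarrow c}$ give exactly $a\cstep{d}{\mathcal R_{\downarrow c}}x'$. The crucial observation is that the passive context realising this step is $\id_K$: by the shift equivalence $c;e\models\mathcal R\iff e\models\mathcal R_{\downarrow c}$ of \Cref{shift-def}, instantiated at $e=\id_K$, the hypothesis $c\models\mathcal R$ is exactly $\id_K\models\mathcal R_{\downarrow c}=\mathcal A$; similarly $d\models\mathcal C$ rewrites as $\id_K\models\mathcal C_{\downarrow d}$. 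Hence $\id_K\models\mathcal A\land\mathcal C_{\downarrow d}$.

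Then I would apply conditional bisimilarity to $(a,b,\mathcal C)$ and this step, obtaining answering steps $b\cstep{d}{\mathcal B_i}b_i'$ with $(x',b_i',\mathcal C_i')\in R$ and $\mathcal A\land\mathcal C_{\downarrow d}\models\bigvee_i(\mathcal C_i'\land\mathcal B_i)$. Reading this implication pointwise (as in the remark following \Cref{def-cb-c}) and evaluating at $\id_K$, there is an index $i$ with $\id_K\models\mathcal C_i'\land\mathcal B_i$. For this $i$ the answering context step provides a rule $(\ell_i,r_i,\mathcal R_i)$ and a context $e_i$ with $b;d=\ell_i;e_i$, $b_i'=r_i;e_i$ and $\mathcal B_i\models(\mathcal R_i)_{\downarrow e_i}$; from $\id_K\models\mathcal B_i$ we get $\id_K\models(\mathcal R_i)_{\downarrow e_i}$, i.e.\ $e_i\models\mathcal R_i$ by the same shift equivalence. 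Thus $b;d\leadsto b_i'$ is a genuine reaction. Finally, $(x',b_i',\mathcal C_i')\in R$ together with $\id_K\models\mathcal C_i'$ yields $(x',b_i')=(x';\id_K,b_i';\id_K)\in R'$, so $y'\defeq b_i'$ is the required answer; the converse direction is symmetric.

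I expect the main obstacle to be conceptual rather than computational: realising that a plain reaction must be encoded as a context step with the \emph{identity} as passive context, and that all three satisfaction facts in play ($c\models\mathcal R$, $d\models\mathcal C$, and the genuineness $e_i\models\mathcal R_i$ of the chosen answer) are obtained by instantiating the shift equivalence at $\id_K$. The one genuinely delicate point is that \Cref{def-cb-c} only guarantees \emph{context steps} as answers, which need not be reactions on their own; their genuineness is recovered precisely from $\id_K\models\mathcal B_i$, which is why the answer condition $\mathcal B_i$ must be carried through the disjunction rather than discarded.
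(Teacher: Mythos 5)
Your proposal is correct and follows essentially the same route as the paper's proof: encode the reaction $a;d\leadsto a'$ as a context step $a\cstep{d}{\mathcal R_{\downarrow c}}a'$, use the shift equivalence to turn $c\models\mathcal R$ and $d\models\mathcal C$ into $\id_K\models\mathcal R_{\downarrow c}\land\mathcal C_{\downarrow d}$, evaluate the disjunction of answer conditions at $\id_K$ to pick a suitable index $i$, and recover a genuine reaction $b;d\leadsto b_i'$ from $\id_K\models\mathcal B_i$ via the same shift equivalence. Your handling of the answer conditions is even slightly more careful than the paper's (you keep $\mathcal B_i\models(\mathcal R_i)_{\downarrow e_i}$ explicit rather than setting $\mathcal B_i=\mathcal R_{i\downarrow e_i}$), but the argument is the same.
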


\begin{proof}
To prove that $R'$ is a bisimulation, we need to show that
if $a;d \mathrel{R^\prime} b;d$ and $a;d \leadsto a^\prime$, then there exists $b^\prime$ such that $b;d \leadsto b^\prime$ and $a^\prime \mathrel{R^\prime} b^\prime$;
and vice versa.
Equivalently, if $a;d$ can do a step, then $b;d$ can answer this step (and vice versa) and the result is again contained in the bisimulation $R'$.
We show only one direction ($a;d$ answered by $b;d$), the other one can be done analogously.

Now let some $(a;d, b;d) \in R^\prime$ be given, for which there must exist a triple $(a, b, \mathcal{C}) \in R$.
Consider a step $a;d \leadsto a^\prime$. This step is due to some rule $(\ell, r, \mathcal{R})$, shown graphically in \Cref{fig-brauchbar-ctxstep}.

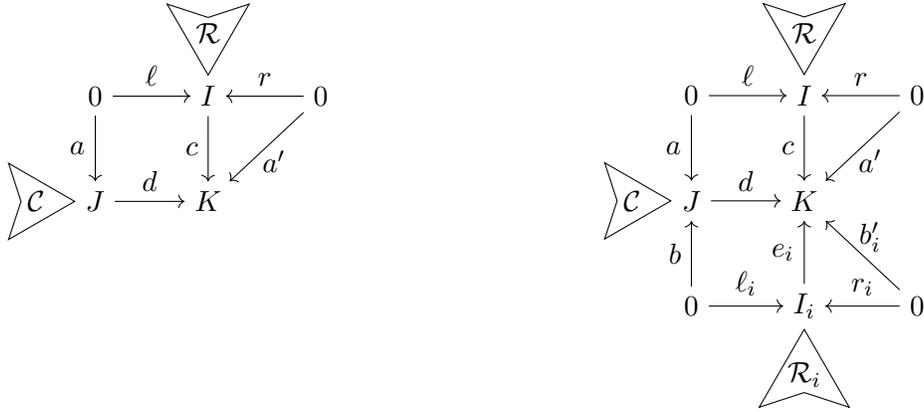
\begin{figure}[hb]
  \begin{subfigure}[t]{0.48\textwidth}
    \centering\begin{tikzpicture}
      \def\sqw{1.5}
      \def\sqh{1.4}

      \node (tlempty) at (0,0) {$\obnull$};
      \node (i) at (1*\sqw,0) {$I$};
      \node (trempty) at (2*\sqw,0) {$\obnull$};
      \node (j) at (0,-1*\sqh) {$J$};
      \node (k) at (1*\sqw,-1*\sqh) {$K$};
      \draw[->] (tlempty) -- node[above]{$\ell$} (i);
      \draw[->] (trempty) -- node[above]{$r$} (i);
      \draw[->] (trempty) -- node[below,pos=0.4]{$a^\prime$} (k);

      \draw[->] (tlempty) -- node[left]{$a$} (j);
      \draw[->] (i) -- node[left]{$c$} (k);

      \draw[->] (j) -- node[above]{$d$} (k);

      \node[condtri,dart tip angle=60,shape border rotate=270] at (i.north) {$\mathcal{R}$};
      \node[condtri,dart tip angle=60,shape border rotate=0] at (j.west) {$\kern 0.5mm \mathcal{C}$};
      \node (ii) at (1*\sqw,-2*\sqh) {$\phantom{I_i}$};
      \node[condtri,dart tip angle=60,shape border rotate=90,draw=none] at (ii.south) {$\kern -0.5mm\phantom{\mathcal{R}_i}\kern -0.5mm$};
    \end{tikzpicture}%
    \caption{A reaction $a;d \leadsto a'$ for some context $d$ that satisfies $\mathcal C$.}%
    \label{fig-brauchbar-ctxstep}%
  \end{subfigure}\hfill%
  \begin{subfigure}[t]{0.48\textwidth}
    \centering\begin{tikzpicture}
      \def\sqw{1.5}
      \def\sqh{1.4}

      \node (tlempty) at (0,0) {$\obnull$};
      \node (i) at (1*\sqw,0) {$I$};
      \node (trempty) at (2*\sqw,0) {$\obnull$};
      \node (j) at (0,-1*\sqh) {$J$};
      \node (k) at (1*\sqw,-1*\sqh) {$K$};
      \node (blempty) at (0,-2*\sqh) {$\obnull$};
      \node (ii) at (1*\sqw,-2*\sqh) {$I_i$};
      \node (brempty) at (2*\sqw,-2*\sqh) {$\obnull$};
      \draw[->] (tlempty) -- node[above]{$\ell$} (i);
      \draw[->] (trempty) -- node[above]{$r$} (i);
      \draw[->] (trempty) -- node[below,pos=0.4]{$a^\prime$} (k);
      \draw[->] (blempty) -- node[above]{$\ell_i$} (ii);
      \draw[->] (brempty) -- node[above]{$r_i$} (ii);
      \draw[->] (brempty) -- node[above,pos=0.4]{$b_i^\prime$} (k);

      \draw[->] (tlempty) -- node[left]{$a$} (j);
      \draw[->] (blempty) -- node[left]{$b$} (j);
      \draw[->] (i) -- node[left]{$c$} (k);
      \draw[->] (ii) -- node[left]{$e_i$} (k);

      \draw[->] (j) -- node[above]{$d$} (k);

      \node[condtri,dart tip angle=60,shape border rotate=270] at (i.north) {$\mathcal{R}$};
      \node[condtri,dart tip angle=60,shape border rotate=0] at (j.west) {$\kern 0.5mm \mathcal{C}$};
      \node[condtri,dart tip angle=60,shape border rotate=90] at (ii.south) {$\kern -0.5mm\mathcal{R}_i\kern -0.5mm$};
    \end{tikzpicture}%
    \caption{The same step, together with answering steps provided by the conditional bisimulation.}%
    \label{fig-brauchbar-answers}%
  \end{subfigure}%
  \caption{The steps considered in the proof of \Cref{satz-brauchbar}}%
  \label{fig-brauchbar-proof}
\end{figure}

We have $c \models \mathcal R$ (otherwise the rule would not be applicable and therefore the step $a;d \leadsto a'$ would not be possible)
and $d \models \mathcal C$ (follows from the given $(a;d, b;d) \in R^\prime$ by construction of $R'$).
To make them usable for the answering steps, we transform $\mathcal R, \mathcal C$ to be conditions over $K$.
Trivially $d = d;\id_K$, so using \Cref{shift-law-def} we rewrite $d;\id_K \models \mathcal{C}$ to $\id_K \models \mathcal{C}_{\downarrow d}$.
Analogously, we rewrite $c;\id_K \models \mathcal{R}$ to $\id_K \models \mathcal{R}_{\downarrow c}$.

We set $\mathcal{A} \defeq \mathcal{R}_{\downarrow c}$ and interpret this diagram as $a \cstep{d}{\mathcal{A}} a^\prime$.
Since $R$ is a conditional bisimulation and $(a,b,\mathcal{C}) \in R$, $b$ can answer the step of $a$, using one of possibly several rules $(\ell_i, r_i, \mathcal{R}_i)$ depending on the given context $d$.
Setting $\mathcal{B}_i \defeq \mathcal{R}_{i \downarrow e_i}$, we get steps $b \cstep{d}{\mathcal{B}_i} b_i^\prime$ and we extend the diagram as shown in \Cref{fig-brauchbar-answers}.

Generally, not every answering step that is possible for our given triple $(a,b,\mathcal{C})$ is a suitable answering step for the given context $d$.
But since $R$ is a conditional bisimulation, we know that $\mathcal{R}_{\downarrow c} \land \mathcal{C}_{\downarrow d} \models \bigvee_{i \in I} \left( \mathcal{C}_i^\prime \land \mathcal{B}_i \right)$.
Previously we derived $\id_K \models \mathcal{R}_{\downarrow c} \land \mathcal{C}_{\downarrow d}$.
Therefore, $\id_K$ also satisfies $\bigvee_{i \in I} \left( \mathcal{C}_i^\prime \land \mathcal{B}_i \right)$, that is, $\id_K$ satisfies $\mathcal{C}_i^\prime \land \mathcal{B}_i = \mathcal{C}_i^\prime \land \mathcal{R}_{i \downarrow e_i}$ for some $i$. From now on, we only consider answering steps for which this is indeed the case.

Using \Cref{shift-law-def} we rewrite $\id_K \models \mathcal{R}_{i \downarrow e_i}$ to $e_i;\id_K = e_i \models \mathcal{R}_i$, which means that the rule $(\ell_i, r_i, \mathcal{R}_i)$ can actually be applied, that is, $b;d \leadsto b_i^\prime = r_i;e_i$. So $b$ has a suitable answering step.

To show that $R'$ is a bisimulation, we only have to show that $(a^\prime, b_i^\prime) \in R'$.
As $R$ is a conditional bisimulation, for the given answering step we know that $(a^\prime, b_i^\prime, \mathcal{C}_i^\prime) \in R$.
Previously we had $\id_K \models \mathcal{C}_i^\prime$, therefore, the requested pair $(a^\prime;\id_K, b_i^\prime;\id_K) = (a^\prime, b_i^\prime)$ is added during the construction of $R'$.
\end{proof}

\begin{rem}\label{converse-brauchbar}
Note that the converse of \Cref{satz-brauchbar} (if $R'$ is a bisimulation, then $R$ is a conditional bisimulation) does not hold.
Consider the following counterexample:
  \[\bgroup%
  \renewcommand{\arraystretch}{1.5}%
  \setlength{\arraycolsep}{0.1em}%
  \begin{array}{rlccccclccccl}
    R_1    &= \big( & \inlinenlg{} & \rightarrow & \inlinenlg{a,x,fw} & \leftarrow & \inlinenlg{x,fw}&,
    \                 \inlinenlg{} & \rightarrow & \inlinenlg{e,x,fw} & \leftarrow & \inlinenlg{x,fw}&,\  \condtrue_X \big) \\
    R_2    &= \big( & \inlinenlg{} & \rightarrow & \inlinenlg{b,fw}   & \leftarrow & \inlinenlg{fw}&,
    \                 \inlinenlg{} & \rightarrow & \inlinenlg{e,fw}   & \leftarrow & \inlinenlg{fw}&,\  \mathcal A_{\exists X} \big) \\
    \mathcal A_{\exists X} &= \mathrlap{\Big(
      \inlinenlg{}, \exists, \big\{ (
        \inlinenlg{} \rightarrow \inlinenlg{x} \leftarrow \inlinenlg{x},\  \condtrue_X
      ) \big\}
    \Big)}
  \end{array}\egroup\]
  Here $X = \inlinenlg{x}$. In this case, an $a$-loop can be replaced with an $e$-loop if an $x$-loop is present, ensured by requiring (and retaining) it in the rule $R_1$.
  A $b$-loop can also be replaced with an $e$-loop, also if an $x$-loop is present, this time ensured by an application condition.

  Now consider the conditional relation
  $R = \{ (\inlinenlg{a}, \inlinenlg{b}, \condtrue) \} \cup \{ (G,G,\condtrue) \mid \text{$G$ is}\allowbreak\text{a graph} \}$\footnote{
  The reflexive triples $(G,G,\condtrue)$ are needed because
  $\inlinenlg{a}, \inlinenlg{b}$ can both be transformed to $\inlinenlg{e}$ in the presence of $\inlinenlg{x}$ and we require
  that the resulting (identical) graphs are related.}
  (all graphs are seen as cospans with empty interfaces)
  and the accompanying relation
  $R' = \{ (a;d, b;d) \mid (a,b,\mathcal C) \in R \land d \models \mathcal C \}$.

  Clearly, the graphs $\inlinenlg{a}$ and $\inlinenlg{b}$ are bisimilar under all contexts,
  and therefore $R'$ is a bisimulation:
  either the context contains an $x$-loop, then they both reduce to a graph that contains $\inlinenlg{e,x}$ and possibly further context (both steps reach the same graph),
  or the context does not contain an $x$-loop, in which case neither rule is applicable.

  However, $R$ is not a conditional bisimulation, the violating triple being $(\inlinenlg{a},\inlinenlg{b},\condtrue)$:
  the step $\inlinenlg{b} \cstep{\id}{\mathcal A_{\exists X}} \inlinenlg{e}$ cannot be answered by $\inlinenlg{a}$,
  since in $R_1$, the $x$-loop is directly participating in the
  reaction, but $\mathcal A_{\exists X}$ only guarantees its existence
  in a passive environment (i.e.\ it is \emph{not} participating in the reaction).
  ($\inlinenlg{a}$ could only do a step by borrowing $\inlinenlg{x}$, but this does not constitute a valid answering step for the step of $\inlinenlg{b}$ where $\id$ (i.e.\ no additional elements) has been borrowed.)
\end{rem}

Next, we will show that conditional bisimilarity $\csimC$ is a conditional congruence.
This is an important plausibility check, since reactive systems have
been introduced with the express purpose to define and reason about
bisimulation congruences.

\begin{lem}\label{cb-congruence}%
Conditional bisimilarity $\csimC$ is a conditional congruence.
\end{lem}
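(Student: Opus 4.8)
The plan is to verify the four defining properties of a conditional congruence from \Cref{def-condrel-u} one by one, in each case exhibiting a suitable conditional bisimulation and invoking the fact that $\csimC$ is the largest one (by the footnote to \Cref{def-cb-c} it is itself a conditional bisimulation and hence contains every conditional bisimulation). \emph{Reflexivity} and \emph{symmetry} are immediate. For reflexivity, I claim the diagonal $\{(a,a,\mathcal C) \mid \codom(a)=\Ro(\mathcal C)\}$ is a conditional bisimulation: a step $a \cstep{f}{\mathcal A} a^\prime$ is answered by the single identical step with index set $\{1\}$, $\mathcal B_1 = \mathcal A$, $b_1^\prime = a^\prime$ and $\mathcal C_1^\prime = \condtrue_K$, so that $(a^\prime,a^\prime,\condtrue_K)$ is again diagonal and $\mathcal A \land \mathcal C_{\downarrow f} \models \condtrue_K \land \mathcal B_1 \equiv \mathcal A$ holds trivially. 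For symmetry, \Cref{def-cb-c} is stated symmetrically (its ``vice versa'' clause), so whenever $R$ is a conditional bisimulation so is $R^{-1} = \{(b,a,\mathcal C) \mid (a,b,\mathcal C) \in R\}$; hence $(a,b,\mathcal C)\in\csimC$ yields $(b,a,\mathcal C)\in\csimC$.

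For \emph{closure under contextualization} I would show that $\csimC \cup u(\csimC)$ is a conditional bisimulation, which by maximality forces $u(\csimC) \subseteq \csimC$, i.e.\ exactly the required closure. The triples already in $\csimC$ cause no trouble, so fix a triple $(a;d,b;d,\mathcal C_{\downarrow d}) \in u(\csimC)$ arising from $(a,b,\mathcal C) \in \csimC$, together with a step $a;d \cstep{f}{\mathcal A} a^\prime$. By \Cref{lemma-ctxtrans-composition} this is the same as $a \cstep{d;f}{\mathcal A} a^\prime$, which $\csimC$ can answer by steps $b \cstep{d;f}{\mathcal B_i} b_i^\prime$ with $(a^\prime,b_i^\prime,\mathcal C_i^\prime) \in \csimC$ and $\mathcal A \land \mathcal C_{\downarrow d;f} \models \bigvee_i (\mathcal C_i^\prime \land \mathcal B_i)$. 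Rewriting each answer back via \Cref{lemma-ctxtrans-composition} gives $b;d \cstep{f}{\mathcal B_i} b_i^\prime$; the intermediate targets $(a^\prime,b_i^\prime,\mathcal C_i^\prime)$ lie in $\csimC$, and the shift law $\mathcal C_{\downarrow d;f} \equiv (\mathcal C_{\downarrow d})_{\downarrow f}$ from \Cref{shift-laws} turns the implication into exactly the one required for the triple $(a;d,b;d,\mathcal C_{\downarrow d})$, whose associated condition is $\mathcal C_{\downarrow d}$.

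The substantive case is \emph{transitivity}, where I would take the composition relation $T = \{(a,c,\mathcal C) \mid \exists b.\ (a,b,\mathcal C)\in\csimC \text{ and } (b,c,\mathcal C)\in\csimC\}$ and prove it is a conditional bisimulation. Given $(a,c,\mathcal C)\in T$ with witness $b$ and a step $a \cstep{f}{\mathcal A} a^\prime$, the first bisimulation supplies answers $b \cstep{f}{\mathcal B_i} b_i^\prime$ with $(a^\prime,b_i^\prime,\mathcal C_i^\prime)\in\csimC$ and $\mathcal A \land \mathcal C_{\downarrow f} \models \bigvee_i(\mathcal C_i^\prime\land\mathcal B_i)$; applying the second bisimulation to each such step of $b$ supplies answers $c \cstep{f}{\mathcal D_{ij}} c_{ij}^\prime$ with $(b_i^\prime,c_{ij}^\prime,\mathcal C_{ij}^{\prime\prime})\in\csimC$ and $\mathcal B_i \land \mathcal C_{\downarrow f} \models \bigvee_j(\mathcal C_{ij}^{\prime\prime}\land\mathcal D_{ij})$. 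I then answer $a$'s step by the whole doubly-indexed family $\{c \cstep{f}{\mathcal D_{ij}} c_{ij}^\prime\}_{i,j}$. To land each target back in $T$ I strengthen both intermediate conditions to their conjunction using \Cref{cond-strengthening}: from $\mathcal C_i^\prime \land \mathcal C_{ij}^{\prime\prime} \models \mathcal C_i^\prime$ and $\mathcal C_i^\prime\land\mathcal C_{ij}^{\prime\prime}\models\mathcal C_{ij}^{\prime\prime}$ I obtain $(a^\prime,b_i^\prime,\mathcal C_i^\prime\land\mathcal C_{ij}^{\prime\prime})\in\csimC$ and $(b_i^\prime,c_{ij}^\prime,\mathcal C_i^\prime\land\mathcal C_{ij}^{\prime\prime})\in\csimC$, so $(a^\prime,c_{ij}^\prime,\mathcal C_i^\prime\land\mathcal C_{ij}^{\prime\prime})\in T$ with witness $b_i^\prime$.

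The main obstacle is then the nested logical implication $\mathcal A \land \mathcal C_{\downarrow f} \models \bigvee_{i,j}(\mathcal C_i^\prime\land\mathcal C_{ij}^{\prime\prime}\land\mathcal D_{ij})$, which I would verify pointwise: any passive context $d$ with $d \models \mathcal A \land \mathcal C_{\downarrow f}$ satisfies some $\mathcal C_i^\prime\land\mathcal B_i$ by the first implication, hence $d \models \mathcal B_i \land \mathcal C_{\downarrow f}$ and so satisfies some $\mathcal C_{ij}^{\prime\prime}\land\mathcal D_{ij}$ by the second, giving $d \models \mathcal C_i^\prime\land\mathcal C_{ij}^{\prime\prime}\land\mathcal D_{ij}$ as required. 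The ``vice versa'' direction is entirely symmetric. Consequently $T$ is a conditional bisimulation, whence $(a,c,\mathcal C)\in T \subseteq \csimC$, establishing transitivity and completing the proof that $\csimC$ is a conditional congruence.
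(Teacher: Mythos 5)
Your proof is correct and follows essentially the same route as the paper's: the diagonal relation for reflexivity, the inverse relation for symmetry, the condition-strengthened composition relation with a pointwise check of the nested implication for transitivity, and an argument via \Cref{lemma-ctxtrans-composition} and the shift law $\mathcal C_{\downarrow d;f} \equiv (\mathcal C_{\downarrow d})_{\downarrow f}$ for closure under contextualization. The only cosmetic differences are your choice of $\condtrue_K$ instead of $\mathcal C_{\downarrow f}$ in the reflexivity case and phrasing contextualization through maximality of $\csimC$ applied to $\csimC \cup u(\csimC)$ rather than showing $u(R)$ is a conditional bisimulation directly; both are equally valid.
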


\begin{proof}
We show that $\csimC$ is:

\begin{description}
\csname@fleqntrue\endcsname\csname@mathmargin\endcsname=2.3cm
\item[reflexive] We prove that $R = \{(a,a,\mathcal{C})\mid \codom(a) =
  \Ro(\mathcal{C})\}$ is a conditional bisimulation.

  Any context step $a \cstep{f}{\mathcal A} a'$ can be trivially
  answered by the exact same step, setting
  $\mathcal B_1 \defeq \mathcal A,\ \mathcal C_1^\prime \defeq
  \mathcal C_{\downarrow f},\ b_1^\prime \defeq a'$, where
  $I = \{1\}$, and we have
  $(a',b'_1,\mathcal{C}'_1) = (a', a', \mathcal C_{\downarrow f}) \in
  R$.

\item[symmetric]
  Let $R$ be a conditional bisimulation.
  It is easily seen that $R^{-1}$, due to the symmetric nature of the definition, is also a conditional bisimulation.
  Then, $R \subseteq \csimC$ implies $R^{-1} \subseteq \csimC$,
  which proves symmetry of $\csimC$.

\item[transitive] Let $R_1, R_2$ be conditional bisimulations that are
  closed under condition strengthening, i.e.\ %
  $(a,b,\mathcal C')\in R_i$ and $\mathcal C \models \mathcal C'$
  implies $(a,b,\mathcal C) \in R_i$.  We show that
  $R_1R_2 \defeq \{ (a,c,\mathcal D) \mid \text{there exists $b$ such that }
  (a,b,\mathcal D) \in R_1 \,\text{ and }\, (b,c,\mathcal D) \in R_2 \}$
  is a conditional bisimulation.

Then, since $\csimC$ is a conditional bisimulation closed under
condition strengthening (\Cref{cond-strengthening}),
$(a,b,\mathcal D) \in \csimC, (b,c,\mathcal D) \in \csimC$ implies
$(a,c,\mathcal D) \in \csimC\csimC$ and, since we show that
$\csimC\csimC$ is a conditional bisimulation,
$a,c$ are conditionally bisimilar under $\mathcal D$, which proves
transitivity of $\csimC$.

Consider a triple $(a,c,\mathcal D) \in R_1R_2$,
which by construction of $R_1R_2$ results from some $(a,b,\mathcal D) \in R_1,\ (b,c,\mathcal D) \in R_2$.
Also consider a step
$a \cstep{f}{\mathcal A} a'$.
Then, $R_1R_2$ fulfills the requirements of a conditional bisimulation:

\begin{enumerate}
\item \emph{Answering steps $c \cstep{f}{\mathcal C_{i,j}} c_{i,j}^\prime$:}
Since $(a,b,\mathcal D) \in R_1$ and $R_1$ is a conditional bisimulation,
we know that there exist answering steps
$b \cstep{f}{\mathcal B_i} b_i^\prime$ such that for all $i \in I$,
$(a', b_i^\prime, \mathcal D_i^\prime) \in R_1$ and
$\mathcal A \land \mathcal D_{\downarrow f} \models \bigvee_{i \in I} \left( \mathcal D_i^\prime \land \mathcal B_i \right)$.

Additionally, since $(b,c,\mathcal D) \in R_2$, for each
$b \cstep{f}{\mathcal B_i} b_i^\prime$ there exist answering steps
$c \cstep{f}{\mathcal C_{i,j}} c_{i,j}^\prime$ such that for all $j \in J_i$ we have
$(b_i^\prime, c_{i,j}^\prime, \mathcal D_{i,j}^\prime) \in R_2$ and
$\mathcal B_i \land \mathcal D_{\downarrow f} \models \bigvee_{j \in J_i} \big( \mathcal D_{i,j}^\prime \land \mathcal C_{i,j} \big)$.

We now collect all answering steps $c \cstep{f}{\mathcal C_{i,j}} c_{i,j}^\prime$
and use them as answering steps for the original step $a \cstep{f}{\mathcal A} a'$.

\item \emph{$(a', c_{i,j}', \mathcal D_{i,j}^\prime) \in R_1R_2$:}
Since $(a', b_i^\prime, \mathcal D_i^\prime) \in R_1$, and $R_1$ is closed
under condition strengthening, and
$\mathcal D_{i,j}^\prime \land \mathcal D_i^\prime \models \mathcal D_i^\prime$,
we also have $(a',\ b_i^\prime,\ \mathcal D_{i,j}^\prime \land
\mathcal D_i^\prime) \in R_1$ for all $j\in J_i$.
Similarly, we obtain $(b_i^\prime,\ c_{i,j}^\prime,\ \mathcal D_{i,j}^\prime \land \mathcal D_i^\prime) \in R_2$.
By construction of $R_1R_2$ we then also have $(a',\ c_{i,j}^\prime,\ \mathcal D_{i,j}^\prime \land \mathcal D_i^\prime) \in R_1R_2$.

\item \emph{$\mathcal A \land \mathcal D_{\downarrow f} \models \bigvee_{i \in I} \left( \mathcal D_i^\prime \land \mathcal C_i^\prime \right)$:}
We now rewrite

\begin{align*}
  \mathcal A \land \mathcal D_{\downarrow f} &\models \bigvee\nolimits_{i \in I} \left( \mathcal D_i^\prime \land \mathcal B_i \right)
  &&\text{(given from $(a,b,\mathcal D) \in R_1$)} \\
  \mathcal A \land \mathcal D_{\downarrow f} &\models \bigvee\nolimits_{i \in I} \left( \mathcal D_i^\prime \land \mathcal B_i \right) \land \mathcal D_{\downarrow f}
  &&\text{(since $\mathcal D_{\downarrow f}$ is given on the left already)}
  \\
  \mathcal A \land \mathcal D_{\downarrow f} &\models \bigvee\nolimits_{i \in I} \left( \mathcal D_i^\prime \land \mathcal B_i \land \mathcal D_{\downarrow f} \right)
  &&\text{(distributivity)}
\intertext{From $(b,c,\mathcal D) \in R_2$ we know $\mathcal B_i \land \mathcal D_{\downarrow f} \models \bigvee_{j \in J_i} \big( \mathcal D_{i,j}^\prime \land \mathcal C_{i,j} \big)$, therefore this also implies:}
  \mathcal A \land \mathcal D_{\downarrow f} &\models \mathrlap{\bigvee\nolimits_{i \in I} \Big( \mathcal D_i^\prime \land \bigvee\nolimits_{j \in J_i} \big( \mathcal D_{i,j}^\prime \land \mathcal C_{i,j} \big) \Big)}
  \\
  \mathcal A \land \mathcal D_{\downarrow f} &\models \mathrlap{\bigvee\nolimits_{i \in I,\, j \in J_i} \big( \mathcal D_i^\prime \land \mathcal D_{i,j}^\prime \land \mathcal C_{i,j} \big)}
\end{align*}

Observe that this implication is of the required form for the previously
derived tuples $(a',\ c_{i,j}^\prime,\ \mathcal D_{i,j}^\prime \land \mathcal D_i^\prime) \in R_1R_2$.

In case $I$ is infinite, the same idea can be applied, but we need to slightly change the notation to prevent the creation of infinite disjunctions:
\begin{align*}
  c \models (\mathcal A \land \mathcal D_{\downarrow f}) &\implies \exists i \in I \colon c \models ( \mathcal D_i' \land \mathcal B_i)
  &&\text{(since $(a,b,\mathcal D) \in R_1$)} \\
  c \models (\mathcal A \land \mathcal D_{\downarrow f}) &\implies \mathrlap{\exists i \in I \colon c \models ( \mathcal D_i' \land \mathcal B_i ) \land c \models \mathcal D_{\downarrow f} }
  &&\text{($\mathcal D_{\downarrow f}$ on left)}
  \\
  c \models (\mathcal A \land \mathcal D_{\downarrow f}) &\implies \exists i \in I \colon c \models ( \mathcal D_i' \land \mathcal B_i \land \mathcal D_{\downarrow f} )
  &&\text{(distributivity)}
\intertext{From $(b,c,\mathcal D) \in R_2$ we know $c \models (\mathcal B_i \land \mathcal D_{\downarrow f}) \implies \exists j \in J_i \colon ( \mathcal D_{i,j}^\prime \land \mathcal C_{i,j} )$, therefore this also implies:}
  c \models (\mathcal A \land \mathcal D_{\downarrow f}) &\implies \mathrlap{\exists i \in I \colon c \models \mathcal D_i' \land \exists j \in J_i \colon c \models ( \mathcal D_{i,j}^\prime \land \mathcal C_{i,j} ) }
  \\
  c \models (\mathcal A \land \mathcal D_{\downarrow f}) &\implies \mathrlap{\exists i \in I, j \in J_i \colon c \models (\mathcal D_i' \land \mathcal D_{i,j}^\prime \land \mathcal C_{i,j} ) }
\end{align*}
\end{enumerate}

\noindent
Symmetrically, steps $c \cstep{f}{\mathcal C} c'$ can be answered by $a$.
Therefore, $R_1R_2$ is a conditional bisimulation.

\item[closed under contextualization]
We show that $u(R) = \left\{ (a;d, b;d, \mathcal{C}_{\downarrow d}) \mid (a,b,\mathcal C) \in R \right\}$ is a conditional bisimulation, assuming $R$ is a conditional bisimulation.

Consider a triple $(a;d, b;d, \mathcal{C}_{\downarrow d}) \in u(R)$
and a step $a;d \cstep{f}{\mathcal A} a'$.
This step is due to some rule $(\ell, r, \mathcal R)$.
We have to show that there exist answering steps $b;d \cstep{f}{\mathcal{B}_i} b_i^\prime$
such that $(a', b_i^\prime, \mathcal{C}_i^\prime) \in u(R)$
and $\mathcal A \land (\mathcal{C}_{\downarrow d})_{\downarrow f} \models \bigvee_{i \in I} \left( \mathcal{C}_i^\prime \land \mathcal{B}_i \right)$.

According to \Cref{lemma-ctxtrans-composition}, the given step can be rewritten to $a \cstep{d;f}{\mathcal A} a'$.

By construction of $u(R)$, for the given triple $(a;d, b;d, \mathcal{C}_{\downarrow d}) \in u(R)$ there must exist a triple $(a,b,\mathcal C) \in R$.
As $R$ is a conditional bisimulation, the step $a \cstep{d;f}{\mathcal A} a'$
has answering steps $b \cstep{d;f}{\mathcal B_i} b_i^\prime$
such that $(a', b_i^\prime, \mathcal{C}_i^\prime) \in R$
and $\mathcal A \land \mathcal{C}_{\downarrow d;f} \models \bigvee_{i \in I} \left( \mathcal{C}_i^\prime \land \mathcal{B}_i \right)$.

However, we are not interested in answering steps for the context $d;f$, but rather for the original step $a;d \cstep{f}{\mathcal A} a'$.

\begin{enumerate}
\item \emph{Answering steps $b;d \cstep{f}{\mathcal{B}_i} b_i^\prime$\@:}
According to \Cref{lemma-ctxtrans-composition},
the answering steps $b \cstep{d;f}{\mathcal B_i} b_i^\prime$
can be rewritten to $b;d \cstep{f}{\mathcal B_i} b_i^\prime$.

\item \emph{$(a^\prime, b_i^\prime, \mathcal{C}_i^\prime) \in u(R)$\@:}
Since $(a', b_i^\prime, \mathcal{C}_i^\prime) \in R$,
we have $(a';\id,\ b_i^\prime;\id,\ \mathcal{C}_{i \downarrow \id}^\prime) = (a', b_i^\prime, \mathcal{C}_i^\prime) \in u(R)$.

\item \emph{$\mathcal{A} \land (\mathcal{C}_{\downarrow d})_{\downarrow f} \models \bigvee_{i \in I} \left( \mathcal{C}_i^\prime \land \mathcal{B}_i \right)$\@:}
The implication to be shown is identical to the one that we obtained above,
except for $(\mathcal{C}_{\downarrow d})_{\downarrow f}$, which, however, is
equivalent to $\mathcal C_{\downarrow d;f}$.
\end{enumerate}
Answering steps for $b;d \cstep{f}{\mathcal B} b'$ can be derived analogously.
\qedhere
\end{description}
\end{proof}

\subsection{Alternative Characterization using Fixpoint Theory}\label{subsec-fixpoint-basics}

Behavioural equivalences can be characterized as fixpoints of certain functions on complete lattices~\cite{ps:enhancements-coinductive}.
Before we provide definitions that characterize conditional bisimulation relations as fixpoints, we provide a quick summary of fixpoint theory.
We do not rely on this characterization in the proofs in this section.
However, we will use the theory and the alternative definitions for the proofs of up-to techniques in \Cref{sec:uptocond}.

A \emph{complete lattice} is a partially ordered set $(L,
\sqsubseteq)$ where each subset $Y\subseteq L$ has an infimum,
denoted by $\bigsqcap Y$ and a supremum, denoted by $\bigsqcup Y$.
In this paper, the type of lattices that we consider
contain relations ordered by inclusion,
i.e.\ the elements of the lattice are relations
and thus the functions we consider map relations to relations.

A function $f \colon L \to L$ is \emph{monotone} if for all $l_1, l_2 \in L$,\;
$l_1 \sqsubseteq l_2$ implies $f(l_1) \sqsubseteq f(l_2)$,
\emph{idempotent} if $f \circ f = f$,
and \emph{extensive} if $l \sqsubseteq f(l)$ for all $l \in L$.
When $f$ is monotone, extensive and idempotent it is called an \emph{(upper) closure}.
In this case, $f(L) = \{ f(l) \mid l \in L \}$ is a complete lattice.

  Given some behavioural equivalence, we define a monotone function
  $f \colon L \to L$ in such a way that its greatest fixpoint $\nu f$
  equals the behavioural equivalence.  Behavioural equivalence can
  then be checked by establishing whether some given element of the
  lattice $l \in L$ (for instance a relation consisting of a single
  pair) is under the fixpoint, i.e., if $l \sqsubseteq \nu f$.
  By Tarski's Theorem~\cite{t:lattice-fixed-point},
  $\nu f = \bigsqcup \{ x \mid x \sqsubseteq f(x) \}$, i.e., the
  greatest fixpoint is the supremum of all post-fixpoints. Hence for
  showing that $l\sqsubseteq \nu f$, it is sufficient to prove that
  $l$ is under some post-fixpoint $l'$, i.e.,
  $l \sqsubseteq l'\sqsubseteq f(l')$.

Using these preliminaries, we can now give an alternative characterization
of conditional bisimulation using fixpoint theory:

\begin{rem}[Conditional bisimulation function $f_C$]\label{csimc-fc}
  Consider the complete lattice \linebreak $\Condrel$, which is the set of all conditional relations ordered by set inclusion.
  Then, conditional bisimulations can also be seen as post-fixpoints of $f_C$
  (i.e.\ $R$~is a conditional bisimulation if and only if $R \subseteq f_C(R)$),
  and conditional bisimilarity as the greatest fixpoint of~$f_C$
  (i.e.\ $\csimC = \nu f_C$),
  where the monotone function $f_C \colon \Condrel \to \Condrel$ is
  the conditional bisimulation function
  defined by
  \begin{align*}
  f_C(R) \defeq \big\{ (a,b,\mathcal C) \;\big|\; &
    \text{for each
    $a \cstep{f}{\mathcal A} a'$
    there exist
    $b \cstep{f}{\mathcal B_i} b_i'$
    and
    conditions $\mathcal C_i'$
    } \\& \text{%
    such that
    $(a', b_i', \mathcal C_i') \in R$
    and
    $\mathcal A \land \mathcal C_{\downarrow f} \models
    {\textstyle\bigvee\nolimits_{i \in I}} (\mathcal C_i' \land
    \mathcal B_i)$,
    } \\& \text{%
    and vice versa (for each $\smash{b \cstep{f}{\mathcal B} b'}$\dots)%
    }
  \big\}
  \end{align*}
  The correctness of this characterization (i.e.\ that ``$f_C$ is the right function'') can be seen by
  expanding the definition of $f_C$ on the right-hand side of $(a,b,\mathcal C) \in R \implies (a,b,\mathcal C) \in f_C(R)$,
  which results in exactly the definition of a conditional
  bisimulation relation.
\end{rem}

\subsection{Representative Conditional Bisimulations}\label{subsec-repr-cond-bisim}

Checking whether two arrows are conditionally bisimilar, or whether a
given relation is a conditional bisimulation, can be hard in
practice, since we have to check all possible context steps, of which
there are typically infinitely many.

For saturated bisimilarity, we used representative steps instead of context
steps (cf.\ \Cref{subsec-rsq,subsec-rstep}) to reduce the number of contexts to
be checked.
In this section, we extend our definition of conditional bisimulation to use
representative steps and prove that the resulting bisimilarity is identical to
the one previously defined.

\begin{defi}[Representative conditional bisimulation]\label{def-cb-r}%
  We fix a conditional reactive system.
A \emph{representative conditional bisimulation} $R$ is a conditional relation such that the following holds:
for each triple $(a,b,\mathcal{C}) \in R$ and each representative step $a \rstep{f}{\mathcal A} a^\prime$,
there are answering context steps $b \cstep{f}{\mathcal{B}_i} b_i^\prime$ and conditions $\mathcal{C}_i^\prime$
such that $(a^\prime, b_i^\prime, \mathcal{C}_i^\prime) \in R$
and $\mathcal{A} \land \mathcal{C}_{\downarrow f} \models \bigvee_{i \in I} \left( \mathcal{C}_i^\prime \land \mathcal{B}_i \right)$;
and vice versa.
Two arrows are \emph{representative conditionally bisimilar under $\mathcal C$} ($(a, b, \mathcal{C}) \in \csimR$) whenever a representative conditional bisimulation $R$ with $(a, b, \mathcal{C}) \in R$ exists.
\end{defi}
\begin{rem}\label{csimr-fr}
  Analogously to \Cref{csimc-fc}, we can define
  representative conditional bisimulations as post-fixpoints of $f_R$,
  and representative conditional bisimilarity as the greatest fixpoint
  of $f_R$, where $f_R$ is defined on $\Condrel$ as follows:
  \begin{align*}
  f_R(R) \defeq \big\{ (a,b,\mathcal C) \;\big|\; &
    \text{for each
    $a \rstep{f}{\mathcal A} a'$
    there exist
    $b \cstep{f}{\mathcal B_i} b_i'$
    and
    conditions $\mathcal C_i'$
    } \\& \text{%
    such that
    $(a', b_i', \mathcal C_i') \in R$
    and
    $\mathcal A \land \mathcal C_{\downarrow f} \models {\textstyle\bigvee\nolimits_{i \in I}} (\mathcal C_i' \land \mathcal B_i)$,
    } \\& \text{%
    and vice versa (for each $\smash{b \rstep{f}{\mathcal B} b'}$\dots)%
    }
  \big\}
  \end{align*}
  It is easy to see that $f_C \subseteq f_R$:
  Their definitions differ only in the type of steps which are checked.
  A triple that satisfies the requirements for all context steps (is in $f_C(R)$)
  naturally satisfies them for all representative steps (is in $f_R(R)$),
  since every representative step is also a context step.
\end{rem}

To show that the two conditional bisimilarities using context and representative steps are equivalent (\Cref{simr-simc}) and for the proofs of \Cref{cbuts-equiv,cec-cb,cb-vs-id}, we need the following two lemmas:

\begin{lemC}[{\cite[Lemma 16]{DBC-CRS}}]\label{lemma16-dbccrs}%
Given a context step $a \cstep{f}{\mathcal A} a^\prime$, the borrowed context $f$ can be extended by an additional context $c'$, that is:
\mbox{$a \cstep{f}{\mathcal A} a'$ implies $a \cstep{f;c'}{\mathcal{A}_{\downarrow c'}} a';c'$}
\end{lemC}

\begin{lem}\label{lemma-cbr-cuc}%
Representative conditional bisimilarity is closed under contextualization, that is,
$(a,b,\mathcal{C}) \in \csimR$ implies $(a;d, b;d, \mathcal{C}_{\downarrow d}) \in \csimR$.
\end{lem}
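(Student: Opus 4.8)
The plan is to mimic the ``closed under contextualization'' argument from the proof of \Cref{cb-congruence}, but with an extra detour forced by the fact that representative steps do not recompose as nicely as context steps. Concretely, I would prove the stronger statement that if $R$ is a representative conditional bisimulation, then its closure under contextualization $u(R) = \{ (a;d, b;d, \mathcal{C}_{\downarrow d}) \mid (a,b,\mathcal C) \in R \}$ is again a representative conditional bisimulation. Given $(a,b,\mathcal C) \in \csimR$, choosing a witnessing $R$ with $R \subseteq \csimR$ then yields $(a;d, b;d, \mathcal{C}_{\downarrow d}) \in u(R) \subseteq \csimR$, which is exactly the claim.

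So fix a triple $(a;d, b;d, \mathcal{C}_{\downarrow d}) \in u(R)$ arising from $(a,b,\mathcal C) \in R$, together with a representative step $a;d \rstep{f}{\mathcal A} a'$ using a rule $(\ell,r,\mathcal R)$, so that $\mathcal A = \mathcal R_{\downarrow c}$ and $a' = r;c$ for the corresponding reactive context $c$. The key obstacle is that \Cref{lemma-ctxtrans-composition} only lets me rewrite this to a \emph{context} step $a \cstep{d;f}{\mathcal A} a'$; this is in general not a representative step from $a$, so I cannot directly invoke the representative bisimulation property of $R$. To repair this, I would apply the reduction of \Cref{cond-repr-zurueckfuehren} to $a \cstep{d;f}{\mathcal A} a'$, obtaining a genuine representative step $a \rstep{\hat f}{\mathcal{R}_{\downarrow \hat c}} r;\hat c$ together with an arrow $\hat g$ satisfying $\hat f;\hat g = d;f$ and $\hat c;\hat g = c$. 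By the shift laws (\Cref{shift-laws}) the conditions are reconciled, since $(\mathcal R_{\downarrow\hat c})_{\downarrow\hat g} \equiv \mathcal R_{\downarrow \hat c;\hat g} = \mathcal R_{\downarrow c} = \mathcal A$, and the reached states match because $(r;\hat c);\hat g = r;c = a'$.

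Now I can use that $R$ is a representative conditional bisimulation: the step $a \rstep{\hat f}{\mathcal R_{\downarrow\hat c}} r;\hat c$ is answered by context steps $b \cstep{\hat f}{\hat{\mathcal B}_i} \hat b_i'$ with $(r;\hat c, \hat b_i', \hat{\mathcal C}_i') \in R$ and $\mathcal R_{\downarrow\hat c} \land \mathcal C_{\downarrow\hat f} \models \bigvee_{i} (\hat{\mathcal C}_i' \land \hat{\mathcal B}_i)$. To turn these into answers for the original step I would first extend each along $\hat g$ via \Cref{lemma16-dbccrs}, giving $b \cstep{\hat f;\hat g}{\hat{\mathcal B}_{i\downarrow\hat g}} \hat b_i';\hat g$, and then reuse \Cref{lemma-ctxtrans-composition} (now with $\hat f;\hat g = d;f$) to obtain $b;d \cstep{f}{\hat{\mathcal B}_{i\downarrow\hat g}} \hat b_i';\hat g$. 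Setting $\mathcal B_i := \hat{\mathcal B}_{i\downarrow\hat g}$, $b_i' := \hat b_i';\hat g$ and $\mathcal C_i' := \hat{\mathcal C}'_{i\downarrow\hat g}$, the membership $(a', b_i', \mathcal C_i') \in u(R)$ is immediate from $(r;\hat c, \hat b_i', \hat{\mathcal C}_i') \in R$ together with $a' = (r;\hat c);\hat g$ and the definition of $u(R)$.

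It remains to verify the implication $\mathcal A \land (\mathcal C_{\downarrow d})_{\downarrow f} \models \bigvee_i (\mathcal C_i' \land \mathcal B_i)$ required by \Cref{def-cb-r}. I would obtain it by shifting the implication $\mathcal R_{\downarrow\hat c} \land \mathcal C_{\downarrow\hat f} \models \bigvee_i (\hat{\mathcal C}_i' \land \hat{\mathcal B}_i)$ along $\hat g$: the shift laws turn the left side into $(\mathcal R_{\downarrow\hat c})_{\downarrow\hat g} \land (\mathcal C_{\downarrow\hat f})_{\downarrow\hat g} \equiv \mathcal A \land \mathcal C_{\downarrow d;f} \equiv \mathcal A \land (\mathcal C_{\downarrow d})_{\downarrow f}$ and distribute over the disjunction to yield exactly $\bigvee_i (\mathcal C_i' \land \mathcal B_i)$ on the right; the infinite case is unproblematic, since shifting corresponds to precomposition by $\hat g$ and hence preserves the semantic reading of the implication. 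The symmetric direction (answering steps of $b;d$ by $a;d$) is identical. The only genuinely delicate point is the detour through \Cref{cond-repr-zurueckfuehren}: one must check that reducing the recomposed context step produces precisely the rule and condition of the original representative step, so that the single extra arrow $\hat g$ simultaneously aligns the borrowed contexts, the reactive contexts, the conditions, and the reached states.
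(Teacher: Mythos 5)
Your proposal is correct and takes essentially the same approach as the paper's proof: it establishes the stronger claim that $u(R)$ is again a representative conditional bisimulation, rewrites the given representative step of $a;d$ into a context step of $a$ via \Cref{lemma-ctxtrans-composition}, reduces that to a representative step via \Cref{cond-repr-zurueckfuehren}, lets $R$ answer it, and transports the answers back along $\hat g$ using \Cref{lemma16-dbccrs}, \Cref{lemma-ctxtrans-composition} and the shift laws of \Cref{shift-laws}. The ``delicate point'' you flag at the end is unproblematic, since the reduction in \Cref{cond-repr-zurueckfuehren} is by construction performed with the very rule $(\ell,r,\mathcal R)$ that enabled the original step, so the single arrow $\hat g$ indeed aligns contexts, conditions and reached states simultaneously.
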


\begin{proof}
We show that $u(R) = \left\{ (a;d, b;d, \mathcal{C}_{\downarrow d}) \mid (a,b,\mathcal C) \in R \right\}$ is a representative conditional bisimulation, assuming $R$ is a representative conditional bisimulation.

Consider a triple $(a;d, b;d, \mathcal{C}_{\downarrow d}) \in u(R)$
and a step $a;d \rstep{f}{\mathcal A} a'$.
This step is due to some rule $(\ell, r, \mathcal R)$.
We have to show that there exist answering steps $b;d \cstep{f}{\mathcal{B}_i} b_i^\prime$
such that $(a', b_i^\prime, \mathcal{C}_i^\prime) \in u(R)$
and $\mathcal A \land (\mathcal{C}_{\downarrow d})_{\downarrow f} \models \bigvee_{i \in I} \left( \mathcal{C}_i^\prime \land \mathcal{B}_i \right)$.

The representative step is of course also a context step ($a;d \cstep{f}{\mathcal A} a'$),
which, according to \Cref{lemma-ctxtrans-composition}, can be rewritten to $a \cstep{d;f}{\mathcal A} a'$.
As a result, the step is not necessarily a representative one anymore. However we can find a matching representative step (see also \Cref{cond-repr-zurueckfuehren}):

\begin{center}\begin{tikzpicture}
  \begin{scope}[shift={(0,0)}]
    \node (tlempty) at (0,0) {$\obnull$};
    \node (i) at (2.5,0) {$I$};
    \node (trempty) at (4.7,0) {$\obnull$};
    \node (j) at (0,-2.5) {$J$};
    \node (k) at (2.5, -2.5) {$K$};
    \node[inner sep=2pt] (ki) at ($(j)!0.5!(k)$) {};
    \draw[->] (tlempty) -- node[above]{$\ell$} (i);
    \draw[->] (trempty) -- node[above]{$r$} (i);
    \draw[->] (trempty) to[bend left=5] node[right]{$a'$} (k);

    \draw[->] (tlempty) -- node[left]{$a$} (j);
    \draw[->] (i) to[bend left=5] node[right]{$c$} (k);

    \draw[->] (j) -- node[below]{$d$} (ki);
    \draw[->] (ki) -- node[below]{$f$} (k);

    \node[condtri,shape border rotate=270] at (i.north) {$\mathcal{R}$};
    \node[condtri,shape border rotate=90] at (k.south) {$\mkern-3mu\mathcal{A}$};
  \end{scope}

  \node at (5.75,-1.5) {$\rightarrow$};

  \begin{scope}[shift={(6.75,0)}]
    \node (tlempty) at (0,0) {$\obnull$};
    \node (i) at (2.5,0) {$I$};
    \node (trempty) at (4.7,0) {$\obnull$};
    \node (j) at (0,-2.4) {$J$};
    \node (kp) at (1.95,-1.95) {$\hat K$};
    \node (k) at (2.9, -2.9) {$K$};
    \node[inner sep=2pt] (ki) at ($(j)!0.5!(k)$) {};
    \draw[->] (tlempty) -- node[above]{$\ell$} (i);
    \draw[->] (trempty) -- node[above]{$r$} (i);
    \draw[->] (trempty) to[bend left=5] node[right]{$a'$} (k.40);

    \draw[->] (tlempty) -- node[left]{$a$} (j);
    \draw[->] (i.300) to[bend left=5] node[right]{$c$} (k);

    \draw[->] (j) -- node[below]{$d$} (ki);
    \draw[->] (ki) -- node[below]{$f$} (k);

    \draw[->] (i) -- node[left]{$\hat c$} (kp);
    \draw[->] (j) -- node[above]{$\hat f$} (kp);
    \draw[->] (kp) -- node[right,pos=0.1]{\raisebox{9pt}{$\hat g$}} (k);

    \node[condtri,shape border rotate=270] at (i.north) {$\mathcal R$};
    \node[condtri,shape border rotate=90] at (k.south) {$\mathcal A$};
    \node[rotate around={-55:(kp.center)},condtri,shape border rotate=0,scale=0.7] at (kp.west) {%
      \raisebox{-5pt}[5pt][2pt]{%
        \rotatebox{25}{$\mathcal{R}_{\downarrow \hat c}$}%
      }\kern-2pt
    };
  \end{scope}
\end{tikzpicture}
\end{center}

Note that $\hat f;\hat g = d;f,\ \hat c;\hat g = c$.
Variables with a hat (e.g.\ $\hat c$) refer to the representative step,
but otherwise play the same role than their unhatted counterparts (e.g.\ $c$), which refer to the original step.
The result is a representative step
$a \rstep{\hat f}{\mathcal{R}_{\downarrow \hat c}} r;\hat c$.

By construction of $u(R)$, for the given triple $(a;d, b;d, \mathcal{C}_{\downarrow d}) \in u(R)$ there must exist a triple $(a,b,\mathcal C) \in R$.
As $R$ is a representative conditional bisimulation, the step $a \rstep{\hat f}{\mathcal{R}_{\downarrow \hat c}} r;\hat c$
has answering steps $b \cstep{\hat f}{\hat{\mathcal{B}_i}} \hat{b_i^\prime}$
such that $(r;\hat c, \hat{b_i^\prime}, \hat{\mathcal{C}_i^\prime}) \in R$
and $\mathcal{R}_{\downarrow \hat c} \land \mathcal{C}_{\downarrow \hat f} \models \bigvee_{i \in I} \left( \hat{\mathcal{C}_i^\prime} \land \hat{\mathcal{B}_i}\right)$.

However we are not interested in answering steps for the representative context $\hat f$, but rather for the original step $a;d \rstep{f}{\mathcal A} a'$, that is, we need answering steps of $b;d$ using context $f$.
So we need
(1) answering steps $b;d \cstep{f}{\mathcal{B}_i} b_i^\prime$
(2) such that $(a^\prime, b_i^\prime, \mathcal{C}_i^\prime) \in u(R)$
and (3) $\mathcal A \land (\mathcal{C}_{\downarrow d})_{\downarrow f} \models \bigvee_{i \in I} \left( \mathcal{C}_i^\prime \land \mathcal{B}_i \right)$.

\begin{enumerate}
\item \emph{Answering steps $b;d \cstep{f}{\mathcal{B}_i} b_i^\prime$\@:}
\Cref{lemma16-dbccrs} and $b \cstep{\hat f}{\hat{\mathcal{B}_i}} \hat{b_i^\prime}$
imply that steps $b \cstep{\hat f;\hat g}{{\hat{\mathcal{B}_i}}_{\downarrow \hat g}} \hat{b_i^\prime};\hat g$ are possible.
Since $d;f = \hat f;\hat g$, these are equivalent to $b \cstep{d;f}{{\hat{\mathcal{B}_i}}_{\downarrow \hat g}} \hat{b_i^\prime};\hat g$.
We set $b_i^\prime \defeq \hat{b_i^\prime};\hat g,\ \mathcal{B}_i \defeq {\hat{\mathcal{B}_i}}_{\downarrow \hat g}$
and get steps $b \cstep{d;f}{\mathcal{B}_i} b_i^\prime$,
which, according to \Cref{lemma-ctxtrans-composition}, are equivalent to steps $b;d \cstep{f}{\mathcal{B}_i} b_i^\prime$.

\item \emph{$(a^\prime, b_i^\prime, \mathcal{C}_i^\prime) \in u(R)$\@:}
Set $\mathcal{C}_i^\prime \defeq \hat{\mathcal{C}_i^\prime}_{\downarrow \hat g}$.
By construction of $u(R)$, $(r;\hat c,\ \hat{b_i^\prime},\ \hat{\mathcal{C}_i^\prime}) \in R$ implies $(r;\hat c;\hat g,\ \hat{b_i^\prime};\hat g,\ \hat{\mathcal{C}_i^\prime}_{\downarrow \hat g}) = (a^\prime, b_i^\prime, \mathcal{C}_i^\prime) \in u(R)$.

\item \emph{$\mathcal{A} \land (\mathcal{C}_{\downarrow d})_{\downarrow f} \models \bigvee_{i \in I} \left( \mathcal{C}_i^\prime \land \mathcal{B}_i \right)$\@:}
Above, we already showed $\mathcal{R}_{\downarrow \hat c} \land \mathcal{C}_{\downarrow \hat f} \models \bigvee_{i \in I} \left( \hat{\mathcal{C}_i^\prime} \land \hat{\mathcal{B}_i} \right)$.
By shifting both sides with $\hat g$ and applying the rules of \Cref{shift-laws}, we get:
\begin{align*}
\mathcal{R}_{\downarrow \hat c} \land \mathcal{C}_{\downarrow \hat f}
&\models \bigvee\nolimits_{i \in I} \left( \hat{\mathcal{C}_i} \land \hat{\mathcal{B}_i} \right)
\\ \Rightarrow
\left( \mathcal{R}_{\downarrow \hat c} \land \mathcal{C}_{\downarrow \hat f} \right)_{\downarrow \hat g}
&\models \left( \bigvee\nolimits_{i \in I} \big( \hat{\mathcal{C}_i^\prime} \land \hat{\mathcal{B}_i} \big) \right)_{\downarrow \hat g}
\\ \Leftrightarrow
\mathcal{R}_{\downarrow \hat c;\hat g} \land \mathcal{C}_{\downarrow \hat f;\hat g}
&\models \bigvee\nolimits_{i \in I} \left( {\hat{\mathcal{C}_i^\prime}}_{\downarrow \hat g} \land {\hat{\mathcal{B}_i}}_{\downarrow \hat g} \right)
\end{align*}

By substituting $\hat c;\hat g = c,\   \hat f;\hat g = d;f,\   \mathcal{C}_i^\prime = {\hat{\mathcal{C}_i^\prime}}_{\downarrow \hat g},\   \mathcal{B}_i = {\hat{\mathcal{B}_i}}_{\downarrow \hat g}$, we obtain
$\mathcal{R}_{\downarrow c} \land \mathcal{C}_{\downarrow d;f}
\models \bigvee_{i \in I} \left( \mathcal{C}_i^\prime \land \mathcal{B}_i \right)$.
Since $\mathcal{A} \models \mathcal{R}_{\downarrow c}$, we have $\mathcal{A} \land \mathcal{C}_{\downarrow d;f} \models \mathcal{R}_{\downarrow c} \land \mathcal{C}_{\downarrow d;f} \models \bigvee_{i \in I} \left( \mathcal{C}_i^\prime \land \mathcal{B}_i \right)$, which was to be shown.

Analogously, answering steps for $b;d \rstep{f}{\mathcal B} b'$ can be constructed.
\qedhere
\end{enumerate}
\end{proof}

\noindent
The following theorem is based on a proof strategy similar to
\Cref{lemma-cbr-cuc}.

\begin{thm}\label{simr-simc}%
Conditional bisimilarity and representative conditional bisimilarity \coincide, that is, $\csimC = \csimR$.
\end{thm}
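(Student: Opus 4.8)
The plan is to prove the two inclusions $\csimC \subseteq \csimR$ and $\csimR \subseteq \csimC$ separately, exploiting the fixpoint characterisations $\csimC = \nu f_C$ and $\csimR = \nu f_R$ from \Cref{csimc-fc,csimr-fr}. The inclusion $\csimC \subseteq \csimR$ is immediate: since $f_C \subseteq f_R$ (as observed in \Cref{csimr-fr}), we have $\csimC = f_C(\csimC) \subseteq f_R(\csimC)$, so $\csimC$ is a post-fixpoint of $f_R$ and hence $\csimC \subseteq \nu f_R = \csimR$. Equivalently, every conditional bisimulation is automatically a representative conditional bisimulation.

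For the reverse inclusion I would show that $\csimR$ is itself a conditional bisimulation, i.e.\ $\csimR \subseteq f_C(\csimR)$, which yields $\csimR \subseteq \nu f_C = \csimC$. The argument mirrors the proof of \Cref{lemma-cbr-cuc}. Take $(a,b,\mathcal C) \in \csimR$ and an \emph{arbitrary} context step $a \cstep{f}{\mathcal A} a'$, arising from a rule $(\ell,r,\mathcal R)$ with $a;f = \ell;c$, $a' = r;c$ and $\mathcal A \models \mathcal{R}_{\downarrow c}$. By \Cref{cond-repr-zurueckfuehren} this context step reduces to a representative step $a \rstep{\hat f}{\mathcal{R}_{\downarrow \hat c}} r;\hat c$, together with an arrow $\hat g$ satisfying $\hat f;\hat g = f$ and $\hat c;\hat g = c$. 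Since $\csimR$ answers representative steps, we obtain answering context steps $b \cstep{\hat f}{\hat{\mathcal{B}_i}} \hat{b_i^\prime}$ with $(r;\hat c,\ \hat{b_i^\prime},\ \hat{\mathcal{C}_i^\prime}) \in \csimR$ and $\mathcal{R}_{\downarrow \hat c} \land \mathcal{C}_{\downarrow \hat f} \models \bigvee_{i \in I} \big( \hat{\mathcal{C}_i^\prime} \land \hat{\mathcal{B}_i} \big)$.

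It remains to transport these answers along $\hat g$ so as to match the original context $f$. By \Cref{lemma16-dbccrs} each answer extends to $b \cstep{\hat f;\hat g}{(\hat{\mathcal{B}_i})_{\downarrow \hat g}} \hat{b_i^\prime};\hat g$, i.e.\ $b \cstep{f}{\mathcal B_i} b_i^\prime$ where $b_i^\prime \defeq \hat{b_i^\prime};\hat g$ and $\mathcal B_i \defeq (\hat{\mathcal{B}_i})_{\downarrow \hat g}$. Setting $\mathcal{C}_i^\prime \defeq (\hat{\mathcal{C}_i^\prime})_{\downarrow \hat g}$, closure of $\csimR$ under contextualization (\Cref{lemma-cbr-cuc}) applied to $(r;\hat c, \hat{b_i^\prime}, \hat{\mathcal{C}_i^\prime})$ gives $(r;\hat c;\hat g,\ \hat{b_i^\prime};\hat g,\ (\hat{\mathcal{C}_i^\prime})_{\downarrow \hat g}) \in \csimR$; since $r;\hat c;\hat g = r;c = a'$, this is exactly $(a', b_i^\prime, \mathcal{C}_i^\prime) \in \csimR$. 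For the entailment I would shift $\mathcal{R}_{\downarrow \hat c} \land \mathcal{C}_{\downarrow \hat f} \models \bigvee_{i} ( \hat{\mathcal{C}_i^\prime} \land \hat{\mathcal{B}_i} )$ by $\hat g$ and apply \Cref{shift-laws} (distribution over $\land$ and $\lor$, and $\mathcal{A}_{\downarrow c_1;c_2} \equiv (\mathcal{A}_{\downarrow c_1})_{\downarrow c_2}$) together with the substitutions $\hat c;\hat g = c$ and $\hat f;\hat g = f$, obtaining $\mathcal{R}_{\downarrow c} \land \mathcal{C}_{\downarrow f} \models \bigvee_{i} ( \mathcal{C}_i^\prime \land \mathcal{B}_i )$; since $\mathcal A \models \mathcal{R}_{\downarrow c}$ this gives $\mathcal A \land \mathcal{C}_{\downarrow f} \models \bigvee_{i} ( \mathcal{C}_i^\prime \land \mathcal{B}_i )$, as required. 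The vice-versa direction is symmetric.

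The main obstacle is conceptual rather than computational: a representative conditional bisimulation constrains behaviour only on representative steps, whereas a conditional bisimulation must answer \emph{every} context step. Bridging this gap forces two ingredients to work in tandem — factoring the arbitrary context through a representative one via \Cref{cond-repr-zurueckfuehren}, and having closure under contextualization already in hand. The latter is precisely why \Cref{lemma-cbr-cuc} is established first: the answers delivered by the representative step have target $r;\hat c$ rather than $a'$, and only closure under contextualization lets them be pushed forward along $\hat g$ to genuine answers with target $a' = r;\hat c;\hat g$. Beyond this, the care needed lies in keeping the hatted/unhatted bookkeeping consistent and in checking that the composite shifts really collapse under \Cref{shift-laws} — routine, but the place where slips would occur.
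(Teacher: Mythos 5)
Your proposal is correct and follows essentially the same route as the paper's proof: the easy inclusion via the observation that representative steps form a subset of context steps (which you phrase in fixpoint language via $f_C \subseteq f_R$), and the hard inclusion by showing $\csimR$ is a conditional bisimulation, using \Cref{cond-repr-zurueckfuehren} to factor an arbitrary context step through a representative one, \Cref{lemma16-dbccrs} to extend the answering steps along $\hat g$, \Cref{lemma-cbr-cuc} to push the related targets forward under contextualization, and \Cref{shift-laws} for the entailment. The bookkeeping with $\hat f;\hat g = f$, $\hat c;\hat g = c$ and the final weakening via $\mathcal A \models \mathcal R_{\downarrow c}$ matches the paper's argument step for step.
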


\begin{proof}
\begin{proofparts}
\proofPart{$\csimC \subseteq \csimR$}
  Consider a triple $(a, b, \mathcal C) \in \csimC$.
  By \Cref{def-cb-c}, for each step $a \cstep{f}{\mathcal A} a^\prime$
  there exist answering steps of $b$ with the requirements listed there;
  symmetrically, each step $b \cstep{f}{\mathcal B} b^\prime$ can be answered by $a$.
  To show that $(a, b, \mathcal C) \in \csimR$, the same statement has to be shown for each representative step $a \rstep{f}{\mathcal{A}} a^\prime$
  (and $b \rstep{f}{\mathcal{B}} b^\prime$).
  Since it already holds for all context steps, which are a superset of representative steps (every $\rightarrow_R$ step is also a $\rightarrow_C$ step), \Cref{def-cb-r} is trivially satisfied.

\proofPart{$\csimC \supseteq \csimR$}
  By definition of $\csimC$, if a conditional relation $R$ is a conditional bisimulation, then $R \subseteq \csimC$.
  We show that $R = \csimR$ is a conditional bisimulation, i.e.\ that it satisfies the requirements of \Cref{def-cb-c}.

  \def\Relname{R}
  Consider a triple $(a, b, \mathcal C) \in \Relname$ and a context step $a \cstep{f}{\mathcal A} a'$.
  This step is due to some rule $(\ell, r, \mathcal R)$.
  According to \Cref{cond-repr-zurueckfuehren}, this context step can be reduced to a representative step
  $a \rstep{\hat f}{\mathcal{R}_{\downarrow \hat c}} r;\hat c$,
  and there exists $\hat g$ such that $\hat f ; \hat g = f,\ \hat c ; \hat g = c$.
  Again, all variables with a hat (e.g.\ $\hat c$) refer to the representative step,
  but otherwise take the same role as their unhatted counterparts.

  Since $\Relname$ is the representative conditional bisimilarity,
  $\Relname$ is a representative conditional bisimulation.
  Together with $(a, b, \mathcal C) \in \Relname$, this means that
  for the aforementioned step there exist answering steps $b \cstep{\hat f}{\hat{\mathcal{B}_i}} \hat{b_i^\prime}$ and conditions $\hat{\mathcal{C}_i^\prime}$,
  such that $(r;\hat c,\ \hat{b_i^\prime},\ \hat{\mathcal{C}_i^\prime}) \in \Relname$ and
  $\mathcal{R}_{\downarrow \hat c} \land \mathcal{C}_{\downarrow \hat f}
  \models \bigvee_{i \in I} \big( \hat{\mathcal{C}_i^\prime} \land \hat{\mathcal{B}_i} \big)$.

  However we are not interested in answering steps for the representative step, but rather for the original step $a \cstep{f}{\mathcal A} a'$,
  that is, we need answering steps of $b$ using context $f$.
  So we need
  (1) answering steps $b \cstep{f}{\mathcal{B}_i} b_i^\prime$
  (2) such that $(a^\prime, b_i^\prime, \mathcal{C}_i^\prime) \in \Relname$
  and (3) $\mathcal{A} \land \mathcal{C}_{\downarrow f} \models \bigvee_{i \in I} \left( \mathcal{C}_i^\prime \land \mathcal{B}_i \right)$.
  Analogously to the proof of \Cref{lemma-cbr-cuc}, we have:

  \begin{enumerate}
    \item \emph{Answering steps\@:}
      Having steps $b \cstep{\hat f}{\hat{\mathcal B_i}} \hat{b_i^\prime}$
      implies that steps \mbox{$b \cstep{\hat f;\hat g}{{\hat{\mathcal B_i}}_{\downarrow \hat g}} \hat{b_i^\prime};\hat g$} are possible (\Cref{lemma16-dbccrs}).
      Rewritten as $b \cstep{f}{\mathcal B_i} b_i^\prime$ (where $\mathcal B_i \defeq {\hat{\mathcal B_i}}_{\downarrow \hat g},\ b_i^\prime \defeq \hat{b_i^\prime};\hat g$), we get the desired answering steps for the original step.
    \item \emph{$(a', b_i', \mathcal C_i') \in R$\@:}
      Since $\Relname$ is the representative conditional bisimilarity, by \Cref{lemma-cbr-cuc} we know that $\Relname$ is closed under contextualization.
      Therefore,
      $(r;\hat c,\ \hat{b_i^\prime},\ \hat{\mathcal{C}_i^\prime}) \in \Relname$ implies
      $(r;\hat c;\hat g,\ \hat{b_i^\prime};\hat g,\ \hat{\mathcal C_i^\prime}_{\downarrow \hat g}) = (a^\prime, b_i^\prime, \mathcal C_i^\prime) \in \Relname$,
      i.e.\ the original target $a^\prime$ is conditionally bisimilar to the targets $\hat{b_i^\prime};\hat{g}$ of the answering steps.
    \item \emph{$\mathcal{A} \land \mathcal{C}_{\downarrow f} \models \bigvee_{i \in I} \left( \mathcal{C}_i^\prime \land \mathcal{B}_i \right)$\@:}
      Using the rules of \Cref{shift-laws} we get:
      \begin{align*}
        \mathcal R_{\downarrow \hat c} \land \mathcal C_{\downarrow \hat f}
        & \models \bigvee \left( \hat{\mathcal C_i^\prime} \land \hat{\mathcal B_i} \right)
        \\ \Rightarrow
        (\mathcal R_{\downarrow \hat c})_{\downarrow \hat g} \land (\mathcal{C}_{\downarrow \hat{f}})_{\downarrow \hat g}
        & \models \bigvee \left( \hat{\mathcal C_i^\prime}_{\downarrow \hat g} \land \hat{\mathcal B_i}_{\downarrow \hat g} \right)
        \\ \Leftrightarrow
        \mathcal R_{\downarrow \hat c ;\hat g} \land \mathcal{C}_{\downarrow \hat{f};\hat{g}}
        & \models \bigvee \left( \hat{\mathcal C_i^\prime}_{\downarrow \hat g} \land \hat{\mathcal B_i}_{\downarrow \hat g} \right)
        \\ \Leftrightarrow
        \mathcal R_{\downarrow c} \land \mathcal C_{\downarrow f}
        & \models \bigvee \left( \mathcal C_i^\prime \land \mathcal{B}_i \right)
      \end{align*}

      Since $\mathcal A \models \mathcal R_{\downarrow c}$,
      we have $
        \mathcal A \land \mathcal C_{\downarrow f}
        \models \mathcal R_{\downarrow c} \land \mathcal C_{\downarrow f}
        \models \bigvee \left( \mathcal{C}_i^\prime \land \mathcal{B}_i \right)
      $, which is the required condition for the triple $(a, b, \mathcal C)$.
  \end{enumerate}
  Analogously, we can construct answering steps for $b \cstep{f}{\mathcal B} b'$.
  We have therefore shown that $\Relname = \csimR$ is a conditional bisimulation and therefore $\csimR \subseteq \csimC$.
  \qedhere
\end{proofparts}
\end{proof}

%
%
%
%
\noindent
We now discuss the notion of representative condition bisimulation in
two examples.

\begin{exa}[Message passing over unreliable channels, continued]\label{ex-cb-repr-unreliable}%
  Consider the reactive system of \Cref{ex-cb-unreliable}.  There
  exists a representative conditional bisimulation $R$ such that
  $(\emptyset \rightarrow \inlinechan{reliable} \leftarrow \inlinechan{},\ %
  \emptyset \rightarrow \inlinechan{unreliable} \leftarrow \inlinechan{},\ \AIneN) \in R$
  (where $\AIneN$ requires that no $\noiseLabel$-edge exists).

  We consider the representative steps that are possible from either
  $\inlinechan{reliable}$ or $\inlinechan{unreliable}$ and only explain the most interesting cases
  (cf.\ \Cref{fig-infinite-cb-inf}).

\begin{itemize}
\item The graph $\inlinechan{reliable}$ can do a step using rule $P_R$ by borrowing a message on
  the left node, that is, $f = \inlinechan{} \rightarrow \inlinechan{msgleft} \leftarrow \inlinechan{}$,
  reacting to $\inlinechan{reliable,msgright}$.  No further restrictions on the environment are
  necessary, so $\mathcal A = \condtrue$.  The graph $\inlinechan{unreliable}$ can answer this step
  using $P_U$ and reacts to $\inlinechan{unreliable,msgright}$, but only if no noise is present
  (environment satisfies $\mathcal B_i = \AIneN$).  We evaluate
  the implication
  $\mathcal A \land \mathcal C_{\downarrow f}
  \equiv {\condtrue} \land {\AIneN}_{\downarrow f}
  \equiv \AIneN
  \models
  \bigvee_{i \in I} \left( \mathcal{C}_i^\prime \land \AIneN \right)
  \equiv \bigvee_{i \in I} \left( \mathcal{C}_i^\prime \land \mathcal{B}_i \right)$,
  setting
  $\mathcal{C}_i^\prime = \AIneN$. (Note that
  ${\AIneN}_{\downarrow f} \equiv \AIneN$ since $\AIneN$ forbids
  the existence of an $\noiseLabel$-edge between the two interface nodes and $f$
  is unrelated, providing an $m$-loop on the left-hand node.)  We now
  require
  $(\emptyset \rightarrow \inlinechan{reliable,msgright} \leftarrow \inlinechan{},\ %
  \emptyset \rightarrow \inlinechan{unreliable,msgright} \leftarrow \inlinechan{},\ \AIneN) \in R$.

\item Symmetrically, $\inlinechan{unreliable}$ can do a step using $P_U$ by borrowing a message on the left node, reacting to $\inlinechan{unreliable,msgright}$ in an environment without noise ($\mathcal A = \AIneN$).
$\inlinechan{reliable}$ can answer this step under any condition $\mathcal B_i$. Then,
the implication is satisfied if we set $\mathcal C_i^\prime = \AIneN$,
so we require again $(\emptyset \rightarrow \inlinechan{reliable,msgright} \leftarrow \inlinechan{},\ %
\emptyset \rightarrow \inlinechan{unreliable,msgright} \leftarrow \inlinechan{},\ \AIneN) \in R$.

\item There are additional representative steps that differ in how
  much of the left-hand side is borrowed, but can be proven
  analogously to the two previously discussed steps.
\end{itemize}

\begin{figure}[ht]
  \begin{subfigure}[b]{0.48\textwidth}
    \centering\begin{tikzpicture}[gedge/.append style={font=\footnotesize}]
      \foreach \xpos/\chanlabel in {0/\unrelLabel,3.5/\relLabel} {
        \begin{scope}[shift={(\xpos,0)}]
          \node[gn] (u1) at (-0.35,0) {};
          \node[gn] (u2) at (0.35,0) {};
          \node at (-0.35cm-8pt,0pt) {\footnotesize$1$};
          \node at (0.35cm+8pt,0pt) {\footnotesize$2$};
          \draw[gedge] (u1) -- node[below]{$\chanlabel$} (u2);
        \end{scope}
      }
      \node at (1.75,0) {$R$};

      \foreach \xpos/\chanlabel in {0/\unrelLabel,3.5/\relLabel} {
        \begin{scope}[shift={(\xpos,-2.75)}]
          \node[gn] (u1) at (-0.35,0) {};
          \node[gn] (u2) at (0.35,0) {};
          \node at (-0.35cm-8pt,0pt) {\footnotesize$1$};
          \node at (0.35cm+8pt,0pt) {\footnotesize$2$};
          \draw[gedge] (u1) -- node[below]{$\chanlabel$} (u2);
          \draw[gedge] (u2) to[loop,out=45+90,in=45+20,distance=0.4cm] node[right]{\ $m$} (u2);
        \end{scope}
      }
      \node at (1.75,-2.75) {$R$};

      \foreach \xpos/\chanlabel in {0/\unrelLabel,3.5/\relLabel} {
        \begin{scope}[shift={(\xpos,-5.5)}]
          \node[gn] (u1) at (-0.35,0) {};
          \node[gn] (u2) at (0.35,0) {};
          \node at (-0.35cm-8pt,0pt) {\footnotesize$1$};
          \node at (0.35cm+8pt,0pt) {\footnotesize$2$};
          \draw[gedge] (u1) -- node[below]{$\chanlabel$} (u2);
          \draw[gedge] (u2) to[loop,out=45+90,in=45+20,distance=0.4cm] node[right]{\ $m$} (u2);
          \draw[gedge] (u2) to[loop,out=45+90+180+10,in=45+180+50,distance=0.4cm] node[right]{$m$} (u2);
        \end{scope}
      }
      \node at (1.75,-5.5) {$R$};

      \foreach \ypos in {0,-2.75} {
        \draw[->] (0,\ypos-0.6) to (0,\ypos-2.25);
        \node at (0.2,\ypos-2.1) {\footnotesize$C$};
        \draw[->] (3.5,\ypos-0.6) to (3.5,\ypos-2.25);
        \node at (3.5+0.2,\ypos-2.1) {\footnotesize$C$};
      }

      \foreach \ypos in {0,-2.75} {
      \foreach \xpos/\condlabel in {0.7/\AIneN,4.2/\mathcal B_i} {
        \begin{scope}[shift={(\xpos,\ypos-1.4)}]
          \node[gn] (n1) at (-0.25,0) {};
          \node[gn] (n2) at (0.25,0) {};
          \node at (-0.35cm-6pt,0pt) {\footnotesize$1$};
          \node at (0.25cm+6pt,0pt) {\footnotesize$2$};
          \draw[gedge] (n1) to[loop,out=45+90,in=45,distance=0.4cm] node[right]{\ $m$} (n1);
          \node[anchor=west] at (0.5cm,0pt) {$, \condlabel$};
        \end{scope}
      }
      }

        \draw[->] (0,-5.5-0.5) to (0,-5.5-1.25);
        \draw[->] (3.5,-5.5-0.5) to (3.5,-5.5-1.25);
        \node at (0.2,-5.5-1.1) {\footnotesize$C$};
        \node at (3.5+0.2,-5.5-1.1) {\footnotesize$C$};
        \node at (0,-5.5-1.5) {$\vdots$};
        \node at (3.5,-5.5-1.5) {$\vdots$};
    \end{tikzpicture}%
    \caption{Showing conditional bisimilarity may require an infinite number of steps\\~}%
    \label{fig-infinite-cb-inf}%
  \end{subfigure}\hfill%
  \begin{subfigure}[b]{0.48\textwidth}
    \centering\begin{tikzpicture}[gedge/.append style={font=\footnotesize}]
      \node at (3.5,-5.5-1.5) {$\phantom\vdots$};
      \foreach \xpos/\chanlabel in {0/\unrelLabel,3.5/\relLabel} {
        \begin{scope}[shift={(\xpos,0)}]
          \node[gn] (u1) at (-0.35,0) {};
          \node[gn] (u2) at (0.35,0) {};
          \node at (-0.35cm-8pt,0pt) {\footnotesize$1$};
          \node at (0.35cm+8pt,0pt) {\footnotesize$2$};
          \draw[gedge] (u1) -- node[below]{$\chanlabel$} (u2);
        \end{scope}
      }
      \node at (1.75,0) {$R$};

      \foreach \xpos/\chanlabel in {0/\unrelLabel,3.5/\relLabel} {
        \begin{scope}[shift={(\xpos,-3)}]
          \node[gn] (u1) at (-0.35,0) {};
          \node[gn] (u2) at (0.35,0) {};
          \node at (-0.35cm-8pt,0pt) {\footnotesize$1$};
          \node at (0.35cm+8pt,0pt) {\footnotesize$2$};
          \draw[gedge] (u1) -- node[below]{$\chanlabel$} (u2);
          \draw[gedge,thick,dash pattern=on 1.25pt off 1.75pt,black!75] (u2) to[loop,out=35+90,in=45,distance=0.5cm] node[right]{$\color{black!75}\ m$} (u2);
        \end{scope}
      }
      \node at (1.75,-3) {$R$};

        \draw[->] (0,-0.6) to (0,-2.25);
        \node at (0.2,-2.1) {\footnotesize$C$};
        \draw[->] (3.5,-0.6) to (3.5,-2.25);
        \node at (3.5+0.2,-2.1) {\footnotesize$C$};

      \foreach \xpos/\condlabel in {0.7/\AIneN,4.2/\mathcal B_i} {
        \begin{scope}[shift={(\xpos,-1.4)}]
          \node[gn] (n1) at (-0.25,0) {};
          \node[gn] (n2) at (0.25,0) {};
          \node at (-0.35cm-6pt,0pt) {\footnotesize$1$};
          \node at (0.25cm+6pt,0pt) {\footnotesize$2$};
          \draw[gedge] (n1) to[loop,out=45+90,in=45,distance=0.4cm] node[right]{\ $m$} (n1);
          \node[anchor=west] at (0.5cm,0pt) {$, \condlabel$};
        \end{scope}
      }
    \end{tikzpicture}%
    \caption{Removing unrelated context (dotted edge) before relating the pair would allow stopping after a single step}%
    \label{fig-infinite-cb-upto}%
  \end{subfigure}
  \caption{Showing conditional bisimilarity of two types of channels using the rules \mbox{of \Cref{ex-cb-unreliable}}}%
  \label{fig-infinite-cb}
\end{figure}
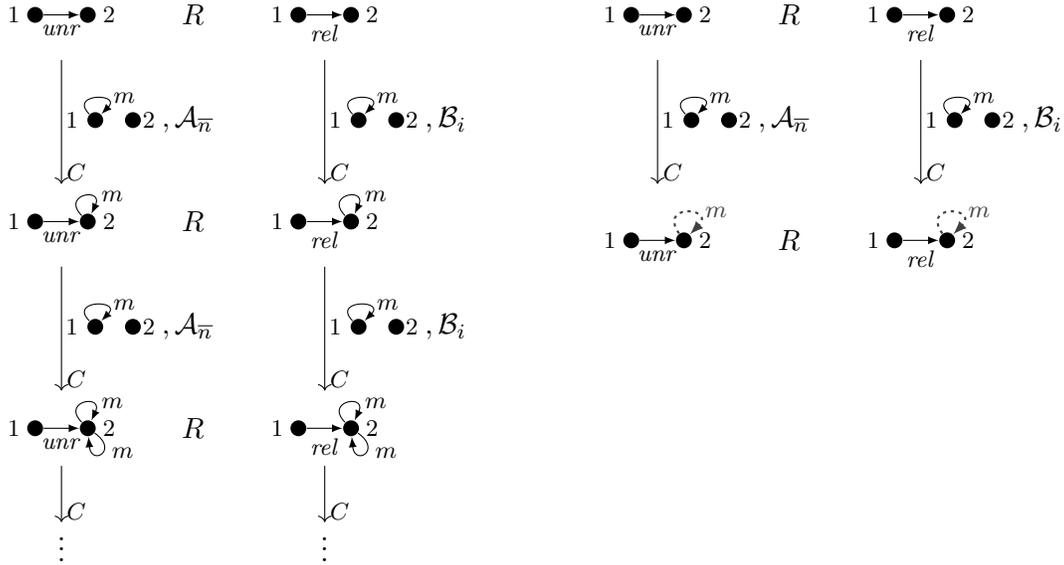

\noindent
This means we have to add the pair
\mbox{$(\emptyset \rightarrow \inlinechan{reliable,msgright} \leftarrow \inlinechan{},\ %
\emptyset \rightarrow \inlinechan{unreliable,msgright} \leftarrow \inlinechan{},\ \AIneN)$}
to $R$ and to continue adding pairs
until we obtain a bisimulation: with every step, a new triple with an
additional $m$-loop on the right node is added to the relation,
therefore, the smallest conditional bisimulation has infinite size.
This is visualized in \Cref{fig-infinite-cb-inf}.

However, except for the additional $m$-loop on the right node, which
does not affect rule application, this pair is identical to the
initial one and we can hence use a similar argument.  In
\Cref{sec:uptocond} we show how to make this formal, using up-to
technique.  In summary, we conclude that $\inlinechan{reliable}$ is
conditionally bisimilar to $\inlinechan{unreliable}$ under the
condition~$\AIneN$.
\end{exa}

\begin{exa}[Unreliable channel vs.\ no channel]\label{ex-cb-unreliable-empty}%
  For \Cref{ex-cb-unreliable,ex-cb-repr-unreliable}, it can also be
  shown that under the condition $\neg\AIneN$, the unreliable
  channel $\emptyset \rightarrow \inlinechan{unreliable} \leftarrow \inlinechan{}$ is conditionally bisimilar
  to not having a channel between the two nodes ($\emptyset \rightarrow \inlinechan{} \leftarrow \inlinechan{}$).

In this case, $\inlinechan{unreliable}$ can still do a reaction under $\AIneN$.
Then, $\inlinechan{}$ can answer with an empty set of steps. The implication $\AIneN \land \mathcal C_{\downarrow f} \models \bigvee_{i \in I} \left( \mathcal C_i^\prime \land \mathcal B_i \right)$
is then simplified to $\AIneN \land \neg \AIneN \models
\condfalse$, which is easily seen to be valid.
\end{exa}

\section{Up-to Techniques for Proving Conditional Bisimilarity}%
\label{sec:uptocond}

Our optimizations so far involved replacing context steps by
representative steps, which ensure finite branching and thus greatly
reduce the proof obligations for a single step. However, it can still
happen very easily that the smallest possible bisimulation is of
infinite size, in which case automated proving of conditional
bisimilarity becomes impossible. For instance, in
\Cref{ex-cb-repr-unreliable}, the least conditional bisimulation
relating the two cospans $u,r$ (representing (un)reliable channels) 
under $\AIneN$
contains infinitely many triples $(u;m^n,\, r;m^n,\, \AIneN)$
for any number~$n$ of messages on the right node
($m = \inlinechan{} \rightarrow \inlinechan{msgright} \leftarrow \inlinechan{}$).

On the other hand, conditional bisimilarity is closed under contextualization,
hence if $u,r$ are related, we can conclude that $u;m$ and $r;m$ must be
related as well. Intuitively the relation $R = \{(u,r,\AIneN)\}$
is a sufficient witness, since after one step we reach the triple
$(u;m,\, r;m,\, \AIneN)$, from which we can ``peel off'' a common context $m$ to
obtain a triple already contained in $R$
(visualized in \Cref{fig-infinite-cb-upto}).



This is an instance of an \emph{up-to technique}, which can be used to
obtain smaller witness relations by identifying and removing redundant
elements from a bisimulation relation.  Instead of requiring the
redundant triple $(u;m,\, r;m,\, \AIneN)$ to be contained in the relation, it is
sufficient to say that \emph{up to} the passive context $m$, the triple is
represented by $(u,r,\AIneN)$, which is already contained in the
relation.  In particular, this specific up-to technique is known as
\emph{up-to context}~\cite{ps:enhancements-coinductive}, a well-known proof
technique for process calculi.%
\footnote{Stated in the language of
  process algebra, a symmetric relation $R$ is a bisimulation up-to context,
  if whenever $(P,Q)\in R$ and $P \xrightarrow{a} P'$,
  then $Q \xrightarrow{a} Q'$, where $P' = C[P'']$, $Q' = C[Q'']$ and
  $(P'',Q'')\in R$.}

Note that in general, a bisimulation up-to context is not a
bisimulation relation.  However, it can be converted into a
bisimulation by closing it under all contexts.

In this section, we show how to adapt this concept to conditional
bisimilarity and~in particular discuss how to deal with the conditions
in a conditional bisimulation \mbox{up-to context}.

\subsection{Up-To Techniques and Fixpoint Theory}
As in \Cref{sec:condbisim}, we will provide definitions and proofs
that are based on fixpoint theory. Hence, we first introduce the
remaining preliminary concepts for implementing up-to techniques using
fixpoint theory, again mostly following~\cite{ps:enhancements-coinductive}.

In \Cref{subsec-fixpoint-basics} we already explained that to show that
some element $l$ of the lattice is contained in behavioural equivalence
($l \sqsubseteq \nu f$), it is sufficient to prove that $l$ is under some
post-fixpoint $l'$ ($l \sqsubseteq l' \sqsubseteq f(l')$).
  The idea of using up-to techniques is now to define a monotone
  function $u$ (the up-to function) and check if
  $l \sqsubseteq \nu(f \circ u)$ by showing that $l$ is a
  post-fixpoint of $f\circ u$.  Typically, the characteristics
  of $u$ should make it easier to prove $l \sqsubseteq f(u(l))$
  than proving $l \sqsubseteq f(l)$.  This is clearly the case when
  $u$ is extensive, since extensiveness of $u$ and monotonicity of $f$
  implies $f(l) \sqsubseteq f(u(l))$ and thus obtaining
  $l \sqsubseteq f(u(l))$ is easier than obtaining
  $l \sqsubseteq f(l)$.

Naturally, for the up-to technique to be useful, it also has to be shown that $\nu(f \circ u) \sqsubseteq \nu f$:



\begin{defi}[sound up-to function]
  \label{def-soundness}
  Let $L$ be a complete lattice and let $f \colon L \to L$ be a monotone
  function. A \emph{sound up-to function} for $f$ is any monotone
  function $u \colon L \to L$ such that
  $\nu (f \circ u) \sqsubseteq \nu f$.
\end{defi}



Instead of soundness, we will use the stronger notion of compatibility:


\begin{defi}[$f$-compatibility of $u$]
  \label{def:compatibility}
  Let $L$ be a complete lattice and let $f \colon L \to L$ be a monotone
  function. A monotone function $u \colon L \to L$ is \emph{$f$-compatible} if
  $u \circ f \sqsubseteq f \circ u$.
\end{defi}

  Compatibility has several advantages over soundness: First,
  compatibility implies soundness, second, $f$-compatible up-to
  techniques can be combined with each other to obtain more powerful
  proof techniques, and third, it implies that the up-to function
  preserves the greatest fixpoint (a kind of congruence
  result):

%

\begin{lem}[Compatibility implies soundness]%
  \label{le:up-to-closure}
  Let $f \colon L \to L$ be a monotone function and let $u \colon L \to L$ be an
  $f$-compatible closure.
  Then $\nu f = \nu (f\circ u)$.
\end{lem}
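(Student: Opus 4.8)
The plan is to establish the two inclusions $\nu f \sqsubseteq \nu(f \circ u)$ and $\nu(f \circ u) \sqsubseteq \nu f$ separately, relying throughout on Tarski's characterization $\nu g = \bigsqcup\{x \mid x \sqsubseteq g(x)\}$ of the greatest fixpoint as the supremum of all post-fixpoints. First I would note that $f \circ u$ is monotone as a composition of monotone functions, so $\nu(f \circ u)$ is well-defined. For the inclusion $\nu f \sqsubseteq \nu(f \circ u)$, I observe that extensiveness of $u$ together with monotonicity of $f$ yields $f(l) \sqsubseteq f(u(l)) = (f \circ u)(l)$ for every $l$, i.e.\ $f$ lies pointwise below $f \circ u$. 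Consequently every post-fixpoint of $f$ is a post-fixpoint of $f \circ u$: from $l \sqsubseteq f(l)$ we obtain $l \sqsubseteq (f \circ u)(l)$. In particular $\nu f$, being a fixpoint and hence a post-fixpoint of $f$, is a post-fixpoint of $f \circ u$, so $\nu f \sqsubseteq \nu(f \circ u)$ follows from Tarski. This direction is routine and uses only extensiveness, not compatibility.

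For the reverse inclusion $\nu(f \circ u) \sqsubseteq \nu f$ --- the actual soundness statement, and the only place where compatibility enters --- I set $x \defeq \nu(f \circ u)$, so that $x = f(u(x))$. The key step is to show that $u(x)$ is a post-fixpoint of $f$. Applying $u$ to the fixpoint equation gives $u(x) = u(f(u(x)))$; then compatibility $u \circ f \sqsubseteq f \circ u$ gives $u(f(u(x))) \sqsubseteq f(u(u(x)))$, and idempotence $u \circ u = u$ simplifies the right-hand side to $f(u(x))$. Chaining these yields $u(x) \sqsubseteq f(u(x))$, so $u(x)$ is indeed a post-fixpoint of $f$, whence $u(x) \sqsubseteq \nu f$ by Tarski. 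Finally, extensiveness supplies $x \sqsubseteq u(x) \sqsubseteq \nu f$, establishing $\nu(f \circ u) \sqsubseteq \nu f$ and thus the claimed equality.

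The main obstacle is the precise interplay of the three closure properties in the reverse direction: one must apply $u$ to the fixpoint equation at exactly the right moment so that compatibility can be used to ``push $u$ through $f$'', and then discharge the resulting duplicated $u$ via idempotence. Identifying $u(x)$ --- rather than $x$ itself --- as the correct post-fixpoint of $f$ to consider is the conceptual heart of the argument; once this is seen, the remaining steps are immediate consequences of Tarski's Theorem and extensiveness.
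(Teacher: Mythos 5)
Your proof is correct. It differs from the paper's proof in one substantive way: for the inclusion $\nu(f \circ u) \sqsubseteq \nu f$ --- the actual soundness direction --- the paper simply cites the general compatibility-implies-soundness theorem \cite[Theorem 6.3.9]{pous_sangiorgi_2011}, which holds for \emph{any} $f$-compatible monotone function $u$ (no closure assumptions needed), whereas you give a short self-contained argument that genuinely exploits idempotence and extensiveness. Your key step --- apply $u$ to the fixpoint equation, push $u$ through $f$ by compatibility, collapse $u \circ u$ by idempotence, conclude that $u(\nu(f \circ u))$ is a post-fixpoint of $f$, and finish with Tarski plus extensiveness --- is essentially the paper's own proof of \Cref{lem-pride} (the equivalence $l \sqsubseteq f(u(l)) \iff u(l) \sqsubseteq f(u(l))$) specialized to $l = \nu(f \circ u)$, so the technique does appear in the paper, just not inside this lemma. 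What each approach buys: the citation route shows the hard direction is true in greater generality (for non-idempotent, non-extensive $u$ the proof requires iterating $u$, e.g.\ via the companion, which is what the cited theorem packages up), while your route keeps the lemma elementary and self-contained at the price of using the closure hypotheses --- which are available here anyway. The easy direction $\nu f \sqsubseteq \nu(f \circ u)$ is the same in both: extensiveness of $u$, monotonicity of $f$, and Tarski's characterization of $\nu$ as the supremum of post-fixpoints.
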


\begin{proof}
  $\nu (f \circ u) \sqsubseteq \nu f$ by~\cite[Theorem 6.3.9]{pous_sangiorgi_2011}.
  $\nu f \sqsubseteq \nu (f \circ u)$ because $u$ is a closure and therefore extensive.
\end{proof}

Note that the first inclusion is sufficient to show soundness, but we will later also use the second one to simplify some of the proofs.

This lemma gives rise to the following proof rule:
to show that $l$ is in the behavioural equivalence ($l \sqsubseteq \nu f$),
it is sufficient to show that $l$ is a post-fixpoint of $f \circ u$
(i.e., \mbox{$l \sqsubseteq f(u(l))$}), which implies
$l \sqsubseteq \nu(f \circ u) = \nu f$.

\begin{lem}[Compositionality of $f$-compatible functions {\cite[Proposition 1.6]{p:complete-lattices-up-to}}]
  If $u_1, u_2$ are $f$-compatible, then $u_1 \circ u_2$ is also $f$-compatible.
\end{lem}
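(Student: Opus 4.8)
The plan is to chain the two given compatibility inequalities, using the monotonicity of $u_1$ to interleave the factor $u_2$ on the inside. Recall that $f$-compatibility is the pointwise inequality of functions $u_i \circ f \sqsubseteq f \circ u_i$, i.e.\ $u_i(f(l)) \sqsubseteq f(u_i(l))$ for all $l \in L$; note also that $u_1, u_2$ are monotone, as that is part of the definition of an $f$-compatible function. I will freely use two elementary facts about the pointwise order on functions $L \to L$: first, if $g \sqsubseteq h$ then $g \circ k \sqsubseteq h \circ k$ for every $k$, which holds without any hypothesis since $g(k(l)) \sqsubseteq h(k(l))$; and second, if $g \sqsubseteq h$ and $k$ is monotone, then $k \circ g \sqsubseteq k \circ h$, which is precisely monotonicity of $k$ applied argumentwise. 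At the outset I should also record that $u_1 \circ u_2$ is monotone as a composition of monotone functions, so that it qualifies as a candidate up-to function.

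The core computation is then the chain
\[
  (u_1 \circ u_2) \circ f
  = u_1 \circ (u_2 \circ f)
  \sqsubseteq u_1 \circ (f \circ u_2)
  = (u_1 \circ f) \circ u_2
  \sqsubseteq (f \circ u_1) \circ u_2
  = f \circ (u_1 \circ u_2).
\]
The first inequality applies $f$-compatibility of $u_2$ (giving $u_2 \circ f \sqsubseteq f \circ u_2$) and then uses monotonicity of $u_1$ to precompose; the second inequality applies $f$-compatibility of $u_1$ (giving $u_1 \circ f \sqsubseteq f \circ u_1$) and postcomposes with $u_2$, which needs no hypothesis at all. Transitivity of $\sqsubseteq$ then yields $(u_1 \circ u_2) \circ f \sqsubseteq f \circ (u_1 \circ u_2)$, which is exactly the $f$-compatibility of $u_1 \circ u_2$.

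There is essentially no obstacle here: the only point that requires care is recognizing that it is monotonicity of $u_1$ (not of $f$) that licenses the first inequality, since we are pushing an inequality outward through $u_1$; the second inequality, by contrast, is a pure postcomposition and is monotonicity-free. Since the up-to functions in this paper are required to be monotone, this hypothesis is automatically available, and the argument goes through symmetrically regardless of which factor sits on the outside.
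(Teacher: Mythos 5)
Your proof is correct. The paper does not actually prove this lemma itself---it defers to the cited \cite[Proposition~1.6]{p:complete-lattices-up-to}---and your chain $(u_1 \circ u_2) \circ f \sqsubseteq u_1 \circ f \circ u_2 \sqsubseteq f \circ u_1 \circ u_2$ is precisely the standard argument behind that citation, with the bookkeeping handled correctly: monotonicity of $u_1$ is what licenses pushing the inequality $u_2 \circ f \sqsubseteq f \circ u_2$ through $u_1$, while the second step is mere instantiation of $u_1 \circ f \sqsubseteq f \circ u_1$ at the argument $u_2(l)$ and needs no extra hypothesis.
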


\begin{prop}%
  \label{compat-preserves-fp}
  Let $f \colon L \to L$ be a monotone function and
  let $u \colon L \to L$ be an $f$-compatible closure.
  Then $u(\nu f) = \nu f$.
\end{prop}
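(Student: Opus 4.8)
The plan is to prove the equality by establishing the two inclusions $\nu f \sqsubseteq u(\nu f)$ and $u(\nu f) \sqsubseteq \nu f$ and then concluding by antisymmetry of the lattice order $\sqsubseteq$. The first inclusion is immediate: since $u$ is a closure it is in particular extensive, so $\nu f \sqsubseteq u(\nu f)$ holds with no further work. All the real content lies in the reverse inclusion.

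For $u(\nu f) \sqsubseteq \nu f$, the idea is to show that $u(\nu f)$ is itself a post-fixpoint of $f$ and then invoke the characterization of the greatest fixpoint from \Cref{subsec-fixpoint-basics}, namely $\nu f = \bigsqcup \{ x \mid x \sqsubseteq f(x) \}$, which guarantees that every post-fixpoint lies below $\nu f$. To see that $u(\nu f)$ is a post-fixpoint, I would start from the fixpoint identity $\nu f = f(\nu f)$, rewrite $u(\nu f)$ as $u(f(\nu f))$, and then apply $f$-compatibility of $u$ (\Cref{def:compatibility}), i.e.\ $u \circ f \sqsubseteq f \circ u$, to obtain $u(f(\nu f)) \sqsubseteq f(u(\nu f))$. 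Chaining these gives $u(\nu f) \sqsubseteq f(u(\nu f))$, exactly the post-fixpoint condition, from which $u(\nu f) \sqsubseteq \nu f$ follows.

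Combining the two inclusions yields $u(\nu f) = \nu f$, as required. The proof is short and essentially mechanical once the right move is spotted, so I do not expect a genuine obstacle; the only point that requires care is recognizing that compatibility should be applied to the argument $f(\nu f)$ (using the fixpoint identity to expose an application of $f$), rather than trying to argue directly from \Cref{le:up-to-closure}. I note that an alternative route is available through $\nu f = \nu(f \circ u)$ of \Cref{le:up-to-closure}, using idempotence of $u$, but the direct post-fixpoint argument above is cleaner and avoids detouring through $f \circ u$.
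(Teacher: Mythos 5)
Your proof is correct and is essentially the paper's: the first inclusion is obtained identically, from extensiveness of the closure $u$. For the reverse inclusion the paper simply cites \cite[Remark 1.5]{p:complete-lattices-up-to}, and the argument you spell out --- using the fixpoint identity $\nu f = f(\nu f)$ together with compatibility to get $u(\nu f) = u(f(\nu f)) \sqsubseteq f(u(\nu f))$, hence $u(\nu f) \sqsubseteq \nu f$ because every post-fixpoint lies below the greatest fixpoint --- is precisely the content of that cited remark, so your version just makes the paper's appeal to the literature self-contained.
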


\begin{proof}
  $\nu f \sqsubseteq u(\nu f)$ holds because $u$ is a closure and
  therefore extensive.

  Since $u$ is $f$-compatible,~\cite[Remark
  1.5]{p:complete-lattices-up-to} guarantees
  $u(\nu f) \sqsubseteq \nu f$.
%
%
\end{proof}


We will also need the following (straightforward) result.

\begin{lem}[Characterization of post-fixpoints of $f\circ u$]\label{lem-pride}%
  Let $f \colon L \to L$ be a monotone function,
  let $u \colon L \to L$ be an $f$-compatible closure,
  and let $l \in L$.
  Then, $l \sqsubseteq f(u(l)) \iff u(l) \sqsubseteq f(u(l))$.
\end{lem}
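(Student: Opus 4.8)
The plan is to prove the two directions separately, with the backward implication being immediate and the forward implication relying on a short chain that uses each of the three closure properties of $u$ together with $f$-compatibility.

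For the direction $u(l) \sqsubseteq f(u(l)) \implies l \sqsubseteq f(u(l))$, I would simply invoke extensiveness of $u$: since $u$ is a closure, $l \sqsubseteq u(l)$, and combining this with the assumed $u(l) \sqsubseteq f(u(l))$ via transitivity of $\sqsubseteq$ yields $l \sqsubseteq f(u(l))$ at once.

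For the forward direction $l \sqsubseteq f(u(l)) \implies u(l) \sqsubseteq f(u(l))$, I would first apply the monotone function $u$ to both sides of the hypothesis, obtaining $u(l) \sqsubseteq u(f(u(l)))$. Next I would use $f$-compatibility of $u$ (i.e.\ $u \circ f \sqsubseteq f \circ u$) instantiated at the argument $u(l)$, which gives $u(f(u(l))) \sqsubseteq f(u(u(l)))$. Finally, idempotence of the closure $u$ yields $u(u(l)) = u(l)$, hence $f(u(u(l))) = f(u(l))$. Chaining these three facts produces $u(l) \sqsubseteq u(f(u(l))) \sqsubseteq f(u(u(l))) = f(u(l))$, which is exactly the desired inequality.

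I do not expect any genuine obstacle here: every step is forced by one of the defining properties, namely monotonicity, extensiveness, and idempotence of the closure $u$, together with the compatibility inequality $u \circ f \sqsubseteq f \circ u$. The only point requiring care is to instantiate compatibility at the argument $u(l)$ rather than $l$, so that the subsequent appeal to idempotence collapses $u(u(l))$ back to $u(l)$ and closes the chain.
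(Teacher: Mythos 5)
Your proposal is correct and follows essentially the same route as the paper's proof: monotonicity of $u$, then $f$-compatibility instantiated at $u(l)$, then idempotence for the forward direction, and extensiveness plus transitivity for the backward direction. There is nothing to add.
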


\begin{proof}
\begin{proofparts}
\proofPart{$l \sqsubseteq f(u(l)) \Rightarrow u(l) \sqsubseteq f(u(l))$}
  Since $u$ is monotone, $l \sqsubseteq f(u(l))$ implies $u(l) \sqsubseteq u(f(u(l)))$.
  By $f$-compatibility of $u$ ($u \circ f \sqsubseteq f \circ u$),
  then also $u(l) \sqsubseteq f(u(u(l)))$,
  which by idempotence of $u$ is equivalent to $u(l) \sqsubseteq f(u(l))$.


\proofPart{$l \sqsubseteq f(u(l)) \Leftarrow u(l) \sqsubseteq f(u(l))$}
  $u$ is extensive, therefore $l \sqsubseteq u(l)$.
  Combined, we obtain $l \sqsubseteq u(l) \sqsubseteq f(u(l))$
  and hence our desired result.
%
  \qedhere
\end{proofparts}
\end{proof}

\subsection{Conditional Bisimilarity Up-To Context}

We start our investigation of conditional bisimilarity up-to context with the idea of a relation that can be extended to a conditional bisimulation.
To show, using such a conditional bisimulation up-to context $R$, that
a pair of arrows is conditionally bisimilar, it is not in general
necessary to find this pair in $R$, but one can instead extend a pair in $R$ to the pair under review.
As~this extension might provide parts of the context that the original condition referred to, it is necessary to shift the associated condition over the extension.

\begin{defi}[Conditional bisimulation up-to context (CBUC)]\label{def-cbut-51}%
  A conditional relation $R$ is a \emph{conditional bisimulation
  up-to context} if the following holds:
  for each triple $(a,b,\mathcal{C}) \in R$ and each context step
  $a \cstep{f}{\mathcal A} a^\prime$, there are \mbox{answering} steps
  $b \cstep{f}{\mathcal{B}_i} b_i^\prime$, $i\in I$, and conditions
  $\mathcal{C}_i^\dprime$ such that for each $i\in I$ there exists
  $(a_i^\dprime, b_i^\dprime, \mathcal{C}_i^\dprime) \in R$ with
  \mbox{$a^\prime = a_i^\dprime;j_i$},\ %
  \raisebox{0pt}[12pt][0pt]{}
  $b_i^\prime = b_i^\dprime;j_i$ for some
  arrow $j_i$  and additionally
  \mbox{$\mathcal{A} \land \mathcal{C}_{\downarrow f} \models \bigvee_{i \in
    I} \big( \mathcal{C}_{i \downarrow j_i}^{\prime\prime} \land
    \mathcal{B}_i \big)$};
  vice versa for steps $b \cstep{f}{\mathcal B} b'$.
\end{defi}
The situation for one answer step is depicted in \Cref{fig-blume}.
The weakest possible $\mathcal A, \mathcal B_i$ can be derived from the rule
conditions as $\mathcal A = \mathcal R_{\downarrow c},\ \mathcal B_i
= {\mathcal R_i}_{\downarrow e_i}$.

\begin{figure}[ht]%
\centering\begin{tikzpicture}

  \def\sq{1.75}

  \node (tlempty) at (0,0) {$\obnull$};
  \node (i) at (1*\sq,0) {$I$};
  \node (trempty) at (2*\sq,0) {$\obnull$};
  \node (j) at (0,-1*\sq) {$J$};
  \node (k) at (1*\sq,-1*\sq) {$K$};
  \node (jp) at (2*\sq,-1*\sq) {$J'$};
  \node (blempty) at (0,-2*\sq) {$\obnull$};
  \node (ii) at (1*\sq,-2*\sq) {$I_i$};
  \node (brempty) at (2*\sq,-2*\sq) {$\obnull$};
  \draw[->] (tlempty) -- node[above]{$\ell$} (i);
  \draw[->] (trempty) -- node[above]{$r$} (i);
  \draw[->] (trempty) -- node[below,pos=0.25]{$a^\prime$} (k.30);
  \draw[->] (blempty) -- node[above]{$\ell_i$} (ii);
  \draw[->] (brempty) -- node[above]{$r_i$} (ii);
  \draw[->] (brempty) -- node[above,pos=0.25]{$b_i^\prime$} (k.-30);

  \draw[->] (tlempty) -- node[left]{$a$} (j);
  \draw[->] (blempty) -- node[left]{$b$} (j);
  \draw[->] (i) -- node[left]{$c$} (k);
  \draw[->] (ii) -- node[left]{$e_i$} (k);
  \draw[->] (trempty) -- node[right]{$a_i^\dprime$} (jp);
  \draw[->] (brempty) -- node[right]{$b_i^\dprime$} (jp);

  \draw[->] (j) -- node[above]{$f$} (k);
  \draw[->] (jp) -- node[above]{$j_i$} (k);

  \node[condtri,dart tip angle=60,shape border rotate=270] at (i.north) {$\mathcal{R}$};
  \node[condtri,dart tip angle=60,shape border rotate=0] at (j.west) {$\kern 0.5mm \mathcal{C}$};
  \node[condtri,dart tip angle=60,shape border rotate=90] at (ii.south) {$\kern -0.5mm\mathcal{R}_i\kern -0.5mm$};
  \node[condtri,dart tip angle=60,shape border rotate=180] at (jp.east) {$\kern -1mm \mathcal{C}_i^\dprime$};
  \node[rotate around={-20:(k.center)},condtri,shape border rotate=270,scale=0.9] at (k.80) {$\mathcal{A}$};
  \node[rotate around={20:(k.center)},condtri,shape border rotate=90,scale=0.9] at (k.-80) {\raisebox{0pt}[1.8pt][1.5pt]{$\mathcal{B}_i$}};
\end{tikzpicture}%
\caption{A single answer step in conditional bisimulation up-to context}%
\label{fig-blume}
\end{figure}
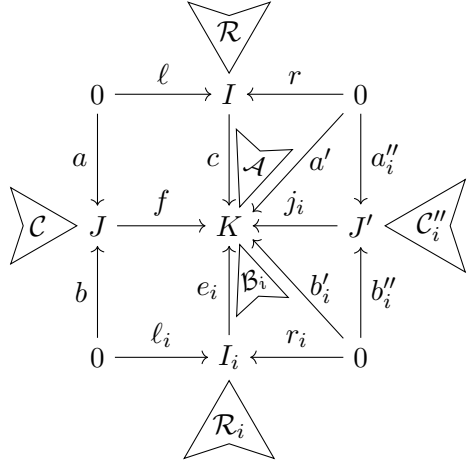

Compared to a regular conditional bisimulation, which directly relates
the results of the answering steps
$(a^\prime, b_i^\prime, \mathcal{C}_i^\prime)$, in a CBUC it is
sufficient to relate some pair
$(a_i^\dprime, b_i^\dprime, \mathcal{C}_i^\dprime)$, where
$a_i^\dprime, b_i^\dprime$ are obtained from $a^\prime, b_i^\prime$ by
removing an identical context $j_i$.
(The conditional bisimilarity of the actual successors
$a^\prime, b_i^\prime$ can then be derived by contextualizing the relation, i.e.,
we use \Cref{thm-cbut-intu} and refer to a triple in $u(R)$ that is contextualized under $j_i$.)

\begin{rem}\label{def-cbuc-post-fixpoint}
  A CBUC can also be defined based on the closure under contextualization~$u$ (see \Cref{def-condrel-u},
  note that $u$ is easily seen to be a closure\footnote{Monotonicity: trivial.
  Extensiveness: $(a,b,\mathcal C) \in R$ implies $(a;\id,b;\id,\mathcal C_{\downarrow \id}) = (a,b,\mathcal C) \in u(R)$.
  Idempotence:
  $u \subseteq u \circ u$ by extensiveness and monotonicity,
  $u \circ u \subseteq u$ because subsequent contextualization first under $d_1$ then $d_2$ can also be done in a single step of $d_1;d_2$.
  }):
  A conditional relation $R$ is a CBUC if and only if
  $R \subseteq f_C(u(R))$
  (i.e.\ $R$ is a post-fixpoint of $f_C \circ u$).
  This can be seen by expanding the definitions of $f_C$ and $u$ on the right-hand side of $(a,b,\mathcal C) \in R \implies (a,b,\mathcal C) \in f_C(u(R))$,
  which results in exactly the definition of a CBUC\@.
\end{rem}

We now show that this up-to technique is useful or \emph{sound} (\Cref{def-soundness}),
that is, all
elements recognized as bisimilar by the up-to technique are actually
bisimilar~\cite{sangiorgi,ps:enhancements-coinductive}.
In fact, we prove the stronger result that the technique is $f_C$-compatible,
which (as outlined in \Cref{subsec-fixpoint-basics}) not only implies soundness,
but also makes it possible to combine our technique with other $f_C$-compatible
up-to techniques.

\begin{thm}[$u$ is $f_C$-compatible]\label{thm-compat-fc}
  Let $R$ be a conditional relation. Then it holds that
  $u(f_C(R)) \subseteq f_C(R)$. This implies that $u(f_C(R)) \subseteq
  f_C(u(R))$, i.e., $u$ is $f_C$-compatible.
\end{thm}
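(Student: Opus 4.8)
The plan is to prove the first inclusion $u(f_C(R)) \subseteq f_C(R)$ directly, and then obtain $f_C$-compatibility as an immediate corollary. This inclusion is essentially a replay of the \emph{closed under contextualization} part of \Cref{cb-congruence}, with the simplification that the successors now land directly in $R$ rather than in $u(R)$.

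First I would unfold the definitions. By \Cref{def-condrel-u}, a triple in $u(f_C(R))$ has the form $(a;d,\,b;d,\,\mathcal C_{\downarrow d})$ arising from some $(a,b,\mathcal C) \in f_C(R)$. The goal is to verify that this triple satisfies the defining clause of $f_C$ (\Cref{csimc-fc}) with respect to the \emph{same} relation $R$. So I would take an arbitrary context step $a;d \cstep{f}{\mathcal A} a'$ and, using \Cref{lemma-ctxtrans-composition}, rewrite it as $a \cstep{d;f}{\mathcal A} a'$.

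The key step is then to apply the hypothesis $(a,b,\mathcal C) \in f_C(R)$ to this rewritten step, now with borrowed context $d;f$. This yields answering steps $b \cstep{d;f}{\mathcal B_i} b_i'$ and conditions $\mathcal C_i'$ with $(a',b_i',\mathcal C_i') \in R$ and $\mathcal A \land \mathcal C_{\downarrow d;f} \models \bigvee_{i \in I}(\mathcal C_i' \land \mathcal B_i)$. Rewriting the answering steps back via \Cref{lemma-ctxtrans-composition} gives $b;d \cstep{f}{\mathcal B_i} b_i'$; the successor membership $(a',b_i',\mathcal C_i') \in R$ is exactly what $f_C(R)$ demands (no further contextualization is required); and the implication matches the required form $\mathcal A \land (\mathcal C_{\downarrow d})_{\downarrow f} \models \bigvee_{i \in I}(\mathcal C_i' \land \mathcal B_i)$ once we apply the shift law $(\mathcal C_{\downarrow d})_{\downarrow f} \equiv \mathcal C_{\downarrow d;f}$ from \Cref{shift-laws}. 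The reverse direction (steps of $b;d$ answered by $a;d$) follows symmetrically from the ``vice versa'' clause of the $f_C(R)$ hypothesis, establishing $u(f_C(R)) \subseteq f_C(R)$.

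Finally I would derive $f_C$-compatibility. Since $u$ is extensive we have $R \subseteq u(R)$, and monotonicity of $f_C$ then gives $f_C(R) \subseteq f_C(u(R))$; chaining with the inclusion just proven yields $u(f_C(R)) \subseteq f_C(R) \subseteq f_C(u(R))$, i.e.\ $u \circ f_C \sqsubseteq f_C \circ u$. I do not expect a genuine obstacle: the only care required is the bookkeeping of which relation the successors belong to ($R$, not $u(R)$) and the repeated use of the shift law together with \Cref{lemma-ctxtrans-composition} to slide the context $d$ across the labelled arrows; the argument is otherwise routine.
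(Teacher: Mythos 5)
Your proposal is correct and follows essentially the same route as the paper's proof: both establish $u(f_C(R)) \subseteq f_C(R)$ by sliding the context $d$ across the labelled step via \Cref{lemma-ctxtrans-composition}, invoking the $f_C(R)$-hypothesis on the step with borrowed context $d;f$, transferring the answering steps back, and using the shift law $(\mathcal C_{\downarrow d})_{\downarrow f} \equiv \mathcal C_{\downarrow d;f}$, before deriving compatibility from extensiveness of $u$ and monotonicity of $f_C$. The only difference is presentational (you start from an arbitrary step of the contextualized pair, whereas the paper specializes the hypothesis to borrowed contexts of shape $d;f'$), which amounts to the same argument since the lemma is an equivalence.
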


\begin{proof}
  We show that $(a,b,\mathcal C) \in u(f_C(R))$ implies $(a,b,\mathcal C) \in f_C(R)$.

  \newcommand{\pfeilohne}[2]{{#1}_{\setminus #2}}
  \newcommand{\conditionohne}[2]{{#1}_{\uparrow #2}}
  Within this proof, we use the following notation:
  Given arrows $x,d$,\ $\pfeilohne{x}{d}$ indicates an arrow such that $x = \pfeilohne{x}{d} ; d$.
  Given a condition $\mathcal C$, $\conditionohne{\mathcal C}{d}$ is a condition such that $\mathcal C = (\conditionohne{\mathcal C}{d})_{\downarrow d}$.

  \begin{itemize}
  \item
  Since $(a,b,\mathcal C) \in u(f_C(R))$,
  by definition of $u$ this means that
  \begin{quote}
  there exist $d, \pfeilohne{a}{d}, \pfeilohne{b}{d}, \conditionohne{\mathcal C}{d}$
  such that $a = \pfeilohne{a}{d};d,\ b = \pfeilohne{b}{d};d,\ \mathcal C = (\conditionohne{\mathcal C}{d})_{\downarrow d}$, and
  $(\pfeilohne{a}{d}, \pfeilohne{b}{d}, \conditionohne{\mathcal C}{d}) \in f_C(R)$
  \end{quote}

  \item
  This means that
  \begin{quote}
  for all steps $\pfeilohne{a}{d} \cstep{f}{\mathcal A} a'$
  there exist answering steps $\pfeilohne{b}{d} \cstep{f}{\mathcal B_i} b_i'$
  and conditions $\mathcal C_i'$
  such that
  $(a', b_i', \mathcal C_i') \in R$
  and
  $\mathcal A \land (\conditionohne{\mathcal C}{d})_{\downarrow f}
  \models \bigvee\nolimits_{i \in I} (\mathcal C_i' \land \mathcal B_i)$;
  vice versa for steps of $\pfeilohne{b}{d}$
  \end{quote}

  \item
  Some of the borrowed contexts $f$ for which this statement holds are of the shape $f = d;f'$, therefore, we also know that
  \begin{quote}
  for all steps $\pfeilohne{a}{d} \cstep{d;f'}{\mathcal A} a'$
  there exist answering steps $\pfeilohne{b}{d} \cstep{d;f'}{\mathcal B_i} b_i'$
  and conditions $\mathcal C_i'$
  such that
  $(a', b_i', \mathcal C_i') \in R$
  and
  $\mathcal A \land (\conditionohne{\mathcal C}{d})_{\downarrow d;f'}
  \models \bigvee\nolimits_{i \in I} (\mathcal C_i' \land \mathcal B_i)$;
  vice versa for steps of $\pfeilohne{b}{d}$
  \end{quote}

  \item
  Applying \Cref{lemma-ctxtrans-composition,shift-laws} to this statement, we obtain:
  \begin{quote}
  for all steps $\pfeilohne{a}{d};d \cstep{f'}{\mathcal A} a'$
  there exist answering steps $\pfeilohne{b}{d};d \cstep{f'}{\mathcal B_i} b_i'$
  and conditions $\mathcal C_i'$
  such that
  $(a', b_i', \mathcal C_i') \in R$
  and
  $\mathcal A \land ((\conditionohne{\mathcal C}{d})_{\downarrow d})_{\downarrow f'}
  \models \bigvee\nolimits_{i \in I} (\mathcal C_i' \land \mathcal B_i)$;
  vice versa for steps of $\pfeilohne{b}{d};d$
  \end{quote}

  \item
  Applying the equalities for the variables annotated with $\pfeilohne{}{d}$ and $\conditionohne{}{d}$, we get:
  \begin{quote}
  for all steps $a \cstep{f'}{\mathcal A} a'$
  there exist answering steps $b \cstep{f'}{\mathcal B_i} b_i'$
  and conditions $\mathcal C_i'$
  such that
  $(a', b_i', \mathcal C_i') \in R$
  and
  $\mathcal A \land (\mathcal C)_{\downarrow f'}
  \models \bigvee\nolimits_{i \in I} (\mathcal C_i' \land \mathcal B_i)$;
  vice versa for steps of $b$
  \end{quote}
  This is exactly the definition of $(a,b,\mathcal C) \in f_C(R)$.
  Therefore, $u(f_C(R)) \subseteq f_C(R)$.

  Since $R\subseteq u(R)$ due to extensiveness of $u$, we can infer
  $f_C(R) \subseteq f_C(u(R))$ since $f_C$ is monotone. Combined, this
  gives us $f_C$-compatibility of $u$.
  \qedhere
  \end{itemize}
\end{proof}

\noindent
Note the stronger result ($u(f_C(R)) \subseteq f_C(R)$ instead of just $u(f_C(R)) \subseteq f_C(u(R))$)
can intuitively be explained as follows:
since $f_C$ quantifies over all context steps
and the size of the borrowed context $f$ is not bounded,
this means the successor triples are already closed under contextualization.




From compatibility, we obtain as a corollary that this up-to technique is useful or \emph{sound}, that is, all
elements recognized as bisimilar by the up-to technique are actually
bisimilar (see \Cref{le:up-to-closure} and~\cite{sangiorgi,ps:enhancements-coinductive}).

\begin{thm}[Characterization of CBUC]\label{thm-cbut-intu}%
A conditional relation $R$ satis\-fies Definition~\ref{def-cbut-51} 
(i.e.\ it is a CBUC) if and only if
its closure under contextualization $u(R)$
is a conditional bisimulation.
\end{thm}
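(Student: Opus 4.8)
The plan is to avoid re-deriving any of the shift-law bookkeeping and instead reduce the statement entirely to the fixpoint machinery already set up in \Cref{csimc-fc,def-cbuc-post-fixpoint,thm-compat-fc,lem-pride}. The key observation is that both sides of the claimed equivalence are post-fixpoint conditions for the \emph{same} monotone function $f_C$ applied to the \emph{same} argument $u(R)$, so the theorem is just an instance of \Cref{lem-pride}.

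Concretely, I would first rewrite each side in its post-fixpoint form. By \Cref{def-cbuc-post-fixpoint}, the relation $R$ is a CBUC if and only if $R \subseteq f_C(u(R))$, i.e.\ $R$ is a post-fixpoint of $f_C \circ u$. By \Cref{csimc-fc}, the relation $u(R)$ is a conditional bisimulation if and only if $u(R) \subseteq f_C(u(R))$, i.e.\ $u(R)$ is a post-fixpoint of $f_C$. Hence the theorem is equivalent to the single statement $R \subseteq f_C(u(R)) \iff u(R) \subseteq f_C(u(R))$.

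This last equivalence is exactly \Cref{lem-pride} instantiated with the monotone function $f_C$, the up-to function $u$, and the lattice element $l = R$. The hypotheses of \Cref{lem-pride} are met: $u$ is a closure (as noted in the footnote to \Cref{def-cbuc-post-fixpoint}) and it is $f_C$-compatible by \Cref{thm-compat-fc}. Applying the lemma yields both directions at once, so nothing further is required.

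I expect essentially no obstacle, because all the real content has already been isolated into \Cref{thm-compat-fc}. If one instead wanted a direct, unpackaged argument, the genuinely delicate direction would be CBUC $\Rightarrow$ $u(R)$ a conditional bisimulation: one would start from a contextualized triple $(a;d, b;d, \mathcal{C}_{\downarrow d}) \in u(R)$ and a step $a;d \cstep{f}{\mathcal A} a'$, invoke the CBUC property to obtain answering steps relating \emph{peeled} triples $(a_i'', b_i'', \mathcal{C}_i'') \in R$ via some $j_i$, and then contextualize those triples back into $u(R)$ while checking that the implication $\mathcal A \land (\mathcal C_{\downarrow d})_{\downarrow f} \models \bigvee_{i} (\mathcal C_i' \land \mathcal B_i)$ survives the various shifts along $j_i$ and $d$ (using \Cref{shift-laws,lemma-ctxtrans-composition}). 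That shift-and-recombine computation is precisely what the compatibility result $u(f_C(R)) \subseteq f_C(R)$ abstracts away, which is why routing the proof through \Cref{lem-pride} is preferable to redoing it by hand.
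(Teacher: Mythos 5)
Your proof is correct and is essentially identical to the paper's own argument: both recast the two sides as post-fixpoint statements via \Cref{def-cbuc-post-fixpoint} and \Cref{csimc-fc}, and then apply \Cref{lem-pride} using the facts that $u$ is a closure and is $f_C$-compatible (\Cref{thm-compat-fc}). Nothing is missing.
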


\begin{proof}
  $R$ satisfying \Cref{def-cbut-51} is, by \Cref{def-cbuc-post-fixpoint},
  equivalent to $R$ being a post-fixpoint of $f_C \circ u$,
  i.e., $R \subseteq f_C(u(R))$.
  Also, $R$ satisfying the definition from \Cref{thm-cbut-intu},
  i.e., $u(R)$ being a conditional bisimulation, is, by \Cref{csimc-fc},
  equivalent to $u(R) \subseteq f_C(u(R))$.

  Since $u$ is $f_C$-compatible and $u$ is a closure,
  we can instantiate \Cref{lem-pride} to obtain the desired result.
  Hence, every relation that our initial definition recognizes as a CBUC
  indeed represents a conditional bisimulation (when closed under contextualization),
  and all relations that should intuitively be a CBUC are
  recognized by \Cref{def-cbut-51} as such.
\end{proof}

\begin{rem}
    From \Cref{thm-cbut-intu} we easily obtain as a
    corollary that every CBUC $R$ is contained in $\csimC$
    ($R \subseteq \csimC$), i.e.\ all elements contained in some CBUC
    are indeed conditionally bisimilar.  This follows from the
    fact that $R \subseteq u(R)$ (set $d = \id_J$) and
    $u(R) \subseteq \csimC$ (since by \Cref{thm-cbut-intu} $u(R)$
    is a conditional bisimulation).
\end{rem}

  Note that while \Cref{thm-cbut-intu} gives a more accessible
  definition of CBUCs than Definition~\ref{def-cbut-51}, the latter definition 
  is more amenable to mechanization, since $R$ might be finite,
  whereas $u(R)$ is infinite.


\subsection{Conditional Bisimilarity Up-To Context with Representative Steps}

CBUCs allow us to represent certain infinite bisimulation relations in a finite way.
For instance, we can use a finite CBUC in \Cref{ex-cb-repr-unreliable}.
However, automated checking if two agents are conditionally bisimilar --- which can be done by incrementally extending a conditional bisimulation relation --- is still hard, even using up-to context,
since up-to context can only reduce the size of the relation itself.
However, for just a single triple, there are infinitely many context steps to be checked.

For conditional bisimulations, we introduced an alternative definition using representative steps (\Cref{def-cb-r}) and showed that it yields an equivalent notion of conditional bisimilarity (\Cref{simr-simc}).
We will show that the same approach can be used for CBUCs.

\begin{defi}[CBUC with representative steps]\label{def-cbut-r}%
A \emph{CBUC with representative steps} is a conditional relation $R$ such that the following holds:
for each triple $(a,b,\mathcal{C}) \in R$ and each representative step $a \rstep{f}{\mathcal A} a^\prime$,
there are answering steps $b \cstep{f}{\mathcal{B}_i} b_i^\prime$ and conditions $\mathcal{C}_i^\dprime$
such that for each answering step there exists $(a_i^\dprime, b_i^\dprime, \mathcal{C}_i^\dprime) \in R$
\raisebox{0pt}[11pt][4.5pt]{}
with $a^\prime = a_i^\dprime;j_i,\ b_i^\prime = b_i^\dprime;j_i$ for some arrow $j_i$ per answering step,
and additionally $\mathcal{A} \land \mathcal{C}_{\downarrow f} \models \bigvee_{i \in I} \big( \mathcal{C}_{i \downarrow j_i}^{\prime\prime} \land \mathcal{B}_i \big)$;
vice versa for steps \raisebox{0pt}[10pt][3pt]{$b \rstep{f}{\mathcal B} b'$}.
\end{defi}

\begin{rem}\label{def-cbutr-post-fixpoint}
  A CBUC with representative steps can also be defined based on the closure under contextualization~$u$:
  A conditional relation $R$ satisfies \Cref{def-cbut-r} if and only if
  $R \subseteq f_R(u(R))$
  (i.e.\ $R$ is a post-fixpoint of $f_R \circ u$).
  Analogously to \Cref{def-cbuc-post-fixpoint}, this can be seen by expanding definitions.
\end{rem}

To show that CBUCs defined using context and representative steps are essentially equivalent,
we first relate the underlying functions by showing $f_R \subseteq f_C \circ u$.%
\footnote{%
Intuitively, $f_R$ guarantees only that representative steps are answered
by context steps and that their successors are related again,
but $f_C$ requires this for non-representative steps as well,
so generally $f_R \nsubseteq f_C$.
Therefore we contextualize using $u$ to let $f_C$ access the non-representative successors as well.}
Afterwards, we use that result to show that the two up-to techniques are equivalent.

\begin{lem}\label{lem-wrath}
  $f_R(R) \subseteq f_C(u(R))$.
\end{lem}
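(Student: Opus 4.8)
The plan is to show directly that every triple in $f_R(R)$ already lies in $f_C(u(R))$, reusing the context-reduction argument from the proof of \Cref{simr-simc} but routing the successor triples through $u$ rather than through a closure-under-contextualization property of the bisimilarity. So I fix a triple $(a,b,\mathcal{C}) \in f_R(R)$ and verify the defining condition of $f_C(u(R))$: that every \emph{context} step of $a$ is answered by context steps of $b$ whose successors lie in $u(R)$, and symmetrically for $b$.

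First I would take an arbitrary context step $a \cstep{f}{\mathcal A} a^\prime$, arising from some rule $(\ell,r,\mathcal R)$ with reaction context $c$, so that $a^\prime = r;c$ and (by \Cref{def-condcstep}) $\mathcal A \models \mathcal{R}_{\downarrow c}$. By \Cref{cond-repr-zurueckfuehren} this context step reduces to a representative step $a \rstep{\hat f}{\mathcal{R}_{\downarrow \hat c}} r;\hat c$ together with an arrow $\hat g$ satisfying $\hat f;\hat g = f$ and $\hat c;\hat g = c$. Since $(a,b,\mathcal{C}) \in f_R(R)$, I apply the defining condition of $f_R$ to this representative step, obtaining answering context steps $b \cstep{\hat f}{\hat{\mathcal{B}_i}} \hat{b_i^\prime}$ and conditions $\hat{\mathcal{C}_i^\prime}$ with $(r;\hat c,\ \hat{b_i^\prime},\ \hat{\mathcal{C}_i^\prime}) \in R$ and $\mathcal{R}_{\downarrow \hat c} \land \mathcal{C}_{\downarrow \hat f} \models \bigvee_{i \in I}(\hat{\mathcal{C}_i^\prime} \land \hat{\mathcal{B}_i})$.

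Next I would push these answers forward along $\hat g$, exactly as in \Cref{lemma-cbr-cuc}. \Cref{lemma16-dbccrs} turns $b \cstep{\hat f}{\hat{\mathcal{B}_i}} \hat{b_i^\prime}$ into $b \cstep{\hat f;\hat g}{{\hat{\mathcal{B}_i}}_{\downarrow \hat g}} \hat{b_i^\prime};\hat g$, and $\hat f;\hat g = f$ makes these the required answering steps for the original context $f$; I set $\mathcal{B}_i \defeq {\hat{\mathcal{B}_i}}_{\downarrow \hat g}$, $b_i^\prime \defeq \hat{b_i^\prime};\hat g$ and $\mathcal{C}_i^\prime \defeq {\hat{\mathcal{C}_i^\prime}}_{\downarrow \hat g}$. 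The successor triples now land in $u(R)$ rather than $R$: from $(r;\hat c,\ \hat{b_i^\prime},\ \hat{\mathcal{C}_i^\prime}) \in R$ the definition of $u$ yields $(r;\hat c;\hat g,\ \hat{b_i^\prime};\hat g,\ {\hat{\mathcal{C}_i^\prime}}_{\downarrow \hat g}) \in u(R)$, and since $r;\hat c;\hat g = r;c = a^\prime$ this is precisely $(a^\prime, b_i^\prime, \mathcal{C}_i^\prime) \in u(R)$. For the logical obligation I shift $\mathcal{R}_{\downarrow \hat c} \land \mathcal{C}_{\downarrow \hat f} \models \bigvee_{i}(\hat{\mathcal{C}_i^\prime} \land \hat{\mathcal{B}_i})$ along $\hat g$ and simplify using \Cref{shift-laws} (the law $\mathcal{A}_{\downarrow c_1;c_2} \equiv (\mathcal{A}_{\downarrow c_1})_{\downarrow c_2}$ together with distributivity of shift over $\land$ and $\lor$), obtaining $\mathcal{R}_{\downarrow c} \land \mathcal{C}_{\downarrow f} \models \bigvee_{i}(\mathcal{C}_i^\prime \land \mathcal{B}_i)$; finally $\mathcal A \models \mathcal{R}_{\downarrow c}$ gives $\mathcal A \land \mathcal{C}_{\downarrow f} \models \mathcal{R}_{\downarrow c} \land \mathcal{C}_{\downarrow f} \models \bigvee_{i}(\mathcal{C}_i^\prime \land \mathcal{B}_i)$, exactly the implication required for $(a,b,\mathcal C)\in f_C(u(R))$.

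The symmetric direction (context steps $b \cstep{f}{\mathcal B} b^\prime$ answered by $a$) is handled identically via the vice-versa clause of $f_R$. I do not expect a genuine obstacle here, as the construction is a direct adaptation of \Cref{lemma-cbr-cuc,simr-simc}; the only points needing care are the bookkeeping that contextualizing by $\hat g$ automatically places the successors in $u(R)$ (which is precisely why $u$ appears on the right-hand side and why the inclusion is into $f_C(u(R))$ rather than $f_C(R)$), and the identity $r;\hat c;\hat g = a^\prime$. As elsewhere, the infinite-$I$ variant is checked by reading $\models$ pointwise (for every $d$ satisfying the left-hand side there is an $i$ with $d \models \mathcal{C}_i^\prime \land \mathcal{B}_i$) instead of via a finite disjunction.
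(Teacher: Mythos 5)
Your proof is correct and follows essentially the same route as the paper's own proof: reduce an arbitrary context step to a representative one via \Cref{cond-repr-zurueckfuehren}, invoke the $f_R$-clause, push the answering steps forward along $\hat g$ using \Cref{lemma16-dbccrs}, place the successors in $u(R)$ by definition of $u$, and shift the logical implication along $\hat g$ via \Cref{shift-laws} before weakening with $\mathcal A \models \mathcal R_{\downarrow c}$. The bookkeeping details (including $r;\hat c;\hat g = a'$ and the pointwise reading of $\models$ for infinite $I$) all match the paper's argument.
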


\begin{proof}
  Let $(a,b,\mathcal C) \in f_R(R)$ be given, which by its definition means that:

  \begin{quote}
    for all representative steps $a \rstep{\hat f}{\mathcal A} \hat{a'}$
    (1) there are answering steps \mbox{$b \cstep{\hat f}{\hat{\mathcal B_i}} \hat{b_i'}$}
    and conditions $\hat{\mathcal C_i'}$
    (2)\kern-1pt\ such that $(\hat{a'}, \hat{b_i'}, \hat{\mathcal C_i'}) \in R$
    and (3)\kern-1pt\ \mbox{$\mathcal A \land \mathcal C_{\downarrow \hat f}
    \models \mkern-2mu\bigvee\nolimits_{i \in I} (\hat{\mathcal C_i'} \land \hat{\mathcal B_i})$};
    vice versa for representative steps of $b$
  \end{quote}

  We show that this implies $(a,b,\mathcal C) \in f_C(u(R))$.
  Consider a context step $a \cstep{f}{\mathcal A} a'$.
  This step is not necessarily a representative step.
  According to \Cref{cond-repr-zurueckfuehren}, this context step can be reduced to a representative step
  $a \rstep{\hat f}{\mathcal{R}_{\downarrow \hat c}} \hat{a'}$, where
  $\mathcal R$ is the condition of the rule used for the step,
  $c$ is the reactive context of the context step,
  and there exists $\hat g$ such that $\hat f ; \hat g = f,\ \hat c ; \hat g = c,\ \hat{a'} ; \hat g = a'$,
  with $\hat f, \hat c, \hat{a'}$ referring to the representative step.

  Since $(a,b,\mathcal C) \in f_R(R)$, we know that answering steps for our representative step exist.
  From that we can conclude the following:
  \begin{enumerate}
    \item
      $b \cstep{\hat f}{\hat{\mathcal{B}_i}} \hat{b_i'}$ implies, according to \Cref{lemma16-dbccrs}, that a step
      $b \cstep{\hat f;\hat g}{{\hat{\mathcal{B}_i}}_{\downarrow \hat g}} \hat{b_i'};\hat g$, equivalently,
      $b \cstep{f}{{\hat{\mathcal{B}_i}}_{\downarrow \hat g}} \hat{b_i'};\hat g$, is possible.
      We select $\mathcal{B}_i \defeq {\hat{\mathcal{B}_i}}_{\downarrow \hat g},\ b_i' \defeq \hat{b_i'};\hat g,\ \mathcal C_i' \defeq {\hat{\mathcal C_i'}}_{\downarrow \hat g}$ for the answering steps of $b$ using borrowed context $f$.
    \item
      $(\hat{a'}, \hat{b_i'}, \hat{\mathcal C_i'}) \in R$ implies
      $(\hat{a'};\hat g,\ \hat{b_i'};\hat g,\ \hat{\mathcal C_i'}_{\downarrow \hat g})
      = (a', b_i', \mathcal C_i') \in u(R)$.
    \item
      Using the rules of \Cref{shift-laws} we get:
      \begin{align*}
        \mathcal R_{\downarrow \hat c} \land \mathcal C_{\downarrow \hat f}
        & \models \bigvee \left( \hat{\mathcal C_i'} \land \hat{\mathcal B_i} \right)
        \\ \Rightarrow
        (\mathcal R_{\downarrow \hat c})_{\downarrow \hat g} \land {\big(\mathcal C_{\downarrow \hat{f}}\big)}_{\downarrow \hat g}
        & \models \bigvee \left( \hat{\mathcal C_i'}_{\downarrow \hat g} \land \hat{\mathcal B_i}_{\downarrow \hat g} \right)
        \\ \Leftrightarrow
        \mathcal R_{\downarrow \hat c ;\hat g} \land \mathcal C_{\downarrow \hat{f};\hat{g}}
        & \models \bigvee \left( \hat{\mathcal C_i'}_{\downarrow \hat g} \land \hat{\mathcal B_i}_{\downarrow \hat g} \right)
        \\ \Leftrightarrow
        \mathcal R_{\downarrow c} \land \mathcal C_{\downarrow f}
        & \models \bigvee \left( \mathcal C_i' \land \mathcal{B}_i \right)
      \end{align*}

      \noindent
      For context steps, $\mathcal A \models \mathcal R_{\downarrow c}$ holds,
      so we have $
        \mathcal A \land \mathcal C_{\downarrow f}
        \models \mathcal R_{\downarrow c} \land \mathcal C_{\downarrow f}
        \models \bigvee \left( \mathcal{C}_i' \land \mathcal{B}_i \right)
      $.
  \end{enumerate}

  \noindent
  Analogously, we can construct answering steps for $b \cstep{f}{\mathcal B} b'$.
  To summarize, for the given $(a,b,\mathcal C) \in f_R(R)$ we have concluded that
  \begin{quote}
    for all context steps $a \cstep{f}{\mathcal A} a'$
    (1) there exist answering steps $b \cstep{f}{\mathcal B_i} b_i'$
    and conditions $\mathcal C_i'$
    (2) such that $(a', b_i', \mathcal C_i') \in u(R)$
    and (3) $\mathcal A \land \mathcal C_{\downarrow f}
    \models \bigvee\nolimits_{i \in I} (\mathcal C_i' \land \mathcal B_i)$;
    vice versa for context steps of $b$
  \end{quote}
  which is exactly the definition of $(a,b,\mathcal C) \in f_C(u(R))$.
\end{proof}

\begin{cor}\label{cor-gluttony}
  It holds that $f_C \circ u = f_R \circ u$.
\end{cor}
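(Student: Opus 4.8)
The plan is to prove the two inclusions $f_C \circ u \subseteq f_R \circ u$ and $f_R \circ u \subseteq f_C \circ u$ separately, reusing only results already established in the excerpt. Both directions turn out to be short combinations of \Cref{lem-wrath}, the containment $f_C \subseteq f_R$ noted in \Cref{csimr-fr}, and the idempotence of the closure $u$ (recorded in the footnote of \Cref{def-cbuc-post-fixpoint}).

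For the inclusion $f_C \circ u \subseteq f_R \circ u$, I would simply invoke $f_C \subseteq f_R$: since every representative step is also a context step, a triple satisfying the answering requirement for \emph{all} context steps (hence in $f_C(S)$) in particular satisfies it for representative steps (hence in $f_R(S)$), for any conditional relation $S$. Instantiating this monotone inclusion at $S = u(R)$ gives $f_C(u(R)) \subseteq f_R(u(R))$ for every $R$, which is exactly $f_C \circ u \subseteq f_R \circ u$.

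For the converse $f_R \circ u \subseteq f_C \circ u$, I would apply \Cref{lem-wrath}, which states $f_R(S) \subseteq f_C(u(S))$ for every $S$, to the particular relation $S = u(R)$. This yields $f_R(u(R)) \subseteq f_C(u(u(R)))$. Because $u$ is a closure and therefore idempotent, $u(u(R)) = u(R)$, so the right-hand side simplifies to $f_C(u(R))$, giving $f_R(u(R)) \subseteq f_C(u(R))$, i.e.\ $f_R \circ u \subseteq f_C \circ u$.

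Combining the two inclusions yields $f_C \circ u = f_R \circ u$, as required. There is essentially no obstacle here, since the real work has already been done in \Cref{lem-wrath}; the only point that needs to be spelled out is the appeal to idempotence of $u$ to collapse $f_C(u(u(R)))$ to $f_C(u(R))$, which is what lets the inequality of \Cref{lem-wrath} become an equality once both functions are precomposed with $u$.
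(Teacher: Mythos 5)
Your proof is correct and follows essentially the same route as the paper: the inclusion $f_C \circ u \subseteq f_R \circ u$ from $f_C \subseteq f_R$, and the converse by instantiating \Cref{lem-wrath} at $u(R)$ and collapsing $u \circ u$ to $u$ by idempotence. No gaps; the paper's proof is just a terser rendering of exactly these two steps.
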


\begin{proof}
\begin{proofparts}
\proofPartNoNewline{$f_C \circ u \subseteq f_R \circ u$}
  \begin{align*}
    \mkern31mu
    f_C \subseteq f_R \implies & f_C \circ u \subseteq f_R \circ u
  \end{align*}

\proofPartNoNewline{$f_R \circ u \subseteq f_C \circ u$}
  \begin{align*}
    \text{(\Cref{lem-wrath})} \phantom{{}\iff} & f_R \phantom{{}\circ u} \subseteq f_C \circ u \\
    \implies & f_R \circ u \subseteq f_C \circ u \circ u \\
    \text{($u$ idempotent)} \iff & f_R \circ u \subseteq f_C \circ u
    \qedhere
  \end{align*}
\end{proofparts}
\end{proof}

\begin{thm}\label{cbuts-equiv}%
  A conditional relation is a CBUC (\Cref{def-cbut-51}) if and only
  if it is a CBUC with representative steps (\Cref{def-cbut-r}).
\end{thm}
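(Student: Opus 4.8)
The plan is to reduce the claim to the fixpoint-theoretic characterizations already established, so that no new diagrammatic reasoning is required. First I would recall that, by \Cref{def-cbuc-post-fixpoint}, a conditional relation $R$ satisfies \Cref{def-cbut-51} (is a CBUC) if and only if $R$ is a post-fixpoint of $f_C \circ u$, that is, $R \subseteq f_C(u(R))$; and that, by \Cref{def-cbutr-post-fixpoint}, $R$ satisfies \Cref{def-cbut-r} (is a CBUC with representative steps) if and only if $R \subseteq f_R(u(R))$. Thus both notions unfold into post-fixpoint inequalities, namely $R \subseteq f_C(u(R))$ and $R \subseteq f_R(u(R))$ respectively, and the theorem reduces to showing these two inclusions are equivalent for every $R$.

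The key step is then a direct appeal to \Cref{cor-gluttony}, which states the equality $f_C \circ u = f_R \circ u$ of functions on $\Condrel$. Instantiating this equality at the argument $R$ yields $f_C(u(R)) = f_R(u(R))$. Consequently the two post-fixpoint conditions $R \subseteq f_C(u(R))$ and $R \subseteq f_R(u(R))$ are literally the same inclusion, so one holds if and only if the other does.

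Combining these observations closes both directions at once: $R$ is a CBUC iff $R \subseteq f_C(u(R))$ iff, by \Cref{cor-gluttony}, $R \subseteq f_R(u(R))$ iff $R$ is a CBUC with representative steps. No case analysis on the direction of the equivalence is needed, since the reformulations collapse the biconditional into a single chain of rewrites.

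Since the genuine work has already been carried out upstream, I do not anticipate any real obstacle in this theorem itself. The substantive content lives in \Cref{lem-wrath} (the inclusion $f_R(R) \subseteq f_C(u(R))$), whose proof reduces an arbitrary context step to a representative one via \Cref{cond-repr-zurueckfuehren}, extends the borrowed context using \Cref{lemma16-dbccrs}, and transports the satisfaction implication along $\hat g$ with the shift laws of \Cref{shift-laws}; given those, the present statement is essentially a one-line corollary of \Cref{cor-gluttony}. The only point demanding care is to invoke the equality $f_C \circ u = f_R \circ u$ at the argument $R$, and \emph{not} to confuse it with the equality $f_C = f_R$, which fails in general (only $f_C \subseteq f_R$ holds, as noted in \Cref{csimr-fr}).
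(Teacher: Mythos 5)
Your proposal is correct and follows essentially the same route as the paper's own proof: both reduce the two definitions to post-fixpoint conditions for $f_C \circ u$ and $f_R \circ u$ (via the remarks \Cref{def-cbuc-post-fixpoint} and \Cref{def-cbutr-post-fixpoint}) and then conclude immediately from the equality $f_C \circ u = f_R \circ u$ of \Cref{cor-gluttony}. Your closing caution about not conflating this with the false equality $f_C = f_R$ is apt, but otherwise there is nothing to add.
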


\begin{proof}
  $R$ satisfying Definitions~\ref{def-cbut-51} and~\ref{def-cbut-r} means that
  $R$ is a post-fixpoint of $f_C \circ u$ or $f_R \circ u$, respectively.
  By \Cref{cor-gluttony}, $f_C \circ u = f_R \circ u$,
  therefore post-fixpoints of one are also post-fixpoints of the other.
\end{proof}

Observe that this is a stronger result than for normal conditional bisimilarity:
for that, we know that the bisimilarities are the same ($\nu f_C = \nu f_R$) but the bisimulation functions are not ($f_C \ne f_R$).
On the other hand, using up-to techniques, the difference between the bisimulation functions themselves disappears
(i.e.\ $f_C \circ u = f_R \circ u$ instead of just $\nu(f_C \circ u) = \nu(f_R \circ u)$).
This results from \Cref{lem-wrath} and can be explained intuitively as follows:
the function $f_R$ requires that every representative step can be
answered by a context step and the resulting pair is in $R$,
while $f_C$ requires such an answer for all context steps. This means
that $f_C(R)\subseteq f_R(R)$ as explained earlier. The pairs
potentially missing in $f_C(R)$  resulted from larger-than-necessary contexts and hence $f_R$ did not require them.
Using~$u$, however, the relation is contextualized beforehand, using all (even non-representative) contexts, and hence makes these triples ``available'' to $f_C$.

Note that even though the difference between the two variants disappears,
using representative steps with CBUCs still is advantageous because it typically results in a finitely branching transition system.
This can be seen in the continuation of our example:

\begin{exa}
Consider again \Cref{ex-cb-unreliable,ex-cb-repr-unreliable}.
We have previously seen that it is possible to repeatedly borrow a
message on the left-hand node and transfer it to the right-hand node,
which leads to more and more received messages accumulating at the right-hand node.
We now show that the two types of channels are conditionally bisimilar by showing that
$R = \big\{ ( \emptyset \rightarrow \inlinechan{reliable} \leftarrow \inlinechan{},\allowbreak\ \emptyset \rightarrow \inlinechan{unreliable} \leftarrow \inlinechan{},\allowbreak\ \AIneN ) \big\}$
is a CBUC, i.e.\ it satisfies \Cref{def-cbut-r}.
We consider the same steps as in \Cref{ex-cb-repr-unreliable}:

\begin{itemize}
\item The graph $\inlinechan{reliable}$ can do a step using rule $P_R$ by borrowing a message on the left \linebreak node, with environment condition $\mathcal A = \condtrue$, and reduces to $a' = \emptyset \rightarrow \inlinechan{reliable,msgright} \leftarrow \inlinechan{}$.\linebreak
Then, $\inlinechan{unreliable}$ can answer this step using $P_U$ under $\mathcal B_i = \AIneN$ (no noise) and reacts to $b_i^\prime = \emptyset \rightarrow \inlinechan{unreliable,msgright} \leftarrow \inlinechan{}$.

Now set $j_i = \inlinechan{} \rightarrow \inlinechan{msgright} \leftarrow \inlinechan{}$, i.e.\ we consider the $m$-loop on the right node as irrelevant context.
Then, using $a_i^\dprime = \emptyset \rightarrow \inlinechan{reliable} \leftarrow \inlinechan{},\ b_i^\dprime = \emptyset \rightarrow \inlinechan{unreliable} \leftarrow \inlinechan{},\ \mathcal C_i^\dprime = \AIneN$ we have $a' = a_i^\dprime ; j_i,\ b_i^\prime = b_i^\dprime ; j_i$,
and we find that the triple without the irrelevant context $j_i$, that is $(a_i^\dprime, b_i^\dprime, \mathcal C_i^\dprime)$ (which happens to be the same as our initial triple), is contained in $R$.
As before, the implication $\mathcal A \land \mathcal C_{\downarrow f} \models \bigvee_{i \in I} \left( \mathcal{C}_i^\dprime \land \mathcal{B}_i \right)$ holds.

\item Symmetrically, $\inlinechan{unreliable}$ borrows a message on the left node and reacts to $\inlinechan{unreliable,msgright}$ under $\mathcal A = \AIneN$.
Analogously to the previous case and to \Cref{ex-cb-repr-unreliable}, $\inlinechan{reliable}$ answers this step, using $\mathcal C_i^\dprime = \AIneN$ and $j_i = \inlinechan{} \rightarrow \inlinechan{msgright} \leftarrow \inlinechan{}$.

\item Again, the remaining representative steps can be proven in an analogous way.
\end{itemize}

\noindent
\Cref{fig-infinite-cb} shows a comparison of the necessary steps with and without using up-to techniques.
Note that instead of working with an infinite bisimulation, we now
have a singleton.
\end{exa}

Finally, we show compatibility of $f_R$ and summarize the theorems of this section.

\begin{cor}[$u$ is $f_R$-compatible]\label{cor-compat-fr}
  Let $R$ be a conditional relation. Then it holds that
  $u(f_R(R)) \subseteq f_R(u(R))$, i.e., $u$ is $f_R$-compatible.
\end{cor}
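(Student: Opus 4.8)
The plan is to avoid re-running any of the diagrammatic reasoning and instead obtain the claim as a short chain of inclusions from results already available in this section. The target is $u(f_R(R)) \subseteq f_R(u(R))$ for an arbitrary conditional relation $R$, which is precisely $f_R$-compatibility of $u$ in the sense of \Cref{def:compatibility}.

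First I would start from \Cref{lem-wrath}, which gives $f_R(R) \subseteq f_C(u(R))$. Applying the monotone map $u$ to both sides yields $u(f_R(R)) \subseteq u(f_C(u(R)))$. Next I would invoke the \emph{stronger} form established in \Cref{thm-compat-fc}, namely $u(f_C(S)) \subseteq f_C(S)$ for every conditional relation $S$; instantiating it at $S = u(R)$ gives $u(f_C(u(R))) \subseteq f_C(u(R))$. Finally, \Cref{cor-gluttony} lets me rewrite $f_C(u(R)) = f_R(u(R))$. Chaining these three steps produces $u(f_R(R)) \subseteq f_C(u(R)) = f_R(u(R))$, as desired.

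The one point that requires care is that the argument genuinely needs the strengthened conclusion $u \circ f_C \subseteq f_C$ from \Cref{thm-compat-fc}, not merely the compatibility inequality $u \circ f_C \subseteq f_C \circ u$: it is exactly the fact that contextualizing the successors of an $f_C$-step leaves us inside $f_C(S)$ (the successors being already closed under context) that collapses the middle term $u(f_C(u(R)))$ back down to $f_C(u(R))$. I do not expect any real obstacle here, since every ingredient is already proved; an alternative would be to replay the diagram chase of \Cref{thm-compat-fc} with $f_R$ in place of $f_C$, but that would merely duplicate work that \Cref{lem-wrath} and \Cref{cor-gluttony} have already packaged.
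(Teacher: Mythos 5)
Your proof is correct and essentially identical to the paper's: it chains \Cref{lem-wrath}, monotonicity of $u$, the strengthened inclusion $u \circ f_C \subseteq f_C$ from \Cref{thm-compat-fc}, and \Cref{cor-gluttony} in exactly the same order. One small caveat: your closing remark overstates the necessity of the strengthened form, since the plain compatibility inequality $u \circ f_C \subseteq f_C \circ u$ together with idempotence of $u$ would equally collapse the middle term, as $u(f_C(u(R))) \subseteq f_C(u(u(R))) = f_C(u(R))$.
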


\begin{proof}
  By transitivity from previous results:
  \begin{align*}
    \text{(\Cref{lem-wrath})} && f_R &\subseteq \phantom{u \circ {}} f_C \circ u \\
    \implies && u \circ f_R &\subseteq u \circ f_C \circ u \\
    \text{($u \circ f_C \subseteq f_C$ by \Cref{thm-compat-fc})} && &\subseteq \mkern14mu f_C \mkern14mu \circ u \\
    \text{(by \Cref{cor-gluttony})} && &= f_R \circ u
    \qedhere
  \end{align*}
\end{proof}

\Cref{fig-upto-subseteq} summarizes the known inclusions and equalities that were proven throughout this section.

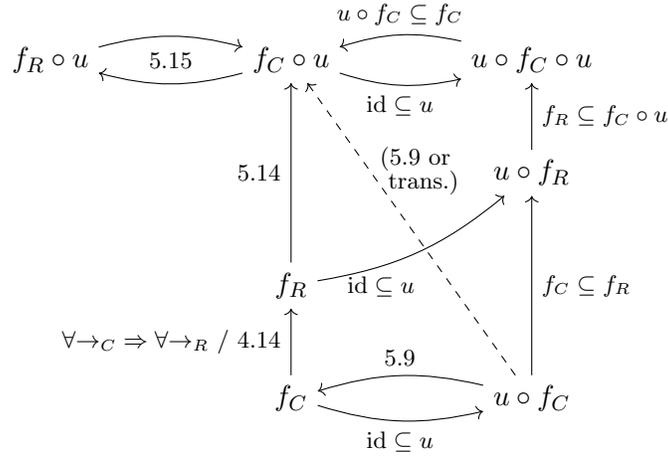
\begin{figure}[ht]
  \begin{tikzpicture}[x=3.2cm,y=1.5cm]
    \node (uc)  at (1,0) {$u \circ f_C$};
    \node (c)   at (0,0) {$f_C$};
    \node (r)   at (0,1) {$f_R$};
    \node (ur)  at (1,2) {$u \circ f_R$};
    \node (cu)  at (0,3) {$f_C \circ u$};
    \node (ru)  at (-1,3) {$f_R \circ u$};
    \node (ucu) at (1,3) {$u \circ f_C \circ u$};
    \begin{scope}[->, font=\footnotesize]
    \draw (uc) to[bend right=15] node[above] {\ref{thm-compat-fc}} (c);
    \draw (c)  to node[left] {$\forall {\to_C} \Rightarrow \forall {\to_R}$ / \ref{csimr-fr}} (r);
    \draw (r)  to node[left] {\ref{lem-wrath}} (cu);
    \draw[dashed] (uc) to node[right,align=left,pos=0.7] {(\ref{thm-compat-fc} or\\[-2pt]\ trans.)} (cu);
    \draw (uc) to node[right] {$f_C \subseteq f_R$} (ur);
    \draw (cu) to[bend left=15] (ru); 
    \draw (ru) to[bend left=15] (cu);
    \node at ($(ru)!0.5!(cu)$) {\ref{cor-gluttony}};
    \draw (ur) to node[right] {$f_R \subseteq f_C \circ u$} (ucu);
    \draw (ucu)to[bend right=15] node[above] {$u \circ f_C \subseteq f_C$} (cu);
    \draw (cu) to[bend right=15] node[below] {$\id \subseteq u$} (ucu);
    \draw (r) to[bend right=15] node[below,pos=0.3] {$\id \subseteq u$} (ur);
    \draw (c) to[bend right=15] node[below] {$\id \subseteq u$} (uc);
    \end{scope}
  \end{tikzpicture}
  \caption{Known relations ($X \to Y$ indicates $X \subseteq Y$)}%
  \label{fig-upto-subseteq}
\end{figure}

Note that using the results of this section,
it is possible to provide alternative proofs of various theorems of \Cref{sec:condbisim},
in particular:

\begin{itemize}
  \item \Cref{cond-strengthening} ($\csimC$ is closed under condition strengthening):
  We can define the function
  $str(R) \defeq \{ (a,b,\mathcal C) \mid (a,b,\mathcal C') \in R,\ \mathcal C \models \mathcal C' \}$
  and show that $str$ is $f_C$-compatible by showing
  $(a,b,\mathcal C) \in str(f_C(R)) \implies (a,b,C) \in f_C(str(R))$.
  This can be done by expanding the definitions of $str, f_C$ in $(a,b,\mathcal C) \in str(f_C(R))$
  and rewriting it to match the definition of $(a,b,\mathcal C) \in f_C(str(R))$.
  Then, we can apply \Cref{compat-preserves-fp}.


  \item \Cref{cb-congruence} part 4 ($\csimC$ is closed under contextualization):
  Restated using $u,f_C$ (note that $\csimC = \nu f_C$),
  we have to show that $u(\nu f_C) \subseteq \nu f_C$.
  This follows immediately from $f_C$-compatibility and \Cref{compat-preserves-fp}.

  \item \Cref{lemma-cbr-cuc} ($\csimR$ is closed under contextualization):
  As for $\csimC$, but use $f_R$ instead of $f_C$.

  \item \Cref{simr-simc} ($\csimC = \csimR$):
  We show that $\nu f_C = \nu f_R$.
  For this we use the fact that $f_R \circ u = f_C \circ u$ (\Cref{cor-gluttony})
  and therefore also $\nu(f_R \circ u) = \nu(f_C \circ u)$.
  $u$ is $f_C$-compatible (\Cref{thm-compat-fc}),
  so by \Cref{le:up-to-closure}, \mbox{$\nu(f_C \circ u) = \nu f_C$}.
  Similarly, $u$ is $f_R$-compatible (\Cref{cor-compat-fr})
  and therefore $\nu(f_R \circ u) = \nu f_R$.
  Combining these results, we have
  $\csimC = \nu f_C = \nu(f_C \circ u) = \nu(f_R \circ u) = \nu f_R = \csimR$.
\end{itemize}

\section{Comparison and An Alternative Characterization}%
\label{sec:alternative-char}

\subsection{An Equivalent Characterization Based on Environment Steps}

  We will now give an alternative characterization of
  conditional bisimilarity, in order to justify
  Definitions~\ref{def-cb-c} and~\ref{def-cb-r}. This alternative definition is
  more elegant since it characterizes $\csimC$ as the largest
  conditional congruence that is a conditional environment bisimulation.
  On the other hand, this definition is (like conditional bisimilarity, as described in \Cref{ex-cb-repr-unreliable}) not directly suitable for mechanization,
  since the underlying transition system is not finitely branching.

In~\cite{DBC-CRS}, environment steps, which capture the idea that a
reaction is possible under some \emph{passive} context $d$, have been
defined to obtain a more natural characterization of saturated
bisimilarity.  Unlike the borrowed context $f$, the passive context
$d$ does not participate in the reaction itself, but we refer to it to
ensure that the application condition of the rule holds.

\begin{defi}[Environment step~\cite{DBC-CRS}]
Let $\mathcal S$ be a conditional reactive system and let $a \colon
\obnull \to K,\ a' \colon \obnull \to K,\ d\colon K\to J$ be arrows.
We write $a \envstep{d} a'$ whenever there exists a rule $(\ell,r,\mathcal R) \in \mathcal S$ and an arrow $c$ such that $a = \ell;c,\ a' = r;c$ and $c;d \models \mathcal R$.
\end{defi}

Environment steps and context steps are related: they can be
transformed into each other. Furthermore saturated bisimilarity is the
coarsest bisimulation relation over environment steps that is also a
congruence~\cite{DBC-CRS}. We now give a characterization of
conditional bisimilarity based on environment steps:

\begin{defi}[Conditional environment congruence]%
  \label{def-cond-env-cong}
A conditional relation $R$ is a \emph{conditional environment bisimulation} if
whenever $(a,b,\mathcal C) \in R$ and $a \envstep{d} a'$ for some $d \models \mathcal C$,
then $b \envstep{d} b'$ and $(a', b', \mathcal C') \in R$ for some condition $\mathcal C'$ such that $d \models \mathcal C'$;
vice versa for $b \envstep{d} b'$.
We denote by $\csimenv$ the largest conditional environment
bisimulation that is also a conditional congruence and call it
\emph{conditional environment congruence}.
\end{defi}

For the proof of \Cref{cec-cb}, we need the following lemma:

\begin{lemC}[{\cite[Lemma 22]{DBC-CRS}}]\label{lemma-22}%
Given a context step $a \cstep{f}{\mathcal A} a'$ and a passive context $d$ such that $d \models \mathcal A$, we have an environment step $a;f \envstep{d} a'$.
Conversely, given an environment step $a;f \envstep{d} a'$, there exists a condition $\mathcal A$ such that $d \models \mathcal A$ and we have a context step $a \cstep{f}{\mathcal A} a'$.
\end{lemC}

\begin{thm}\label{cec-cb}%
Conditional bisimilarity and conditional environment congruence \coincide, that is, $\csimC = \csimenv$.
\end{thm}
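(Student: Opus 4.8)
The plan is to prove the two inclusions $\csimC \subseteq \csimenv$ and $\csimenv \subseteq \csimC$ separately, in both cases exploiting \Cref{lemma-22} to translate between context steps and environment steps, so that each bisimilarity can be verified against the definition of the other.

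For $\csimC \subseteq \csimenv$, since $\csimenv$ is by definition the \emph{largest} conditional environment bisimulation that is also a conditional congruence, it suffices to show that $\csimC$ enjoys both properties. We already know from \Cref{cb-congruence} that $\csimC$ is a conditional congruence, so the only task is to verify that $\csimC$ is a conditional environment bisimulation. Given $(a,b,\mathcal C) \in \csimC$ and an environment step $a \envstep{d} a'$ with $d \models \mathcal C$, I would write $a = a;\id$ and apply the converse direction of \Cref{lemma-22} (with $f = \id$) to obtain a condition $\mathcal A$ with $d \models \mathcal A$ and a context step $a \cstep{\id}{\mathcal A} a'$. The conditional bisimulation property of $\csimC$ (\Cref{def-cb-c}) then yields answering steps $b \cstep{\id}{\mathcal B_i} b_i'$ with $(a', b_i', \mathcal C_i') \in \csimC$ and $\mathcal A \land \mathcal C_{\downarrow \id} \models \bigvee_i (\mathcal C_i' \land \mathcal B_i)$. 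Using $\mathcal C_{\downarrow \id} \equiv \mathcal C$ (\Cref{shift-laws}) together with $d \models \mathcal A$ and $d \models \mathcal C$, the passive context $d$ satisfies the left-hand side, hence $d \models \mathcal C_i' \land \mathcal B_i$ for some $i$. Feeding $d \models \mathcal B_i$ into the forward direction of \Cref{lemma-22} produces the environment step $b \envstep{d} b_i'$, and $(a', b_i', \mathcal C_i') \in \csimC$ with $d \models \mathcal C_i'$ supplies exactly the witness required by \Cref{def-cond-env-cong}; maximality of $\csimenv$ then gives the inclusion.

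For $\csimenv \subseteq \csimC$, since $\csimC$ is the largest conditional bisimulation it suffices to show that $\csimenv$ is itself a conditional bisimulation. Given $(a,b,\mathcal C) \in \csimenv$ and a context step $a \cstep{f}{\mathcal A} a'$, I would first use that $\csimenv$ is closed under contextualization (it is a conditional congruence) to obtain $(a;f, b;f, \mathcal C_{\downarrow f}) \in \csimenv$. Then, for each passive context $d \models \mathcal A \land \mathcal C_{\downarrow f}$, the forward direction of \Cref{lemma-22} turns the context step into an environment step $a;f \envstep{d} a'$; since $d \models \mathcal C_{\downarrow f}$, the environment bisimulation property applied to $(a;f, b;f, \mathcal C_{\downarrow f})$ gives $b;f \envstep{d} b'$ with $(a', b', \mathcal C') \in \csimenv$ and $d \models \mathcal C'$; finally the converse of \Cref{lemma-22} converts $b;f \envstep{d} b'$ back into a context step $b \cstep{f}{\mathcal B} b'$ with $d \models \mathcal B$. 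Indexing these answering steps by the passive contexts $d$ (writing $\mathcal B_d, b_d', \mathcal C_d'$), the required implication $\mathcal A \land \mathcal C_{\downarrow f} \models \bigvee_d (\mathcal C_d' \land \mathcal B_d)$ holds because every witnessing $d$ satisfies $\mathcal C_d' \land \mathcal B_d$ by construction of its own answering step, while $(a', b_d', \mathcal C_d') \in \csimenv$ yields the relatedness of successors.

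The main obstacle is precisely this last assembly step: the translation via \Cref{lemma-22} produces one answering context step per passive context $d$, so the index set $I$ must range over \emph{all} contexts satisfying $\mathcal A \land \mathcal C_{\downarrow f}$ and is in general infinite. This is exactly why, as anticipated in the remark following \Cref{def-cb-c}, the argument only goes through for the variant of \Cref{def-cb-c} that permits infinite index sets; there is no evident way to merge these per-context answering steps into finitely many. The remaining obligations — the two symmetric \emph{vice versa} directions and the routine bookkeeping with the shift laws — are straightforward.
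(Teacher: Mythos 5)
Your proposal is correct and follows essentially the same route as the paper's own proof: both inclusions are established by showing $\csimC$ is a conditional environment bisimulation (using \Cref{cb-congruence} and \Cref{lemma-22} with $f=\id$) and that $\csimenv$ is a conditional bisimulation (contextualizing via the congruence property and translating steps back and forth with \Cref{lemma-22}, indexing answering steps by the passive contexts $d$). Your closing observation about the unavoidably infinite index set is precisely the caveat the paper records after its proof, so nothing is missing.
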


\begin{proof}
In both parts we show only how steps of $a$ can be answered by $b$, the other direction can be shown analogously.
\begin{proofparts}
\proofPart{$\csimC \subseteq \csimenv$}
  We show that $\csimC$ is a conditional environment bisimulation.
  Together with the fact that $\csimC$ is a conditional congruence (\Cref{cb-congruence}),
  we obtain the result that $\csimC$ is contained in conditional environment congruence.

  Let $(a,b,\mathcal C) \in \csimC$ and $a \envstep{d} a'$ for some $d \models \mathcal C$.
  We rewrite $a \envstep{d} a'$ as $a;\id \envstep{d} a'$ and, using \Cref{lemma-22}, obtain a context step $a \cstep{\id}{\mathcal A} a'$ for some condition $\mathcal A$ such that $d \models \mathcal A$.

  Since $(a,b,\mathcal C) \in \csimC$, there exist answering steps $b \cstep{\id}{\mathcal B_i} b_i^\prime$ such that $(a',b_i^\prime,\mathcal C_i^\prime) \in \csimC$ for some $\mathcal C_i^\prime$ and $\mathcal A \land \mathcal C_{\downarrow\id} \models \bigvee_{i \in I}\left( \mathcal B_i \land \mathcal C_i^\prime \right)$.
  Since $d$ satisfies both $\mathcal A$ and $\mathcal C \equiv \mathcal C_{\downarrow\id}$,
  there exists an index $i$ such that $d \models \mathcal B_i \land \mathcal C_i^\prime$ and $(a', b_i^\prime, \mathcal C_i^\prime) \in \csimC$.
  This directly gives us the answering step required by conditional environment bisimilarity:
  Since $d \models \mathcal B_i$, using \Cref{lemma-22} we rewrite the corresponding context step $b \cstep{\id}{\mathcal B_i} b_i^\prime$ to an environment step $b;\id = b \envstep{d} b_i^\prime$.
  Setting $b' \defeq b_i^\prime,\ \mathcal C' \defeq \mathcal C_i^\prime$, we obtain the required triple $(a', b', \mathcal C') \in \csimC$.

\proofPart{$\csimenv \subseteq \csimC$}
  We show that $\csimenv$ is a conditional bisimulation.
  Let $(a,b,\mathcal C) \in \csimenv$ and $a \cstep{f}{\mathcal A} a'$.

  Let $d$ be some context. If $d \notmodels \mathcal A \land \mathcal C_{\downarrow f}$, we can easily satisfy \Cref{def-cb-c} by letting $b$ answer with an empty set of answering steps.
  We therefore assume that $d \models \mathcal A \land \mathcal C_{\downarrow f}$.

  Since $d \models \mathcal A$, using \Cref{lemma-22} we rewrite $a \cstep{f}{\mathcal A} a'$ to $a;f \envstep{d} a'$.
  As $\csimenv$ is, by definition, a conditional congruence, $(a,b,\mathcal C) \in \csimenv$ implies $(a;f, b;f, \mathcal C_{\downarrow f}) \in \csimenv$.

  Since $(a;f, b;f, \mathcal C_{\downarrow f}) \in \csimenv$, $a;f \envstep{d} a'$ and $d \models \mathcal C_{\downarrow f}$,
  there exists an answering step $b;f \envstep{d} b_d^\prime$ and for some condition $\mathcal C_d^\prime$ such that $d \models \mathcal C_d^\prime$ we have $(a', b_d^\prime, \mathcal C_d^\prime) \in \csimenv$.

  By \Cref{lemma-22}, $b;f \envstep{d} b_d^\prime$ implies $b \cstep{f}{\mathcal B_d} b_d^\prime$ for some $\mathcal B_d$ such that $d \models \mathcal B_d$.

  Thus, whenever $d \models \mathcal A \land \mathcal C_{\downarrow f}$,
  there exists an answering step $b \cstep{f}{\mathcal B_d} b_d^\prime$
  such that $d \models \mathcal B_d \land \mathcal C_d^\prime$
  and $(a', b_d^\prime, \mathcal C_d^\prime) \in \csimenv$.
  which concludes the proof of $\csimenv$ being a conditional
  bisimulation.
  \qedhere
\end{proofparts}
\end{proof}

\noindent
Note that in the second part of the proof, we use the fact that $b$ can reply
with an infinite set of answering steps, since the infinitely many answering
steps $b;f \envstep{d} b_d'$ might give rise to infinitely many different
$\mathcal B_d$ and accompanying $\mathcal C_d'$.

It is an open question if the proof is also possible with finitely many
answering steps. In~\cite[Theorem 23]{DBC-CRS}, a similar comparison of
saturated bisimilarity and environment congruence was done, although for
binary relations which did not include conditions in the relation itself.
In that proof, the finiteness assumption (\textsc{Fin}) was used to obtain
a finite set of answering steps, which is however not possible in the
presence of conditions.

\subsection{Comparison to Other Equivalences}

We conclude this section by considering
the binary relation
\mbox{$\csimtrue
\defeq \{ (a,b) \mid (a,b,\condtrue) \in \csimC \}$},
derived from
conditional bisimilarity, which is ternary. Intuitively it contains
pairs $(a,b)$, where $a,b$ are system states that behave equivalently
in every possible context.
We investigate how $\csimtrue$ compares to other behavioural
equivalences that also check for identical behaviour in all contexts.
First, we consider saturated bisimilarity ($\simC$), which has been
characterized in~\cite{DBC-CRS} as the coarsest relation which is a
congruence as well as a bisimilarity:

\begin{thm}\label{satz-truesb}%
Saturated bisimilarity implies $\condtrue$-conditional bisimilarity ($\simC \subseteq \csimtrue$).
However, $\condtrue$-conditional bisimilarity does \emph{not} imply saturated bisimilarity ($\csimtrue \nsubseteq \simC$).
\end{thm}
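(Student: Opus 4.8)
The plan is to prove the two inclusions separately. For $\simC \subseteq \csimtrue$ I would take any saturated bisimulation $R_{\mathrm{sat}}$ and lift it to the conditional relation $R \defeq \{ (a,b,\condtrue_J) \mid (a,b) \in R_{\mathrm{sat}},\ a,b\colon \obnull \to J \}$, in which every triple carries the trivial condition $\condtrue$. First I would verify that $R$ is a conditional bisimulation in the sense of \Cref{def-cb-c}: given $(a,b,\condtrue) \in R$ and a context step $a \cstep{f}{\mathcal A} a'$, the saturated bisimulation property of \Cref{def-bisim-c} supplies finitely many answering steps $b \cstep{f}{\mathcal B_i} b_i'$ with $(a',b_i') \in R_{\mathrm{sat}}$ and $\mathcal A \models \bigvee_i \mathcal B_i$. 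Choosing $\mathcal C_i' \defeq \condtrue$ gives $(a',b_i',\condtrue) \in R$, and the shift law $\condtrue_{\downarrow f} \equiv \condtrue$ from \Cref{shift-laws} collapses the required implication $\mathcal A \land \condtrue_{\downarrow f} \models \bigvee_i (\mathcal C_i' \land \mathcal B_i)$ to exactly $\mathcal A \models \bigvee_i \mathcal B_i$. The converse direction is symmetric, so $R$ is a conditional bisimulation and every saturated-bisimilar pair lands in $\csimtrue$.

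For the strict part $\csimtrue \nsubseteq \simC$ I would exhibit a counterexample built on the reliable/unreliable channels of \Cref{ex-cb-unreliable}, guarding the difference between the two channels behind a first step that can fire only in noise-free environments. Concretely, I would extend the reactive system with two fresh ``start'' edges and two gateway rules $G_R, G_U$, each carrying the no-noise application condition $\AUneN$: $G_R$ rewrites one start-edge into a reliable channel, $G_U$ rewrites the other into an unreliable channel. Let $a$ and $b$ be the two start states. The crucial features are that $a$ and $b$ each have exactly one applicable rule (its own gateway), so that \emph{every} context step of $a$ or $b$ requires the surrounding environment to be noise-free.

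I would then show $(a,b) \in \csimtrue$ by verifying that $\{(a,b,\condtrue),(b,a,\condtrue)\} \cup {\csimC}$ is a conditional bisimulation. The only steps of $a$ are applications of $G_R$, producing the reliable channel $r$ and carrying a condition that implies $\AIneN$; these are answered by the matching $G_U$-steps of $b$ producing the unreliable channel $u$, with successor condition $\mathcal C_i' = \AIneN$. Since $\mathcal A \land \condtrue_{\downarrow f} \models \AIneN$ and $(r,u,\AIneN) \in \csimC$ by \Cref{ex-cb-repr-unreliable}, the defining implication of \Cref{def-cb-c} holds; in noisy contexts both states are inert, so no further obligation arises. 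Conversely, $(a,b) \notin \simC$: the step $a \cstep{\id}{\AIneN} r$ is genuine (noise-free contexts exist, so its condition is not $\condfalse$), and since the disjunction of answering conditions must cover $\AIneN$, the empty answer is ruled out and $b$ must reply through its only rule $G_U$, forcing $(r,u)$ into any candidate saturated bisimulation. As $r$ and $u$ are not saturated bisimilar (\Cref{ex-cb-unreliable}), no saturated bisimulation can contain $(a,b)$.

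I expect the second inclusion to be the delicate part. The main obstacle is calibrating the gateway so that it simultaneously (i) suppresses all behaviour in noisy environments — otherwise $a$ and $b$ would be distinguishable there and fail to be $\condtrue$-conditionally bisimilar — and (ii) still admits a non-vacuous step in noise-free environments, so that the forced successor pair $(r,u)$ genuinely obstructs saturated bisimilarity. The conceptual point the counterexample isolates is that conditional bisimilarity, even at $\condtrue$, may relate the successors of a step under a strictly weaker condition ($\AIneN$ here), a degree of freedom unavailable to the condition-free, finitely branching saturated bisimulation.
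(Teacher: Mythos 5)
Your first inclusion ($\simC \subseteq \csimtrue$) is correct and coincides with the paper's own argument: lift a saturated bisimulation $R_{\mathrm{sat}}$ to $R_{\mathrm{sat}} \times \{\condtrue\}$, use $\condtrue_{\downarrow f} \equiv \condtrue$ (\Cref{shift-laws}), and the implication required by \Cref{def-cb-c} collapses to the one of \Cref{def-bisim-c}.

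The second inclusion contains a genuine gap, located exactly where you flag the argument as delicate. Your premise that $a$ and $b$ ``each have exactly one applicable rule (its own gateway), so that every context step of $a$ or $b$ requires the surrounding environment to be noise-free'' is false in the reactive-systems setting: by \Cref{def-condcstep}, a context step $a \cstep{f}{\mathcal A} a'$ may use \emph{any} rule whose left-hand side is supplied entirely by the borrowed context $f$, and such steps are even representative, since a redex meeting the current state only in interface nodes still yields a jointly epi square (the paper points out steps of this kind in \Cref{ex-cb-repr-unreliable}). Hence $a$ also has steps via $P_R$, $P_U$ and even $G_U$ whose environment condition is equivalent to $\condtrue$ and which fire in noisy contexts as well; your calibration goal (i) is unachievable, and ``in noisy contexts both states are inert'' is wrong. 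These extra steps do not break the counterexample itself --- $b$ can mirror them with the same rule applied at the same, fully borrowed, position --- but they do break your verification: their successors have the form $(a;e,\,b;e)$, i.e.\ they are contextualizations of the very pair under scrutiny, and such triples lie in your witness $\{(a,b,\condtrue),(b,a,\condtrue)\}\cup{\csimC}$ only if $(a,b,\condtrue)\in\csimC$ already holds, which is precisely what is being proved. As a proof that the witness is a conditional bisimulation, this is circular. The natural repair is the paper's own up-to machinery from \Cref{sec:uptocond}: your witness is a conditional bisimulation \emph{up-to context} in the sense of \Cref{def-cbut-51} (successors $(a;e,b;e)$ and $(r;j_i,u;j_i)$ land back in the relation after peeling off $e$ resp.\ $j_i$), and \Cref{thm-cbut-intu} then produces an honest conditional bisimulation; alternatively, close the two explicit triples under contextualization before verifying. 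Without such a step, the positive half $(a,b)\in\csimtrue$ is not established. Your negative half, by contrast, is sound: with $f=\id$ the only answers of $b$ go through $G_U$, the empty answer set is excluded because the no-noise condition is satisfiable, so any saturated bisimulation containing $(a,b)$ would have to contain $(r,u)$, contradicting \Cref{ex-cb-unreliable}.

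For comparison, the paper proves this half with a self-contained five-rule system over loops ($R_A$, $R_{B1}$, $T_{B1}$, $R_{B2}$, $T_{B2}$), in which each answering successor on the $b$-side carries a condition contradicting the condition of its only outgoing rule, so all genuine behaviour stops after one step and a small explicit witness suffices. Both constructions isolate the same phenomenon --- conditional bisimilarity remembers the condition under which a successor was reached, while saturated bisimilarity must tolerate arbitrary re-contextualization --- but the paper's example avoids the unbounded message-passing behaviour of the channels and therefore needs no up-to or closure argument.
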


\begin{proof}
\begin{proofparts}
\proofPart{$\simC \subseteq \csimtrue$}
  Let $R$ be a saturated bisimulation relation. Then we define
  $R'=R\times\{\condtrue\}$ and show that $R'$ is a conditional
  bisimulation relation. For that purpose, let some $(a,b,\condtrue)\in
  R'$ be given, i.e.\ $(a,b)\in R$.  Now assume a transition $a
  \cstep{f}{\mathcal A} a^\prime$, then by the fact that $(a,b)\in R$ we
  know that there exist some answering steps $b \cstep{f}{\mathcal{B}_i}
  b_i^\prime$, $i\in I$, such that $(a',b_i')\in R$ for all $i\in I$. By
  definition of $R'$ it follows that, for all $i\in I$,
  $(a',b_i',\condtrue)\in R'$. So it remains to show that $\mathcal
  A\wedge\condtrue_{\downarrow f}\models\bigvee_{i\in I}\left( \condtrue
    \mathop\land \mathcal{B}_i \right)$. We can simplify this to $\mathcal
  A\models\bigvee_{i\in I} \mathcal{B}_i $, which holds because $R$ is a saturated bisimulation.

  Steps $b \cstep{f}{\mathcal B} b'$ can be answered analogously.

\proofPart{$\csimtrue \nsubseteq \simC$}
  Consider the following reactive system $\{R_A, R_{B1}, T_{B1}, R_{B2}, T_{B2}\}$:
  \[\bgroup%
  \renewcommand{\arraystretch}{1.5}%
  \setlength{\arraycolsep}{0.3pt}%
  \begin{array}{rccclrcccl} 
    R_A    = \big( \emptyset \rightarrow & \inlinenlg{a}  & \leftarrow \emptyset,\  \emptyset \rightarrow & \inlinenlg{ea} & \leftarrow \emptyset, \condtrue_\emptyset \big) \\
    R_{B1} = \big( \emptyset \rightarrow & \inlinenlg{b}  & \leftarrow \emptyset,\  \emptyset \rightarrow & \inlinenlg{b1} & \leftarrow \emptyset, \mathcal A_C \big) &
    T_{B1} = \big( \emptyset \rightarrow & \inlinenlg{b1} & \leftarrow \emptyset,\  \emptyset \rightarrow & \inlinenlg{e1} & \leftarrow \emptyset, \neg\mathcal A_C \big) \\
    R_{B2} = \big( \emptyset \rightarrow & \inlinenlg{b}  & \leftarrow \emptyset,\  \emptyset \rightarrow & \inlinenlg{b2} & \leftarrow \emptyset, \neg\mathcal A_C \big) &
    T_{B2} = \big( \emptyset \rightarrow & \inlinenlg{b2} & \leftarrow \emptyset,\  \emptyset \rightarrow & \inlinenlg{e2} & \leftarrow \emptyset, \mathcal A_C \big)
  \end{array}\egroup\]
  where $
    \mathcal A_C = \big(
      \emptyset, \forall, \big\{ (
        \emptyset \rightarrow \inlinenlg{c} \leftarrow \inlinenlg{c},\  \condtrue_C
      ) \big\}
    \big)
    $.
  An $a$-loop can be replaced with a graph which allows no further steps.
  A $b$-loop can, in case the environment contains a $c$-loop ($\mathcal A_C$), transition to a $b_1$-loop,
  from which another transition is possible if \emph{no} $c$-loop is present
  (as this contradicts the condition of the first step, this transition can never actually be executed).
  Similarly, if \emph{no} $c$-loop is present, a transition to a $b_2$-loop is possible and subsequently another transition is possible \emph{if} there is a $c$-loop.

  It is easy to see that no matter which context $\inlinenlg{a}$ and $\inlinenlg{b}$ are placed into, both admit at most one transition.
  Therefore, $(\inlinenlg{a},\inlinenlg{b}) \in \csimtrue$, as witnessed by the conditional bisimulation relation
  $R = \{
  (\inlinenlg{a},\inlinenlg{b},\condtrue),\ %
  (\inlinenlg{ea},\inlinenlg{b1},\mathcal A_C),\ %
  (\inlinenlg{ea},\inlinenlg{b2},\neg\mathcal A_C)
  \}$.
  For saturated bisimilarity however, the initial step of $\inlinenlg{a}$ to $\inlinenlg{ea}$ can be answered by $\inlinenlg{b}$ with two steps as for conditional bisimilarity, and it would be required that $(\inlinenlg{ea},\inlinenlg{b1}),(\inlinenlg{ea},\inlinenlg{b2}) \in \simC$.
  But then, $\inlinenlg{b1}$ can do a step (under $\neg\mathcal A_C$ as indicated by rule $T_{B1}$) which $\inlinenlg{ea}$ cannot answer.
  \qedhere
\end{proofparts}
\end{proof}

\noindent
For saturated bisimilarity,
if a step of $a$ is answered by $b$ with multiple steps, all
$b_i^\prime$ reached in this way must be saturated bisimilar to $a'$
(that is, show the same behaviour even if the environment is later changed to one which did not allow the given $b_i^\prime$ to be reached).
In fact, it was an explicit goal in the design of saturated bisimilarity to account for external modification of the environment.

On the other hand, for conditional bisimilarity, each $b_i^\prime$ is
only required to be conditionally bisimilar to $a'$ under the
condition which allowed this particular answering step --- that is,
after a step, the environment is fixed (or, depending on the system,
can only assume a subset of all possible environments,
cf.\ \Cref{def-cond-env-cong,cec-cb}).

Next, we compare $\csimtrue$ to $\id$-congruence,
the coarsest congruence contained in bisimilarity over the reaction relation $\leadsto$.
It simply relates two agents whenever they are bisimilar in all contexts,
i.e.\ $\simidc \defeq \{ (a,b) \mid \text{for all contexts $d$, }\ a;d,\
b;d \text{ are bisimilar wrt.} \leadsto \}$.

\begin{thm}\label{cb-vs-id}%
  It holds that $\condtrue$-conditional bisimilarity implies
  $\id$-congruence ($\csimtrue \subseteq \simidc$).  However,
  $\id$-congruence does \emph{not} imply $\condtrue$-conditional
  bisimilarity ($\simidc \nsubseteq \csimtrue$).
\end{thm}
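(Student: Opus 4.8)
My plan is to establish the two claims independently: the inclusion $\csimtrue \subseteq \simidc$ follows from \Cref{satz-brauchbar} with almost no extra work, whereas the strictness $\simidc \nsubseteq \csimtrue$ is witnessed by the very system and pair already analysed in \Cref{converse-brauchbar}. For the inclusion, I would start from $(a,b) \in \csimtrue$, which by definition means $(a,b,\condtrue) \in \csimC$. Since $\csimC$ is itself a conditional bisimulation, applying \Cref{satz-brauchbar} with $R \defeq \csimC$ yields that $R' = \{(a';d, b';d) \mid (a',b',\mathcal C) \in \csimC \land d \models \mathcal C\}$ is a bisimulation for $\leadsto$. For an arbitrary context $d \colon J \to K$ with $J = \codom(a)$ we have $d \models \condtrue$ (every arrow with domain $J = \Ro(\condtrue)$ satisfies $\condtrue_J$ by \Cref{def-cond-satisf}), so $(a;d,b;d) \in R'$; as $R'$ is a $\leadsto$-bisimulation, $a;d$ and $b;d$ are bisimilar, and since $d$ was arbitrary this is exactly $(a,b) \in \simidc$.

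For strictness I would reuse the reactive system $\{R_1, R_2\}$ and the cospans $\inlinenlg{a}, \inlinenlg{b}$ (with empty interfaces) from \Cref{converse-brauchbar}, where $R_1$ rewrites an $a$-loop to an $e$-loop while actively matching and retaining an $x$-loop, and $R_2$ rewrites a $b$-loop to an $e$-loop guarded by the application condition $\mathcal A_{\exists X}$. It is already shown there that the accompanying relation $R'$ is a $\leadsto$-bisimulation; since every context satisfies $\condtrue$, this $R'$ contains $(\inlinenlg{a};d, \inlinenlg{b};d)$ for all $d$, so $\inlinenlg{a}$ and $\inlinenlg{b}$ are bisimilar in every context, giving $(\inlinenlg{a},\inlinenlg{b}) \in \simidc$.

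The crux --- and the step I expect to demand the most care --- is to show $(\inlinenlg{a},\inlinenlg{b}) \notin \csimtrue$, i.e.\ that \emph{no} conditional bisimulation contains $(\inlinenlg{a},\inlinenlg{b},\condtrue)$, rather than merely that the particular $R$ of \Cref{converse-brauchbar} fails. The idea is to isolate one unanswerable move: the step $\inlinenlg{b} \cstep{\id}{\mathcal A_{\exists X}} \inlinenlg{e}$ obtained from $R_2$ with empty borrowed context. Answering it under \Cref{def-cb-c} would require context steps $\inlinenlg{a} \cstep{\id}{\mathcal A_j} a_j'$ with the \emph{same} borrowed context $\id$ satisfying $\mathcal A_{\exists X} \land \condtrue_{\downarrow \id} \models \bigvee_j (\mathcal C_j' \land \mathcal A_j)$. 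But the only rule applicable to an $a$-loop is $R_1$, and its left-hand side contains the $x$-loop as an element that must be matched into $\inlinenlg{a};\id = \inlinenlg{a}$; since $f = \id$ borrows nothing and $\inlinenlg{a}$ has no $x$-loop, no such match exists, so $\inlinenlg{a}$ admits no step with borrowed context $\id$. Hence the family of answering steps is empty, the right-hand disjunction collapses to $\condfalse$, and the required implication reads $\mathcal A_{\exists X} \land \condtrue_{\downarrow \id} \models \condfalse$, equivalently $\mathcal A_{\exists X} \models \condfalse$, which fails because $\mathcal A_{\exists X}$ is satisfiable (by any context whose interface node carries an $x$-loop). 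This obstruction is independent of the chosen relation, so $(\inlinenlg{a},\inlinenlg{b},\condtrue) \notin \csimC$ and therefore $(\inlinenlg{a},\inlinenlg{b}) \notin \csimtrue$, establishing $\simidc \nsubseteq \csimtrue$.
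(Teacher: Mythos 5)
Your proposal is correct and follows essentially the same route as the paper's own proof: the inclusion $\csimtrue \subseteq \simidc$ is obtained by applying \Cref{satz-brauchbar} to $\csimC$ (every context satisfies $\condtrue$), and strictness is witnessed by exactly the system and the pair $\inlinenlg{a},\inlinenlg{b}$ of \Cref{converse-brauchbar}. The only difference is one of rigor, in your favour: the paper simply cites that remark (which formally only verifies that one particular relation fails to be a conditional bisimulation), whereas you make explicit why the obstruction is relation-independent --- $\inlinenlg{a}$ admits no context step with borrowed context $\id$ at all, so any candidate conditional bisimulation containing $(\inlinenlg{a},\inlinenlg{b},\condtrue)$ would need the empty family of answers to the step $\inlinenlg{b} \cstep{\id}{\mathcal A_{\exists X}} \inlinenlg{e}$, forcing the implication $\mathcal A_{\exists X} \models \condfalse$, which fails since $\mathcal A_{\exists X}$ is satisfiable.
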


\begin{proof}
\begin{proofparts}
\proofPart{$\csimtrue \subseteq \simidc$}
  Given $(a,b) \in \csimtrue$, equivalently $(a,b,\condtrue) \in \csimC$,
  by \Cref{satz-brauchbar} we know that for all contexts $d$ such that $d \models \condtrue$ (i.e.\ all contexts),
  $(a;d, b;d)$ is contained in a bisimulation relation over $\leadsto$.
  This, however, is the exact requirement for $(a,b) \in \simidc$.

\proofPart{$\simidc \nsubseteq \csimtrue$}
  Consider the example presented in \Cref{converse-brauchbar}:
  The graphs $A$ and $B$ are bisimilar under all contexts
  (i.e., \mbox{$(a,b) \in \simidc$}),
  however, they are not conditionally bisimilar under~$\condtrue$
  (i.e., \mbox{$(a,b) \notin \csimtrue$}).
  \qedhere
\end{proofparts}
\end{proof}

  \noindent
  Intuitively, $\condtrue$-conditional bisimilarity allows to
  observe whether some item is consumed and recreated (by including it
  in both sides of a rule) or whether it is simply required (using an
  existential rule condition, cf.~\Cref{cb-vs-id}).  On the other
  hand, $\id$-congruence does not recognize this and simply checks
  whether reactions are possible in the same set of contexts.

  Hence we have $\simC \subsetneq \csimtrue \subsetneq \simidc$, which
  implies that checking for identical behaviour in all contexts using
  conditional bisimilarity gives rise to a new kind of behavioural
  equivalence, which does not allow arbitrary changes to the
  environment (as $\simC$ does), yet allows distinguishing borrowed
  and passive context (which $\simidc$ does not).

\section{Conclusion, Related and Future Work}%
\label{sec:conclusion}

The conditions that we studied in this paper are also known under the
name of nested conditions or graph conditions and were introduced in~\cite{r:representing-fol}, where their equivalence to first-order
logic was shown. They were studied more extensively in~\cite{PennemannNAC,p:development-correct-gts} and generalized~to
reactive systems in~\cite{CRS}. In fact, the related notion of Q-trees
was introduced earlier in~\cite{fs:categories-allegories}.

As stated earlier, there are some scattered approaches to
notions of behavioural equivalence that can be compared to conditional
bisimilarity. The concept of behaviour depending~on a context is also
present in Larsen's PhD thesis~\cite{l:context-dependent-bisim}. There,
the idea is to embed an LTS into an environment, which is modelled as
an action transducer, an LTS that consumes transitions of the system
under investigation --- similar to CCS synchronization. Larsen
then defines environment-parameterized bisimulation by considering
only those transitions that~are consumed in a certain environment.
In~\cite{hl:symbolic-bisimulations}, Hennessy and Lin describe
symbolic bisimulations in the setting of value-passing processes,
where Boolean expressions restrict the interpretations for which one
shows bisimilarity. Instead in~\cite{bbb:bisim-unification}, Baldan,
Bracciali and Bruni propose bisimilarity on open systems, specified by
terms with a hole or place-holder. Instead of imposing conditions on
the environment, they restrict the components that are filling the~%
holes.

In~\cite{f:bisimulations-boolean-vectors}, Fitting studies a matrix
view of unlabelled transition systems, annotated by Boolean
conditions. In~\cite{BKKS17} we have shown that such systems can
alternatively be viewed as conditional transition systems, where
activation of transitions depends on conditions of the environment and
one can state the bisimilarity of two states provided that the
environment meets certain requirements. This view is closely tied to
featured transition systems, which have been studied extensively in
the software engineering literature. The idea here is to specify
system behaviour dependent on the features that are present in the
product (see for instance~\cite{DBLP:conf/icse/CordyCPSHL12} for
simulations on featured transition systems).

Our contribution in this paper is to consider conditional bisimilarity
based on contextualization in a rule-based setting. That is, system
behaviour is specified by generic rewriting rules, system states can
be composed with a context specifying the environment and we impose
restrictions on those contexts. By viewing both system states and
contexts as arrows of a category, we can work in the framework of
reactive systems \`a la Leifer and Milner and define a general theory
of conditional bisimilarity.  While in~\cite{DBC-CRS} conditions were
only used to restrict applicability of the rules and bisimilarity was
checked for all contexts, we here additionally use conditions to
establish behavioural equivalence only in specific contexts.

As future work we want to take a closer look at the logic that we used
to specify conditions. Conditional bisimilarity is defined in a way
that is largely independent of the kind of logic, provided that the
logic supports Boolean operators and shift. It is unclear and worth
exploring whether the logic considered by us is expressive enough to
characterize all contexts that ensure bisimilarity of two given
arrows. This also affects the question whether or not infinitely many
answering steps are required in \Cref{cec-cb}.

Furthermore, it is an open question whether there is an alternative
characterization of the $\id$-congruence of \Cref{cb-vs-id} that is
amenable to mechanization.

We have already implemented label derivation and bisimulation checking
in the borrowed context approach, see for instance~\cite{n:automatisch-bisim-gts},
and successfully applied it to a system with message-passing rules similar to
the ones given in \Cref{ex-cb-repr-unreliable}, however without conditions in
either the rules or the bisimulation relation.
Our aim is to also obtain an efficient implementation for the
scenario described in this paper. Note that our conditions subsume
first-order logic~\cite{CRS} and hence in order to come to terms with
the undecidability of implication we have to resort to simpler
conditions or use approximative methods.

Another natural question is whether our results can be stated in a
coalgebraic setting, since coalgebra provides a generic framework for
behavioural equivalences. We have already studied a much simplified
coalgebraic version of conditional systems (without considering
contextualization) in~\cite{abhkms:coalgebra-min-det}, using
coalgebras living in Kleisli categories. Reactive systems can also be
viewed as coalgebras (see~\cite{b:abstract-semantics-observable-contexts}).  However, a
combination of these features has not yet been considered as far as we
know.

Another direction for future research are further optimizations in
terms of the up-to context technique. Note, that even bisimulations
up-to context can still be infinite in size, which is somehow
unavoidable due to undecidability issues, so further optimizations
should be investigated. Furthermore we plan to integrate this method
with other kinds of up-to techniques such as up-to bisimilarity, which
should be easy due to the integration into the lattice-theoretical
framework.

\bigskip

\noindent\emph{Acknowledgements:} We would like to thank the
anonymous reviewers of the conference version of the paper for many useful
hints, in particular for suggesting to integrate our contribution into
the lattice-theoretical view of up-to functions.


\bibliographystyle{alpha} 
\bibliography{special-issue}

\newcommand{\etalchar}[1]{$^{#1}$}
\begin{thebibliography}{CMR{\etalchar{+}}97}

\bibitem[ABH{\etalchar{+}}12]{abhkms:coalgebra-min-det}
Ji\v{r}{\'\i} Ad\'amek, Filippo Bonchi, Mathias H\"{u}lsbusch, Barbara
  K\"{o}nig, Stefan Milius, and Alexandra Silva.
\newblock A coalgebraic perspective on minimization and determinization.
\newblock In {\em Proc. of FOSSACS '12}, pages 58--73. Springer, 2012.
\newblock {LNCS/ARCoSS} 7213.

\bibitem[BBB02]{bbb:bisim-unification}
Paolo Baldan, Andrea Bracciali, and Roberto Bruni.
\newblock Bisimulation by unification.
\newblock In {\em Proc. of AMAST '02}, pages 254--270. Springer, 2002.
\newblock {LNCS} 2422.

\bibitem[BCHK11]{CRS}
H.J.~Sander Bruggink, Rapha{\"e}l Cauderlier, Mathias H{\"u}lsbusch, and
  Barbara K{\"o}nig.
\newblock Conditional reactive systems.
\newblock In {\em Proc. of FSTTCS~'11}, volume~13 of {\em {LIPIcs}}. Schloss
  Dagstuhl -- Leibniz Center for Informatics, 2011.

\bibitem[BKKS17]{BKKS17}
Harsh Beohar, Barbara K\"onig, Sebastian K\"upper, and Alexandra Silva.
\newblock Conditional transition systems with upgrades.
\newblock In {\em Proc. of TASE '17 (Theoretical Aspects of Software
  Engineering)}. IEEE Xplore, 2017.

\bibitem[BKM06]{bkm:saturated}
Filippo Bonchi, Barbara K\"onig, and Ugo Montanari.
\newblock Saturated semantics for reactive systems.
\newblock In {\em Proc. of LICS '06}, pages 69--80. IEEE, 2006.

\bibitem[Bon08]{b:abstract-semantics-observable-contexts}
Filippo Bonchi.
\newblock {\em Abstract Semantics by Observable Contexts}.
\newblock PhD thesis, Universit\`a degli Studi di Pisa, Dipartimento di
  Informatica, May 2008.

\bibitem[BW99]{BW:nCT}
Michael Barr and Charles Wells.
\newblock {\em Category Theory for Computing Science}.
\newblock Les Publications CMR, 1999.

\bibitem[CCP{\etalchar{+}}12]{DBLP:conf/icse/CordyCPSHL12}
Maxime Cordy, Andreas Classen, Gilles Perrouin, Pierre-Yves Schobbens, Patrick
  Heymans, and Axel Legay.
\newblock Simulation-based abstractions for software product-line model
  checking.
\newblock In {\em Proc. of ICSE '12}, pages 672--682. IEEE, 2012.

\bibitem[CMR{\etalchar{+}}97]{cmrehl:algebraic-approaches}
Andrea Corradini, Ugo Montanari, Francesca Rossi, Hartmut Ehrig, Reiko Heckel,
  and Michael L\"{o}we.
\newblock Algebraic approaches to graph transformation---part~{I}: Basic
  concepts and double pushout approach.
\newblock In G.~Rozenberg, editor, {\em Handbook of Graph Grammars and
  Computing by Graph Transformation, Vol.~1: Foundations}, chapter~3. World
  Scientific, 1997.

\bibitem[EK04]{EK04}
Hartmut Ehrig and Barbara K{\"o}nig.
\newblock Deriving bisimulation congruences in the {DPO} approach to graph
  rewriting.
\newblock In {\em Proc. of FOSSACS '04}, pages 151--166. Springer, 2004.
\newblock {LNCS} 2987.

\bibitem[EPS73]{DPO}
Hartmut Ehrig, Michael Pfender, and Hans~Jürgen Schneider.
\newblock Graph-grammars: An algebraic approach.
\newblock In {\em 14th Annual Symposium on Switching and Automata Theory (SWAT
  1973)}, pages 167--180, Oct 1973.

\bibitem[Fit02]{f:bisimulations-boolean-vectors}
Melvin Fitting.
\newblock Bisimulations and boolean vectors.
\newblock In {\em Advances in Modal Logic}, volume~4, pages 1--29. World
  Scientific Publishing, 2002.

\bibitem[FS90]{fs:categories-allegories}
Peter~J. Freyd and Andre Scedrov.
\newblock {\em Categories, Allegories}.
\newblock North-Holland, 1990.

\bibitem[HHT96]{NegativeAC}
Annegret Habel, Reiko Heckel, and Gabriele Taentzer.
\newblock Graph grammars with negative application conditions.
\newblock {\em Fundamenta Informaticae}, 26(3,4):287--313, December 1996.

\bibitem[HK12]{DBC-CRS}
Mathias H\"{u}lsbusch and Barbara K\"{o}nig.
\newblock Deriving bisimulation congruences for conditional reactive systems.
\newblock In {\em Proc. of FOSSACS '12}, pages 361--375. Springer, 2012.
\newblock {LNCS/ARCoSS} 7213.

\bibitem[HL95]{hl:symbolic-bisimulations}
Matthew Hennessy and Huimin Lin.
\newblock Symbolic bisimulations.
\newblock {\em Theoretical Computer Science}, 138(2):353--389, 1995.

\bibitem[HMP01]{RevisitedDPO}
Annegret Habel, Jürgen Müller, and Detlef Plump.
\newblock Double-pushout graph transformation revisited.
\newblock {\em Mathematical Structures in Computer Science}, 11(5):637--688,
  October 2001.

\bibitem[HP09]{PennemannNAC}
Annegret Habel and Karl-Heinz Pennemann.
\newblock Correctness of high-level transformation systems relative to nested
  conditions.
\newblock {\em Mathematical Structures in Computer Science}, 19(2):245--296,
  2009.

\bibitem[JM03]{jm:bigraphs}
Ole~H{\o}gh Jensen and Robin Milner.
\newblock Bigraphs and transitions.
\newblock In {\em Proc. of POPL 2003}, pages 38--49. ACM, 2003.

\bibitem[KSS05]{kss:labels-from-reductions}
Bartek Klin, Vladimiro Sassone, and Pawe{\l} Soboci\'{n}ski.
\newblock Labels from reductions: towards a general theory.
\newblock In {\em Proc. of CALCO '05}, pages 30--50. Springer, 2005.
\newblock LNCS 3629.

\bibitem[Lar86]{l:context-dependent-bisim}
Kim~Guldstrand Larsen.
\newblock {\em Context-Dependent Bisimulation between Processes}.
\newblock PhD thesis, University of Edinburgh, 1986.

\bibitem[LM00]{LM00}
James~J. Leifer and Robin Milner.
\newblock Deriving bisimulation congruences for reactive systems.
\newblock In {\em CONCUR 2000 --- Concurrency Theory: 11th International
  Conference University Park, PA, USA, August 22--25, 2000 Proceedings}, pages
  243--258. Springer Berlin Heidelberg, 2000.

\bibitem[LS05]{ls:adhesive-journal}
Stephen Lack and Pawe{\l} Soboci\'{n}ski.
\newblock Adhesive and quasiadhesive categories.
\newblock {\em RAIRO -- Theoretical Informatics and Applications},
  39(3):511--545, 2005.

\bibitem[Nol12]{n:automatisch-bisim-gts}
Dennis Nolte.
\newblock {Automatischer Nachweis von Bisimulations{\"a}quivalenzen bei
  Graphtransformationssystemen}.
\newblock Master's thesis, Universit\"at Duisburg-Essen, November 2012.

\bibitem[Pen09]{p:development-correct-gts}
Karl-Heinz Pennemann.
\newblock {\em Development of Correct Graph Transformation Systems}.
\newblock PhD thesis, Universit\"at Oldenburg, May 2009.

\bibitem[Pou07]{p:complete-lattices-up-to}
Damien Pous.
\newblock Complete lattices and up-to techniques.
\newblock In {\em Proc. of APLAS '07}, pages 351--366. Springer, 2007.
\newblock {LNCS} 4807.

\bibitem[PS11a]{pous_sangiorgi_2011}
Damien Pous and Davide Sangiorgi.
\newblock Enhancements of the bisimulation proof method.
\newblock In {\em Advanced Topics in Bisimulation and Coinduction}, Cambridge
  Tracts in Theoretical Computer Science, pages 233--289. Cambridge University
  Press, 2011.

\bibitem[PS11b]{ps:enhancements-coinductive}
Damien Pous and Davide Sangiorgi.
\newblock Enhancements of the coinductive proof method.
\newblock In Davide Sangiorgi and Jan Rutten, editors, {\em Advanced Topics in
  Bisimulation and Coinduction}. Cambridge University Press, 2011.

\bibitem[PS19]{ps:coinduction-enhancements-historical}
Damien Pous and Davide Sangiorgi.
\newblock Bisimulation and coinduction enhancements: A historical perspective.
\newblock {\em Formal Aspects of Computing}, 31(6):733--749, 2019.

\bibitem[Ren04]{r:representing-fol}
Arend Rensink.
\newblock Representing first-order logic using graphs.
\newblock In {\em Proc. of ICGT '04}, pages 319--335. Springer, 2004.
\newblock {LNCS} 3256.

\bibitem[San98]{sangiorgi}
Davide Sangiorgi.
\newblock On the bisimulation proof method.
\newblock {\em Mathematical Structures in Computer Science}, 8(5):447--479,
  1998.

\bibitem[Sob04]{Sobocinski}
Pawe{\l} Soboci{\'{n}}ski.
\newblock {\em Deriving process congruences from reaction rules}.
\newblock PhD thesis, University of Aarhus, 2004.

\bibitem[SS05]{ss:reactive-cospans}
Vladimiro Sassone and Pawe{\l} Soboci\'{n}ski.
\newblock Reactive systems over cospans.
\newblock In {\em Proc. of LICS '05}, pages 311--320. IEEE, 2005.

\bibitem[Tar55]{t:lattice-fixed-point}
Alfred Tarski.
\newblock A lattice-theoretical fixpoint theorem and its applications.
\newblock {\em Pacific Journal of Mathematics}, 5:285--309, 1955.

\end{thebibliography}

\appendix

\section{Supplementary Material on Adhesive Categories}%
\label{sec:adhesive}

We recall here the definition of adhesive
categories~\cite{ls:adhesive-journal}.  We do not provide any
introduction to basic categorical constructions such as products,
pullbacks and pushouts, instead referring the reader to Sections~5
and~9 of~\cite{BW:nCT}.

\begin{defi}[adhesive categories]%
  \label{de:adhesive}
  A category is called  
  \emph{adhesive} if
  \begin{itemize}
  \item it has pushouts along monos;
  \item it has pullbacks;
  \item pushouts along monos are \emph{Van Kampen} ({\sc vk}) squares.
  \end{itemize}
  \noindent Referring to \Cref{vks}, a {\sc VK} square is a pushout
  such as $(i)$, such that for each commuting cube as in~$(ii)$ having
  $(i)$ as bottom face and the back faces of which are pullbacks, the
  front faces are pullbacks if and only if the top face is a pushout.

  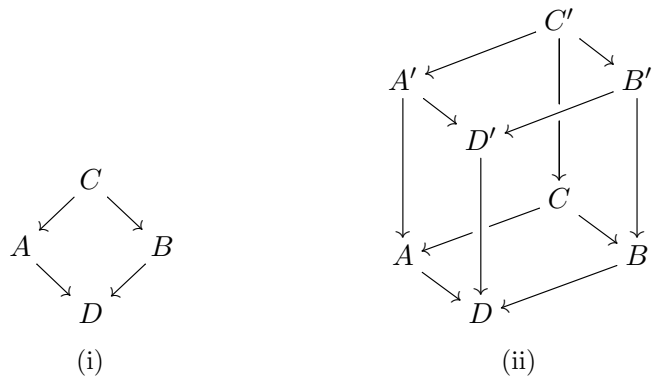
\begin{figure}[ht]
    \renewcommand\thesubfigure{\roman{subfigure}}
    \begin{subfigure}[b]{0.3\textwidth}
      \centerline{\xymatrix@C=.4cm@R=.4cm{
                      & \\
                      & C \ar[dl] \ar[rd] \\
      A \ar[rd]     &
                      & B \ar[ld] \\
                      & D
      }}
      \caption{}
    \end{subfigure}\hspace{1cm}
    \begin{subfigure}[b]{0.3\textwidth}
      \centerline{\xymatrix@C=.4cm@R=.27cm{
                      & & C' \ar[dll] \ar[rd] \ar[ddd]
                                                        |!{[ddl];[dr]}{\hole} \\
      A' \ar[rd] \ar[ddd]
                      & & & B' \ar[dll] \ar[ddd] \\
                      & D' \ar[ddd] &  \\
                      & & C  \ar[dll] |!{[ul];[dl]}{\hole} \ar[rd]
                             \ar@{-}[uuu] |!{[uur];[ul]}{\hole}  \\
      A  \ar[rd]  & & & B  \ar[dll] \\
                      & D
      }}
      \caption{}
    \end{subfigure}
    \caption{A pushout square $(i)$, and a commutative cube $(ii)$.}
    \protect\label{vks}
  \end{figure}
\end{defi}

The motivation for using adhesive categories is that they are a
suitable categorical framework for reasoning one rewriting of abstract
objects, in the spirit of graph rewriting.

\end{document}